\newtheorem{proposition}{Proposition}
\newtheorem{example}{Example}
\newtheorem{algorithm}{Algorithm}
\newtheorem{definition}{Definition}
\newtheorem{lemma}{Lemma}
\newtheorem{corollary}{Corollary}
\newtheorem{remark}{Remark}
\newcommand{\migip}{\hspace*{\fill} $\blacksquare$}
\newcommand{\euclidspace}{{\mathcal{H}}}
\newcommand{\opA}{{\mathscr{A}}}
\newcommand{\opD}{D}
\newcommand{\opM}{M_2}
\newcommand{\newz}{z}
\newcommand{\newv}{v}
\newcommand{\opL}{\mathscr{L}}
\newcommand{\genopL}{\mathfrak{L}}
\newcommand{\ball}{\mathcal{B}}
\newcommand{\inputA}{A}
\newcommand{\inputa}{a}
\newcommand{\outputb}{y}
\newcommand{\affinec}{c_2}
\newcommand{\hilbertx}{{\mathcal{X}}}
\newcommand{\hilbertnewy}{{\mathcal{Y}}}
\newcommand{\hilbertnewz}{{\mathcal{Z}}}
\newcommand{\hilbertarbh}{{\mathcal{H}}}
\newcommand{\hilbertarbk}{{\mathcal{K}}}
\newcommand{\Natural}{{\mathbb N}}
\newcommand{\norm}[1]{\left\|#1\right\|}
\newcommand{\abs}[1]{\left|#1\right|}
\newcommand{\real}{{\mathbb R}}
\newcommand{\innerprod}[2]{\left\langle{#1},{#2}\right\rangle}
\newcommand{\argmin}{\operatornamewithlimits{argmin}}
\newcommand{\gzh}{{\Gamma_0(\mathcal{H})}}
\newcommand{\prox}{{\rm Prox}}
\newcommand{\dom}{{\rm dom}\hspace*{.2em}}
\newcommand{\range}{{\rm range}\hspace*{.2em}}
\newcommand{\interior}{{\rm int}\hspace*{.2em}}
\definecolor{darkgreen}{rgb}{0,.6,0}
\definecolor{medorange}{rgb}{0.7,0.3,0}
\definecolor{cyancyan}{rgb}{0.68, 0.92, 0.92}
\begin{document}
%
\title{Linearly-involved Moreau-Enhanced-over-Subspace Model:
Debiased Sparse Modeling and Stable Outlier-Robust Regression}

%
%

\author{Masahiro~{\sc Yukawa},~\IEEEmembership{Senior Member,~IEEE,}~
Hiroyuki~{\sc Kaneko},~\IEEEmembership{Member,~IEEE,}\\
Kyohei~{\sc Suzuki},~\IEEEmembership{Graduate Student Member,~IEEE,}~
Isao~{\sc Yamada},~\IEEEmembership{Fellow,~IEEE}
\thanks{Manuscript received XXX yy, 2010; revised XXX xx, 200x.
This work was partially supported by JSPS Grants-in-Aid 
(22H01492).
}
\thanks{M.~Yukawa, H.~Kaneko, and K.~Suzuki are with the Department of 
Electronics and Electrical Engineering,
Keio University, Japan.
Address: Hiyoshi 3-14-1 (25-404), Kohoku-ku, Yokohama, Kanagawa 223-8522, Japan
(e-mail: yukawa@elec.keio.ac.jp).
I.~Yamada is with 
the Department of Information and Communications Engineering, 
Tokyo Institute of Technology,
2-12-1-S3-60, O-okayama,
Meguro-ku, Tokyo 152-8550, Japan
(e-mail: isao@sp.ce.titech.ac.jp).\\
\copyright ~2023 IEEE. Personal use is permitted, but republication/redistribution requires IEEE permission.
See https://www.ieee.org/publications/rights/index.html for more information.\\
Digital Object Identifier: 10.1109/TSP.2023.3263724
}
}

\maketitle







\begin{abstract}

We present an efficient mathematical framework
to derive promising methods
that enjoy ``enhanced'' desirable properties.
The popular minimax concave penalty 
for sparse modeling subtracts, 
from the $\ell_1$ norm, its Moreau envelope,
inducing nearly unbiased estimates and thus
yielding considerable performance enhancements.
To extend it to underdetermined linear systems,
we propose {\em the projective minimax concave penalty},
which leads to ``enhanced'' sparseness over the input subspace.
We also present a promising regression method which
has an ``enhanced'' robustness and substantial stability
by distinguishing outlier and noise explicitly.
The proposed framework, named 
{\em the linearly-involved Moreau-enhanced-over-subspace
 (LiMES) model},
encompasses those two specific examples
as well as two others: stable principal component
pursuit and robust classification.
The LiMES function involved in the model
is an ``additively nonseparable'' weakly convex function,
while the `inner' objective function to define
the Moreau envelope is ``separable''.
This {\em mixed nature of separability and nonseparability}
allows an application of the LiMES model to the underdetermined case
with an efficient algorithmic implementation.
Two linear/affine operators play key roles in the model:
one corresponds to the projection mentioned above and
the other takes care of robust regression/classification.
A necessary and sufficient condition for convexity
of the smooth part of the objective function
is studied.
Numerical examples show the efficacy of LiMES
in applications to sparse modeling and robust regression.


\end{abstract}


\begin{keywords}
convex optimization, weakly convex function, proximity operator,
Moreau envelope
\end{keywords}

%


\section{Introduction}\label{sec:intro}

The primal goal of this article is to present
a unified mathematical framework to derive promising methods
that enjoy ``enhanced'' desirable properties.
The main body is divided into two parts.
The first part concerns two specific tasks of signal processing.
Specifically, the first task is 
finding sparse solutions of underdetermined linear systems
with small biases, and we present a certain data-dependent penalty function
yielding ``enhanced'' sparseness.
The second task is a robust regression task
in the presence of sparse outliers and large Gaussian noise,
and we present an efficient formulation that leads to ``enhanced''
robustness and substantial stability.
The second part presents the proposed framework which contains
{\em the linearly-involved Moreau-enhanced-over-subspace (LiMES) model}
at its core.
The proposed framework covers the two methods studied in the first part
as well as many others, including two more examples presented in the second part.
The background of the study of the first part is presented below,
followed by the details of each part.

\subsection{Background}

Sparsity awareness and outlier robustness are two key aspects of
paramount importance in regression (linear estimation),
which has a wide range of applications in many fields including
signal processing and machine learning \cite{theodoridis_book20,zoubir_book18}.
The $\ell_1$ penalty and the $\ell_1$ loss, a.k.a.~the least absolute
deviation (LAD), are known to yield sparse solutions
\cite{elad10,foucart13} and outlier-robust estimates
\cite{huber_book,pesme20}, respectively,
as opposed to the squared $\ell_2$ norm
which has widely been used for the  Tikhonov regularization
or the squared errors.
The $\ell_1$ norm is a convex relaxation of the $\ell_0$ pseudo-norm
(which is a direct discrete measure of sparsity counting the number of
nonzero entries);
i.e., the $\ell_1$ norm is the largest convex minorant of $\ell_0$
in a vicinity of the origin.
For better relaxations/approximations to ameliorate the performance,
a plethora of nonconvex alternatives to the $\ell_1$ norm have been proposed
\cite{chartrand08,marjanovic12,shen18,yao18,Bwen18},
including the $\ell_p$ quasi-norm for $p\in(0,1)$
(e.g., \cite{chartrand07,yukawa16,jeong14}
among many others), capped $\ell_1$ \cite{zhang09},
log-sum function \cite{candes08}, minimax concave (MC) \cite{zhang},
smoothly clipped absolute deviation (SCAD) \cite{fan01}, 
continuous exact $\ell_0$ (CEL0) \cite{soubies15}, to name a few.
See also the survey paper \cite{wen18} for more references.
Among those penalties,
MC and SCAD are well known to be {\em weakly convex}; i.e.,
those functions become convex by adding a scaled squared $\ell_2$ norm.

The notion of ``convexity-preserving'' nonconvex penalties
using weakly convex functions can be found
in the literature \cite{blake87,nikolova99}.
The idea is to preserve overall convexity of 
the whole objective function 
by exploiting strong convexity of the other term(s);
cf.~difference of convex (DC) programming \cite{dinh86}.
See, e.g., \cite{parekh16,lanza19} for more recent advances.
In addition that
the weakly convex penalties induce sparsity
with small estimation biases,
the optimization problems involving a quadratic function 
and such weakly convex penalties can be solved by powerful
convex-analytic algorithms with convergence guarantee
to a global minimizer.
It is widely known that
the $\ell_p$ quasi-norm resides 
between the $\ell_0$ and $\ell_1$ norms.
It has been shown recently that
the (properly-normalized) MC penalty
bridges the $\ell_0$ and $\ell_1$ norms by a single parameter
\cite[Example 2]{abe_ip20}.
This, together with its nice experimental performances,
motivates us to focus on the MC penalty.
Let us consider the squared-error fidelity (the least square
loss) penalized by a weakly convex function in linear regression.
The overall convexity can be preserved
in the overdetermined case
by choosing the regularization parameter properly.
In some important applications including high dimensional data
analysis and compressed sensing, 
however,
the overall convexity {\em cannot} be preserved because
the number of measurements is much smaller than the number of variables.

To overcome this strict limitation, 
the generalized MC (GMC) penalty has been proposed \cite{selesnick}.
Based on the fact that the MC penalty can be expressed as 
a difference between the $\ell_1$ norm and its Moreau envelope,
GMC inserts a matrix-valued tuning parameter
in the quadratic term of the Moreau envelope.
We refer to the $\ell_1$ norm as ``the seed function'' of the GMC penalty.
The GMC penalty has been extended (i) from $\ell_1$ norm to a more
general convex seed-function satisfying certain mild conditions and
(ii) to a composition of linear operator \cite{Abe2,abe_ip20}.
The extended function is called {\em linearly involved generalized
Moreau enhanced (LiGME)} penalty \cite{abe_ip20},
covering the Moreau enhanced penalties for the nuclear norm and total
variation among many others.
The important property common to those generalized penalties is 
{\em nonseparability} even if its seed function is {\it additively separable};
i.e., those penalties are not necessarily expressed
as a sum of individual functions each of which depends solely on 
each variable.
Thanks to its nonseparability, GMC/LiGME can be applied to
underdetermined linear systems.
While it has rigorous theoretical backbones,
its use in robust regression has not been investigated so far.
Although a number of nonconvex loss functions have been proposed 
\cite{yan13,hohm15,yuan15,wen17,javaheri18,tzagkarakis19,yang19}
indeed as alternatives to the convex ones such as LAD or Huber's loss
\cite{huber_book,pesme20},
global optimality has not been discussed in those previous works.


\subsection{Contributions --- Part I}
There are three research questions that motivate the present study,
two of which are stated in this part.\\
{\em 
(Q1) What is a function that is maximally close to
the MC penalty while being able to possess overall convexity 
in underdetermined situations?}\\
 We would like to reserve such a region, as much as possible,
 on which the newly developing penalty coincides with the MC penalty.
In the underdetermined case,
the fidelity function is {\em not} strongly convex in the whole space.
Precisely, while it is strongly convex
on the subspace spanned by the input vectors,
it is ``flat'' (it has no strong convexity at all)
on its orthogonal complement.
This immediately implies that
the penalty function needs to be convex
on the orthogonal complement to preserve the overall convexity.
This simple observation is the key for our first contributions
summarized below.

\begin{itemize}
 \item We propose {\em the projective  minimax concave (PMC) penalty}
in which 
the projection operator onto the input subspace
is used to annihilate the Moreau enhancement effects on 
its orthogonal complement.
PMC reduces to the original MC penalty on the input
subspace while it reduces to the $\ell_1$ norm (a convex relaxation of
      the $\ell_0$ pseudo-norm) on its orthogonal complement
(see Proposition \ref{proposition:rmc_properties}).
This means that PMC gives an answer to the first question shown above.

\item The formulation involving PMC enhances sparsity with small
      estimation biases in the underdetermined case,
and thus it is referred to as {\em debiased sparse modeling}.

 \item While the PMC penalty itself is ``additively nonseparable'',
the ``internal'' objective function to define the Moreau envelope is 
``separable'' as long as the seed function is separable.
This {\em mixed nature of separability and nonseparability}
allows PMC to preserve overall convexity in the underdetermined case
with its efficient implementation using no extra variable.

\end{itemize}
{\em 
(Q2) Can we build a regression method that is highly robust
against huge outliers and stable even in severely noisy environments?}
\begin{itemize}
 \item We propose {\em stable outlier-robust regression (SORR)}
under the assumption that the noise is Gaussian and the outlier is sparse.
An additional variable vector is introduced to model the Gaussian noise
on top of the adoption of the MC-based fidelity function to evaluate
the sparse outliers, thereby reflecting the noise Gaussianity and 
the outlier sparsity in a reasonable way.

 \item SORR is a promising approach because
(a) it is highly robust and stable even in severely noisy environments,
and (b) it can be implemented efficiently 
by the operator splitting methods since
overall convexity of the whole cost is preserved under a certain condition.
This indicates that SORR resolves a certain intrinsic tradeoff existing 
in the conventional approaches (see Section \ref{subsubsec:sorr_motivation}).
\end{itemize}


\subsection{Contributions --- Part II}

The two methods proposed in the first part are based on weakly convex functions.
This gives rise to the third question.\\
{\em 
(Q3) Can we build a mathematical modeling framework 
to treat weakly convex functions in a unified fashion
for regression/classification tasks such as those studied in the first part?
}

\begin{itemize}
 \item 
We propose the LiMES model which encompasses the debiased sparse modeling and SORR 
as its particular examples.
The other examples of LiMES presented in this paper are
{\rm stable principal component pursuit (SPCP)}
\cite{zhou10} and robust classification.
For the latter application, in particular,
the popular hinge loss is enhanced by the LiMES model
with its expression 
as a composition of the support function
of a closed interval $[-1,0]$ and some affine operator.


\item 
A necessary and sufficient condition for the smooth part 
of the whole cost to be convex is presented under a certain assumption
(Proposition \ref{proposition:positive_definite_case_necessary_condition}).

\item 

The structure of LiMES admits its decomposition into a sum of
smooth and nonsmooth (proximable) convex functions,
allowing an application of the efficient operator splitting methods
to solve the posed problem.
The gradient of the smooth part produces an {\it implicit} proximity
operator, which contributes to reducing the estimation bias 
caused by the proximity operator appearing {\it explicitly} 
in the original form of the optimization algorithm.
\end{itemize}

Numerical examples show the efficacy of the LiMES framework.
Specifically, the PMC penalty achieves debiased sparse modeling 
for underdetermined systems as well as outperforming GMC,
and SORR\footnote{
Partial results (the SORR estimator and a special case of the LiMES
model) of this work
have been presented at a conference
\cite{yukawa_eusipco22} with no detailed discussions nor proofs for theoretical results.
} achieves stable and remarkably robust performances 
in the presence of both heavy Gaussian noise and
sparse outlier as well as outperforming the existing robust methods.


\subsection{Notation and mathematical tools}
\label{subsec:notation}

Let $\real$, $\real_{++}$, and $\Natural$ denote
the sets of real numbers, strictly positive real numbers, and
nonnegative integers, respectively.
Let $(\euclidspace,\innerprod{\cdot}{\cdot})$ be a real Hilbert space
equipped with inner product $\innerprod{\cdot}{\cdot}$, of which
the induced norm is denoted by $\norm{\cdot}$.
Throughout the paper, we focus on the finite dimensional case, although
many of the arguments given in this section apply to
the infinite dimensional case.
We denote by $I:\euclidspace\rightarrow\euclidspace$ the identity operator,
and by $0\in\euclidspace$ and $O:\euclidspace\rightarrow
\euclidspace:x\mapsto 0$ 
the zero vector of $\euclidspace$ and
the zero operator, respectively.
We may use the same notation of inner product, norm, zero vector, and
zero operator for other Hilbert spaces, whenever it causes no confusion.
A subset $C\subset \euclidspace$ is convex 
if $\alpha x+(1-\alpha)y \in C$ for all $(x,y,\alpha)\in C\times C\times [0,1]$.
Given a nonempty closed convex set $C\subset\euclidspace$,
the projection operator is defined by
$P_C:\euclidspace\rightarrow\euclidspace:x\mapsto \argmin_{y\in C}\norm{x-y}$.
An operator $T:\euclidspace\rightarrow\euclidspace$ is 
{\em Lipschitz continuous with constant} $L\in\real_{++}$  if 
$\norm{T(x) - T(y)} \leq L \norm{x-y}$ for every
$x,y\in\euclidspace$.
The projection operator $P_C$ is 
Lipschitz continuous with constant 1 (i.e., {\em nonexpansive}).

A function $f:\euclidspace\rightarrow
(-\infty,+\infty]:=\real\cup\{+\infty\}$ is convex on $\euclidspace$ if
$f(\alpha x + (1-\alpha)y)\leq 
\alpha f(x) + (1-\alpha)f(y)$ for all
$(x,y,\alpha)\in\dom f\times\dom f\times [0,1]$, where
$\dom f:= \{x\in\euclidspace\mid f(x)<+\infty\}$.
If in addition $\dom f\neq \emptyset$,
$f$ is a {\it proper convex} function.
For $\eta\in\real_{++}$,
$f$ is $\eta$-{\em strongly convex}
if $f- 0.5 \eta\norm{\cdot}^2$ is convex,
and it is $\eta$-{\em weakly convex}
if $f+ 0.5\eta\norm{\cdot}^2$ is convex.
A convex function $f:\euclidspace\rightarrow (-\infty,+\infty]$ is 
{\it lower semicontinuous} (or {\it closed}) on $\euclidspace$ 
if the level set
${\rm lev}_{\leq a} f:=
\left\{x\in\euclidspace: f(x)\leq a\right\}$
is closed for every $a\in\real$.
The set of all proper lower-semicontinuous convex functions
defined over $\euclidspace$
is denoted by $\Gamma_0(\euclidspace)$.
Given a function $f\in\gzh$, 
{\em the Fenchel conjugate of $f$} is 
$\gzh\ni f^*: x\mapsto
\sup_{y\in\euclidspace} \innerprod{x}{y} - f(y)$.
The Moreau envelope (smooth convex approximation)
of $f$ of index
$\gamma\in \real_{++}$ is defined
by \cite{moreau62,moreau65,combettes05}
\begin{align}
\hspace*{-2em} ^{\gamma} f:\euclidspace\rightarrow\real:
& \ x\mapsto \min_{y\in\euclidspace}
\left(
f(y) + 0.5\gamma^{-1}\norm{x-y}^2
\right)\nonumber\\
\hspace*{-2em} &\hspace*{-2.1em}
=f(\prox_{\gamma f}(x)) + 0.5\gamma^{-1}\norm{x-\prox_{\gamma f}(x)}^2,
\label{eq:moreau_envelope}
\end{align}
where
\begin{equation}
\hspace*{0em}\prox_{\gamma f}:
\euclidspace \!\rightarrow\! \euclidspace:
x \mapsto \argmin_{y \in \mathcal{H}}
\!\left(f (y) \!+\! 0.5 \gamma^{-1}
\norm{x \!- \!y}^2\right)
\label{eq:def_prox}
\end{equation}
is the proximity operator of $f$ of index $\gamma$.
The gradient of the Moreau envelope $^{\gamma} f$
is given by \cite{moreau62,moreau65,combettes05,yyy_springer_book11}
$\nabla \hspace*{.3em}^{\gamma}f= \gamma^{-1}
\left(I - {\rm Prox}_{\gamma f}\right)$, 
which is Lipschitz continuous with constant $\gamma^{-1}$.
The following identity holds in general \cite[Theorem 14.3]{combettes}:
\begin{equation}
 \hspace*{.3em}^{\gamma}f
+\hspace*{.3em}^{1/\gamma}(f^*)\circ \gamma^{-1}I=0.5\gamma^{-1} \norm{\cdot}^2.
\label{eq:moreau_identity}
\end{equation}

For any $n,m\in\Natural^*:=\Natural\setminus\{0\}$,
the $n\times n$ identity and zero matrices are denoted by
$I_n$ and $O_n$, respectively, and
the $n\times m$ zero matrix is denoted by $O_{n\times m}$.
Matrix transpose is denoted by $(\cdot)^\top$.
The $\ell_1$ and $\ell_2$ norms of Euclidean vector 
$x:=[x_1,x_2,\cdots,x_n]^\top\in\real^n$
are defined respectively by 
$\norm{x}_1:=\sum_{i=1}^n \abs{x_i}$ and 
$\norm{x}_2:= (\sum_{i=1}^n x_i^2)^{1/2}$.


\section{Two Novel Formulations for Linear Regression}
\label{sec:rmc_stable}

Two specific situations in linear regression are considered.
We first present the PMC penalty
to obtain debiased estimates for sparse modeling
under possibly underdetermined systems.
We then present SORR to combat the noise and outlier
in a separate fashion.
Given a coordinate system, a function is said to be
{\em ``additively separable''} when it is a superposition
of individual functions of each parameter.\footnote{Additive separability depends on the coordinate system.}
The $\ell_1$ norm is a simple example of separable functions.

\subsection{PMC penalty for debiased sparse modeling}
\label{subsec:sparse_regression}

\subsubsection{Sparse modeling}

We consider sparse modeling under
the standard linear model 
$\outputb:=Ax_{\diamond} + \varepsilon_{\star}$.
Here, $x_{\diamond} \in \mathbb{R}^{n}$ is the sparse unknown vector to be
estimated,
$\varepsilon_{\star} \in \mathbb{R}^{m}$ is the Gaussian noise vector,
and
$\inputA:=[\inputa_1~\inputa_2~\cdots 
\inputa_m]^\top\in\real^{m\times n}\setminus \{O_{m\times n}\}$ and
$\outputb:=[\outputb_1,\outputb_2,\cdots,\outputb_m]^\top\in\real^m$
are the input matrix 
and the output vector, respectively,
with the $i$th input vector
$\inputa_i\in\real^n$,
$i=1,2,\cdots,m$,
and its corresponding output 
$\outputb_i\in\real$.
The task is the following: find the sparse vector $x_{\diamond}\in\real^n$
given $A$ and $y$.
The linear system is supposed to be possibly {\em underdetermined}; 
i.e., $\inputA^\top \inputA\in \real^{n\times n}$ might be singular.

\subsubsection{The PMC penalty}
To reduce the estimation bias while preserving the overall convexity,
we propose the following formulation 
(which we refer to as {\em debiased sparse modeling\footnote{
It differs from {\em debiased lasso estimator} studied in statistics
 \cite{javanmard14} which ``desparsifies'' the
estimate to reduce the estimation bias by adding a Newton step to the lasso estimate.
}}):
\begin{equation}
\min_{x\in\real^n} ~
0.5\norm{\inputA x - \outputb}_2^2 + \mu
\underbrace{\left[\norm{x}_1 - \hspace*{.1em}^{\gamma}\norm{\cdot}_1(P_{\mathcal{M}}x)
\right]}_{\Phi_{\gamma}^{\rm PMC}(x)},
\label{eq:sparse_regression_mc_underdetermined}
\end{equation}
where $P_{\mathcal{M}}= A^{\dagger} A\in\real^{n\times n}$
is the orthogonal projection operator onto
$\mathcal{M}:=
{\rm null}^{\perp} \hspace*{.1em}A ~(=\range \inputA^\top)\subset\real^n$,
$\mu\in\real_{++}$ is the regularization parameter, and
\begin{equation}
\Phi_{\gamma}^{\rm PMC}(x):= \norm{x}_1 - \hspace*{.1em}^{\gamma}\norm{\cdot}_1(P_{\mathcal{M}}x)
\end{equation}
is the proposed PMC penalty.
Here, 
$(\cdot)^{\dagger}$ and $(\cdot)^{\perp}$ denote the Moore-Penrose pseudoinverse
and the orthogonal complement of subspace, respectively.

Using the identity \eqref{eq:moreau_identity},
the standard MC penalty \cite{zhang,selesnick} 
can be written as
$ \Phi_{\gamma}^{\rm MC} (x) := \norm{x}_1 -
\hspace*{.1em}^{\gamma}\norm{\cdot}_1(x)
=\norm{x}_1 + \hspace*{.1em}^{\gamma^{-1}}(\norm{\cdot}_1^*)(\gamma^{-1}x)
- 0.5\gamma^{-1} \norm{x}^2$.
Here, the subtraction of the Moreau envelope
$^{\gamma}\norm{\cdot}_1(x)$ from $\norm{x}_1$ leads to nearly unbiased
estimation \cite{zhang},
and it hence enhances the performance significantly.
As the conjugate function $\norm{\cdot}_1^*$ of
$\norm{\cdot}_1$ is convex, so is its Moreau envelope
$^{\gamma^{-1}}(\norm{\cdot}_1^*)$, and thus
$\Phi_{\gamma}^{\rm MC} (x)$ is $\gamma^{-1}$-weakly convex.
The MC penalty cannot therefore be applied to
the underdetermined case when $A^\top A$ is singular,
because
$0.5\norm{\inputA x - \outputb}_2^2+ \mu \Phi_{\gamma}^{\rm MC} (x)$ 
cannot be convex for any $\mu\in\real_{++}$.
Intuitively, 
the convexity of the fidelity term $0.5\norm{\inputA x - \outputb}_2^2$
cannot annihilate
the concavity of the negative quadratic term $- 0.5\gamma^{-1} \norm{x}^2$,
since the former function is flat (i.e., it possesses zero curvature)
over $\mathcal{M}^\perp(={\rm null}\hspace*{.2em} A)$, or any of its translations.
Here comes the idea of inserting $P_{\mathcal{M}}$ 
into the penalty in \eqref{eq:sparse_regression_mc_underdetermined}.
The projection operator $P_{\mathcal{M}}$ restricts
the concavity to
$\mathcal{M}$ $(={\rm null}^{\perp}
\hspace*{.1em}A)$,
on which the fidelity function is strongly convex,
so that the overall convexity can be preserved.
As a result, the Moreau enhancement effect
is restricted to $\mathcal{M}$ as well.
A formal discussion about the convexity issue 
is postponed to Section \ref{subsubsec:isd_method}.
In the overdetermined case,
PMC reduces to the standard MC penalty,
as $\mathcal{M}=\real^n$ and thus $P_\mathcal{M}=I$.

We mention that the PMC penalty $\Phi_{\gamma}^{\rm MC}$ depends 
on the input subspace $\mathcal{M}$.
This comes from a requirement for the preservation of overall convexity.
This design strategy also has a more positive aspect
in such specific situations when
the desired solution belongs to a known input subspace at least with high probability
(and possibly one is allowed to generate the input vectors so that it
spans that particular subspace).

\subsubsection{Properties of the PMC penalty}
\label{subsubsec:property_of_rmc}

Some properties of PMC are given below.

\begin{remark}[Separability and nonseparability]
\label{remark:separability_of_rmc}

The PMC penalty $\Phi_{\gamma}^{\rm PMC}$ in \eqref{eq:sparse_regression_mc_underdetermined}
is ``additively nonseparable''
as a function of $x$
with respect to the Cartesian coordinate system
(i.e., PMC is not represented as a sum of
individual functions of each component of $x$),
unless $P_{\mathcal{M}}$ is a diagonal matrix.
In contrast, the second term of $\Phi_{\gamma}^{\rm PMC}$
is given by
$^{\gamma}\norm{\cdot}_1(P_{\mathcal{M}}x)=
\min_{u\in\real^n} \left[\norm{u}_1+0.5\gamma^{-1}\norm{P_{\mathcal{M}}x
 -u}_2^2\right]
= \min_{u_1,u_2,\cdots,u_n\in\real} \sum_{i=1}^{n} \phi_i(u_i)$,
in which the objective function is ``separable'' as a function of $u$.
Here, $\phi_i(u_i):=\abs{u_i}+0.5\gamma^{-1}(p_i-u_i)^2$ with $p_i\in\real$
 denoting the $i$th component of $P_{\mathcal{M}}x$.
This mixed nature of separability and nonseparability
is crucial.
It is known indeed that, 
to preserve the overall convexity when $A^\top A$ is singular,
a nonconvex penalty needs to be nonseparable, excluding a trivial case
 {\rm \cite{selesnick16}}.
At the same time, thanks to the separability mentioned above,
the minimizer of the objective function
$\norm{\cdot}_1+0.5\gamma^{-1}\norm{P_{\mathcal{M}}x- \cdot}_2^2$ is
given simply by ${\rm soft}_{\gamma}(P_{\mathcal{M}}x)$.
Since $^{\gamma}\norm{\cdot}_1(P_{\mathcal{M}}x)$ is merely 
the composite of the linear operator $P_{\mathcal{M}}$ and
the Moreau envelope of the $\ell_1$ norm, 
an application of the chain rule 
with $P_{\mathcal{M}}^*=P_{\mathcal{M}}$ gives
the gradient
$\nabla(^{\gamma}\norm{\cdot}_1\circ P_{\mathcal{M}})(x)
= \gamma^{-1}P_{\mathcal{M}} \circ(I - \prox_{\gamma
 \norm{\cdot}_1})\circ P_{\mathcal{M}} (x)$,
where the gradient operator $\nabla(^{\gamma}\norm{\cdot}_1\circ P_{\mathcal{M}})$ is 
Lipschitz continuous with constant  $\gamma^{-1}$.
This smoothness property simplifies the optimization procedure, 
as shown in Section \ref{subsubsec:isd_method}.

\end{remark}

\begin{proposition}
\label{proposition:rmc_properties} 
For the PMC penalty, the following hold:
\begin{enumerate}
  \item [(a)]
The PMC penalty $\Phi_{\gamma}^{\rm PMC}$ coincides with the MC penalty
on the input subspace $\mathcal{M}$; i.e.,
$\Phi_{\gamma}^{\rm PMC} (x) = \Phi_{\gamma}^{\rm MC} (x) = \norm{x}_1 - \hspace*{.1em}^{\gamma}\norm{\cdot}_1(x)$
for $x\in \mathcal{M}$.

 \item [(b)]
The PMC penalty $\Phi_{\gamma}^{\rm PMC}$ is reduced to the $\ell_1$ norm
on the orthogonal complement $\mathcal{M}^{\perp} (={\rm null}
 \hspace*{.2em}A)$; i.e.,
 $\Phi_{\gamma}^{\rm PMC}(x) = \norm{x}_1$ for $x\in \mathcal{M}^{\perp}$.

\end{enumerate}
\end{proposition}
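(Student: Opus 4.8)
The plan is to verify both parts directly from the definition $\Phi_{\gamma}^{\rm PMC}(x) = \norm{x}_1 - {}^{\gamma}\norm{\cdot}_1(P_{\mathcal{M}}x)$, using only elementary properties of the orthogonal projection $P_{\mathcal{M}}$ and the Moreau envelope of the $\ell_1$ norm. The whole argument reduces to understanding the term ${}^{\gamma}\norm{\cdot}_1(P_{\mathcal{M}}x)$ on the two complementary subspaces.

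For part (a), suppose $x\in\mathcal{M}$. Since $P_{\mathcal{M}}$ is the orthogonal projection onto $\mathcal{M}$, we have $P_{\mathcal{M}}x = x$, so that ${}^{\gamma}\norm{\cdot}_1(P_{\mathcal{M}}x) = {}^{\gamma}\norm{\cdot}_1(x)$. Substituting into the definition gives $\Phi_{\gamma}^{\rm PMC}(x) = \norm{x}_1 - {}^{\gamma}\norm{\cdot}_1(x)$, which is exactly $\Phi_{\gamma}^{\rm MC}(x)$ by the expression for the standard MC penalty recalled in Section~\ref{subsec:sparse_regression}. This is essentially immediate; the only thing worth stating explicitly is the idempotency/fixed-point property $P_{\mathcal{M}}x=x$ for $x\in\mathcal{M}$.

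For part (b), suppose $x\in\mathcal{M}^{\perp}={\rm null}\hspace*{.2em}A$. Then $P_{\mathcal{M}}x = 0$, so the second term becomes ${}^{\gamma}\norm{\cdot}_1(0)$. I would compute this from the definition \eqref{eq:moreau_envelope}: ${}^{\gamma}\norm{\cdot}_1(0) = \min_{y\in\real^n}(\norm{y}_1 + 0.5\gamma^{-1}\norm{y}_2^2)$, and since both terms are nonnegative and vanish simultaneously at $y=0$, the minimum is $0$ (equivalently, $\prox_{\gamma\norm{\cdot}_1}(0) = {\rm soft}_{\gamma}(0) = 0$). Hence $\Phi_{\gamma}^{\rm PMC}(x) = \norm{x}_1 - 0 = \norm{x}_1$ for $x\in\mathcal{M}^{\perp}$, as claimed.

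There is essentially no obstacle here: both parts are short consequences of $P_{\mathcal{M}}$ being an orthogonal projection together with the fact that the $\ell_1$-norm Moreau envelope vanishes at the origin. The only mild care needed is to make sure the Moreau envelope at $0$ is argued cleanly (either via the soft-thresholding formula already quoted in Remark~\ref{remark:separability_of_rmc}, or via nonnegativity of the objective), and to invoke the MC-penalty identity from the preceding text rather than re-deriving it. If one wanted a unified statement, one could note that for general $x$ with orthogonal decomposition $x = P_{\mathcal{M}}x + P_{\mathcal{M}^\perp}x$, the penalty satisfies $\Phi_{\gamma}^{\rm PMC}(x) = \norm{x}_1 - {}^{\gamma}\norm{\cdot}_1(P_{\mathcal{M}}x)$ with the subtracted term depending only on the $\mathcal{M}$-component, but for the proposition itself the two special cases suffice.
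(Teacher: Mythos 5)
Your proof is correct and follows essentially the same route as the paper, which simply declares the result ``clear'' from the definition \eqref{eq:sparse_regression_mc_underdetermined}; your elaboration (using $P_{\mathcal{M}}x=x$ on $\mathcal{M}$, $P_{\mathcal{M}}x=0$ on $\mathcal{M}^\perp$, and $^{\gamma}\norm{\cdot}_1(0)=0$) is exactly the routine verification the paper leaves implicit.
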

\begin{proof}
Clear from \eqref{eq:sparse_regression_mc_underdetermined}.
\end{proof}


\begin{remark}[PMC penalty bridges $\ell_0$ and $\ell_1$ over $\mathcal{M}$]
\label{remark:bridge}
An important implication of
Proposition \ref{proposition:rmc_properties}(a)
is then that the PMC penalty gives a bridge by a single parameter $\gamma$
between 
the direct measure $\norm{\cdot}_0$ of sparsity and
its convex relaxation $\norm{\cdot}_1$ on the subspace $\mathcal{M}$.
To be specific, we define
$\tilde{\Phi}_{\gamma}^{\rm PMC} :=
\theta_{\gamma} \Phi_{\gamma}^{\rm PMC}$,
where $\theta_{\gamma}:=
\begin{cases}
 \frac{2}{\gamma} & \mbox{if } \gamma\in(0,2)\\
1 & \mbox{if } \gamma\in [2,+\infty).
\end{cases}$
Then, it follows, for any $x\in \real^n(\supset \mathcal{M})$,
 that
(i)
$\lim_{\gamma\downarrow 0}\tilde{\Phi}_{\gamma}^{\rm  PMC}(x)=
 \norm{x}_0$
 {\rm \cite{abe_ip20}}
[see also Example \ref{example:alime_penalty}(a)], 
and (ii) $\lim_{\gamma\rightarrow +\infty}\tilde{\Phi}_{\gamma}^{\rm  PMC}(x)=
 \norm{x}_1$.
Here, the latter argument can be justified by  observing that
$0\leq\hspace*{.3em} ^{\gamma} \norm{\cdot}_1(x)=
\min_{u\in\real^n}\left(
\norm{u}_1 + 0.5\gamma^{-1}\norm{u-x}_2^2
\right)\leq
\norm{0}_1 + 0.5\gamma^{-1}\norm{0-x}_2^2\rightarrow 0
$ as $\gamma\rightarrow +\infty$
for any $x\in \real^n$.
\end{remark}

We emphasize that those remarkable properties given in Remarks
\ref{remark:separability_of_rmc} and \ref{remark:bridge} and Proposition
\ref{proposition:rmc_properties}
come from the ``structure'' of PMC (see Remark
\ref{remark:separability_of_rmc}),
not from the use of the projection operator.
We mention that PMC is non-monotonic, and
it is decreasing in some direction(s) so that
it may take negative values.
Although this may cause overestimation,
PMC tends to perform better than GMC (which would suffer from 
underestimation owing to a shrinking bias to a certain extent),
as shown by simulations in Section \ref{subsec:exp_sparse_regression}.
In fact, all those properties make PMC be significantly different from GMC and
its related works.

\subsubsection{Iterative shrinkage and debiasing algorithm}
\label{subsubsec:isd_method}

The problem in \eqref{eq:sparse_regression_mc_underdetermined}
can be viewed as 
\begin{equation}
 \min_{x\in\real^n}
\underbrace{0.5\norm{\inputA x - \outputb}_2^2 -\mu
\hspace*{.1em}^{\gamma}\norm{\cdot}_1(P_{\mathcal{M}}x)}_{\rm smooth}
+
\underbrace{\mu\norm{x}_1.}_{\rm nonsmooth}
\label{eq:sparse_regression_mc_underdetermined_smooth_nonsmooth}
\end{equation}
Since the gradient of the smooth part in
\eqref{eq:sparse_regression_mc_underdetermined_smooth_nonsmooth} and 
the proximity operator of the $\ell_1$ norm
are available (see Remark \ref{remark:separability_of_rmc}),
the proximal gradient method \cite{PFBS1,PFBS2}
can be applied,
under the convexity condition presented in Proposition
\ref{proposition:rmc_convexity} below,  directly to
\eqref{eq:sparse_regression_mc_underdetermined_smooth_nonsmooth}
to obtain the following algorithm.
Given an initial point $x_0\in\real^n$,
generate a sequence $(x_k)_{k\in\Natural}\subset \real^n$
by (cf.~Section \ref{subsec:twin_prox})
\begin{align}
\hspace*{0em} x_{k+1} := {\rm soft}_{\beta_k\mu} [x_k 
+&~ \beta_k \mu \gamma^{-1}\!
P_{\mathcal{M}} \left(x_k \!-\! {\rm soft}_\gamma (P_{\mathcal{M}}x_k)\right)
\nonumber\\
& \hspace*{.5em}- \beta_k \inputA^\top (\inputA x_k -\outputb) ],~k\in\Natural,
\label{eq:isda}
\end{align}
where
$\beta_k \in (0,2/(\lambda_{\max}(\inputA^\top\inputA) + \mu
\gamma^{-1}))$ is the step size, and,
for any $\delta\in\real_{++}$,
${\rm soft}_\delta:={\rm Prox}_{\delta
\norm{\cdot}_1}:\real^n\rightarrow\real^n:
x:=[x_1,x_2,\cdots,x_n]^\top \mapsto
[\varphi_{\delta}(x_1),\varphi_{\delta}(x_2),\cdots,
\varphi_{\delta}(x_n)]^\top$,
$n\in\Natural^*$,
is the shrinkage (soft thresholding) operator.
Here, $\lambda_{\max}(\cdot)$ denotes
the largest eigenvalue, and
$\varphi_{\delta}:\real\rightarrow \real:
a\mapsto {\rm sign}(a)\max\{0,\abs{a}-\delta\}$,
where ${\rm sign}(a):= 1$ if $a\geq 0$;
${\rm sign}(a):= -1$ otherwise.

\begin{remark}
The algorithm in \eqref{eq:isda} involves no auxiliary vector
thanks to the ``separability'' discussed in Remark \ref{remark:separability_of_rmc}.
This is in contrast to the GMC-based formulation {\rm \cite{selesnick}} for which auxiliary vectors
(with a saddle-point problem considered)
need to be used, because the objective function 
of the minimization problem involved in the generalized Huber function
(the generalized Moreau envelope) is typically ``nonseparable''.
This auxiliary-vector-free nature of PMC could potentially reduce
the memory requirement with respect to that for the  algorithm in 
{\rm \cite{selesnick}} applied to the GMC-based formulation.

\end{remark}

To understand the behaviour of the algorithm in \eqref{eq:isda}
geometrically, let us first consider the case when it is applied to
the original MC penalty; i.e., the case of $P_{\mathcal{M}}=I$.
In this case, the second term in the bracket of \eqref{eq:isda} reduces to
$\beta_k \mu \gamma^{-1}(x_k - {\rm soft}_\gamma(x_k))$, 
which actually reduces the shrinking bias
caused by the shrinkage operator ${\rm soft}_{\beta_k\mu}$ in the algorithm.
Each nonzero component of 
$\beta_k \mu \gamma^{-1}(x_k - {\rm soft}_\gamma(x_k))$
shares the same sign as 
the corresponding component of $x_k$.
Hence, this term debiases the estimate by enhancing the magnitudes of
the nonzero components prior to the operation of ${\rm soft}_{\beta_k\mu}$,
while maintaining zero components.

In the case of PMC,
the projection $P_{\mathcal{M}}$ restricts
the ``debiasing'' effect (the Moreau enhancement effect) to the subspace
$\mathcal{M}$.
Here, this restriction is due to a requirement for ensuring convexity of
the whole cost of \eqref{eq:sparse_regression_mc_underdetermined}.
We therefore refer to the algorithm in \eqref{eq:isda} 
as {\em the iterative shrinkage and debiasing algorithm
(ISDA)},\footnote{Although \eqref{eq:isda} can be regarded as a specific instance of the iterative
shrinkage-thresholding algorithm (ISTA) \cite{beck09},
we call it ISDA due to its remarkable debiasing property.
A stochastic version of ISDA
has been presented in \cite{kaneko20} with its geometric interpretation.
}
which converges to a minimizer of
\eqref{eq:sparse_regression_mc_underdetermined}
provided that the smooth part in
\eqref{eq:sparse_regression_mc_underdetermined_smooth_nonsmooth} is convex.
The convexity condition is given below.

\begin{proposition}[Convexity condition for
 \eqref{eq:sparse_regression_mc_underdetermined}]
\label{proposition:rmc_convexity}
The smooth part
$0.5\norm{\inputA x - \outputb}_2^2 -
\hspace*{.1em}^{\gamma}\norm{\cdot}_1(P_{\mathcal{M}}x)$ 
is convex
if and only if
$\mu \leq \gamma\lambda_{\min}^{++}(\inputA^\top \inputA)$,
where
$\lambda_{\min}^{++}(\cdot)$ denotes 
the smallest strictly-positive eigenvalue.

\end{proposition}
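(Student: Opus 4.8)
The plan is to reduce the question to the positive semidefiniteness of one constant matrix by exploiting that, near the origin, the Moreau envelope ${}^{\gamma}\norm{\cdot}_1$ coincides with a pure quadratic. Write $S(x):=0.5\norm{\inputA x-\outputb}_2^2-\mu\,{}^{\gamma}\norm{\cdot}_1(P_{\mathcal{M}}x)$ for the smooth part of \eqref{eq:sparse_regression_mc_underdetermined_smooth_nonsmooth}. First I would record the elementary fact that ${}^{\gamma}\norm{\cdot}_1(z)=\sum_{i}\psi_\gamma(z_i)$, where $\psi_\gamma$ is the Huber function, $\psi_\gamma(t)=t^2/(2\gamma)$ for $\abs{t}\le\gamma$ and $\psi_\gamma(t)=\abs{t}-\gamma/2$ for $\abs{t}>\gamma$. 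Consequently, on the open convex set $U:=\{x\in\real^n:\norm{P_{\mathcal{M}}x}_\infty<\gamma\}$, which contains $0$, one has $S(x)=h(x):=0.5\norm{\inputA x-\outputb}_2^2-\tfrac{\mu}{2\gamma}\norm{P_{\mathcal{M}}x}^2$, a quadratic whose (constant) Hessian is $H:=\inputA^\top\inputA-\mu\gamma^{-1}P_{\mathcal{M}}$, using $P_{\mathcal{M}}^\top P_{\mathcal{M}}=P_{\mathcal{M}}$. Moreover, on all of $\real^n$, $S(x)-h(x)=\tfrac{\mu}{2\gamma}\sum_i\bigl[\max\{0,\abs{(P_{\mathcal{M}}x)_i}-\gamma\}\bigr]^2$, which is a nonnegative convex function of $x$ (a sum of convex functions of the linear images $(P_{\mathcal{M}}x)_i$) that vanishes identically on $U$.

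For the ``if'' direction I would show $H\succeq O_n$ under $\mu\le\gamma\lambda_{\min}^{++}(\inputA^\top\inputA)$. Since $\range(\inputA^\top\inputA)={\rm null}^{\perp}\inputA=\mathcal{M}$ and ${\rm null}\,(\inputA^\top\inputA)={\rm null}\,\inputA=\mathcal{M}^\perp$, the orthogonal splitting $\real^n=\mathcal{M}\oplus\mathcal{M}^\perp$ reduces $\inputA^\top\inputA$ and $P_{\mathcal{M}}$, hence $H$. On $\mathcal{M}^\perp$, $H=O$; on $\mathcal{M}$, $H|_{\mathcal{M}}=\inputA^\top\inputA|_{\mathcal{M}}-\mu\gamma^{-1}I|_{\mathcal{M}}$, whose eigenvalues are $\lambda-\mu\gamma^{-1}$ with $\lambda$ ranging over the strictly positive eigenvalues of $\inputA^\top\inputA$. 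Hence $H\succeq O_n$ iff $\mu\gamma^{-1}\le\lambda_{\min}^{++}(\inputA^\top\inputA)$, i.e.\ iff $\mu\le\gamma\lambda_{\min}^{++}(\inputA^\top\inputA)$. When this holds, $h$ is convex, so $S=h+(S-h)$ is a sum of convex functions, hence convex.

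For the ``only if'' direction I would argue by contraposition: if $\mu>\gamma\lambda_{\min}^{++}(\inputA^\top\inputA)$, the eigenvalue computation above gives $H\not\succeq O_n$, so the quadratic $h$ is not convex on the open convex set $U$; since $S\equiv h$ on $U$, the function $S$ cannot be convex on $\real^n$. This completes the equivalence.

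The only genuinely delicate point is the ``only if'' direction: $S$ is merely $C^{1,1}$ (the Huber function is not twice differentiable at its kinks), so one cannot simply inspect a global Hessian; the resolution is the localization to $U$, on which the envelope is exactly quadratic and the standard second-order characterization of convexity applies. Everything else is routine bookkeeping with the decomposition $\real^n=\mathcal{M}\oplus\mathcal{M}^\perp$ and the identity $\range\inputA^\top=\mathcal{M}={\rm null}^{\perp}\inputA$.
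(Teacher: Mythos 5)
Your proof is correct, and while its logical skeleton mirrors the paper's, the route is genuinely more elementary and self-contained. The paper's own proof is two lines: since the affine offset $\affinec$ is zero here, Corollary~\ref{corollary:necessary_sufficient} applies, so convexity of the smooth part is equivalent to ($\spadesuit$), i.e.\ $\inputA^\top\inputA-\mu\gamma^{-1}P_{\mathcal{M}}\succeq O$, which is then reduced to $\mu\le\gamma\lambda_{\min}^{++}(\inputA^\top\inputA)$ via the sandwich identity $\inputA^\top\inputA-\mu\gamma^{-1}P_{\mathcal{M}}=P_{\mathcal{M}}(\inputA^\top\inputA-\mu\gamma^{-1}I)P_{\mathcal{M}}$ (you instead split $\real^n=\mathcal{M}\oplus\mathcal{M}^\perp$ and read off the spectrum; both are fine). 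What you do differently is bypass the general machinery of Proposition~\ref{proposition:positive_definite_case_necessary_condition}, Lemma~\ref{lemma:intKC_condition}, and Corollary~\ref{corollary:necessary_sufficient} by an explicit Huber computation: writing $^{\gamma}\norm{\cdot}_1$ componentwise, you decompose $S=h+(S-h)$ with $h$ quadratic of Hessian $H=\inputA^\top\inputA-\mu\gamma^{-1}P_{\mathcal{M}}$ and $S-h=\tfrac{\mu}{2\gamma}\sum_i\bigl[\max\{0,\abs{(P_{\mathcal{M}}x)_i}-\gamma\}\bigr]^2\geq 0$ convex; this is exactly the paper's decomposition \eqref{eq:F} with the third term written out as a squared distance to the box $\{x:\norm{P_{\mathcal{M}}x}_\infty\leq\gamma\}$. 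For necessity you localize to the open convex set $U$ on which the residual vanishes; this is precisely the specialization of the paper's set $\interior K_C$, and its nonemptiness (here, $0\in U$) is exactly what condition (i) $\affinec=0$ of Corollary~\ref{corollary:necessary_sufficient} secures in general. So both arguments share the same three moves --- convex-residual decomposition for sufficiency, vanishing of the residual on a nonempty open set for necessity, spectral reduction of the matrix condition --- but your version buys independence from the LiMES formalism and a transparent picture of where the envelope is purely quadratic (including the correct handling of the $C^{1,1}$ kinks), whereas the paper's version buys brevity and a single statement that covers all its other instances (SORR, SPCP, robust classification) without redoing the Huber-type computation each time.
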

\begin{proof}
The proof is based on the results to be presented in Section
 \ref{subsec:convexity_condition}, and it is given in
Appendix \ref{subsec:proof_pmc}.
\end{proof}


\subsection{SORR Estimator for Outlier-Robust Regression}\label{subsec:robust_regression}

Robust regression concerns the case when some components of $\outputb$
contaminate outliers as follows \cite{candes_randall08}:
$\outputb:= A x_{\star} + \varepsilon_{\star} + o_{\diamond}$.
Here, $x_{\star}\in\real^n$ and $\varepsilon_{\star}\in\real^m$
are the unknown and noise vectors 
which are mutually uncorrelated and both of which
obey i.i.d.~zero-mean normal distributions with variances
$\sigma_{x_{\star}}^2\in\real_{++}$ and 
$\sigma_{\varepsilon_{\star}}^2\in\real_{++}$, respectively, 
and $o_{\diamond}\in\real^m$ is the sparse outlier vector
\cite{huber_book}.



\begin{figure}[t]
	\centering

\begin{tabular}{cc}
\begin{minipage}{4cm}
 \subfigure[loss function $\phi$]{
\hspace*{-2.2em}  
\includegraphics[height=3.5cm]{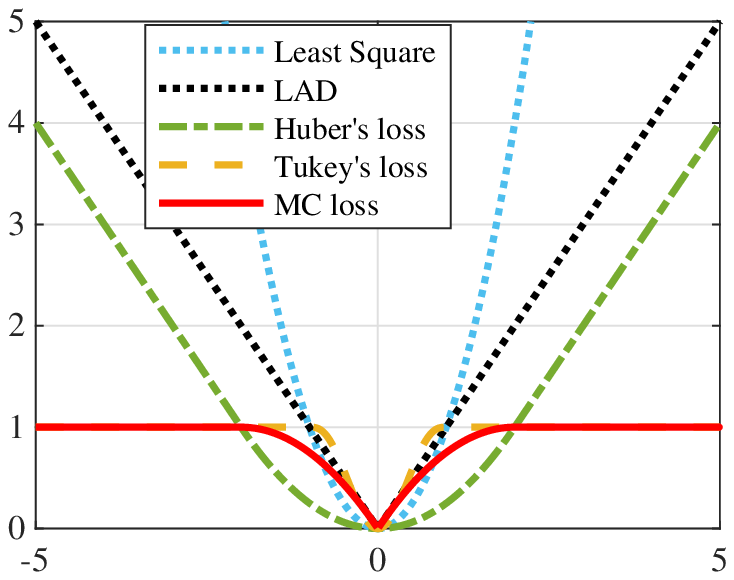}
 }\vspace*{-1em} 
\end{minipage}
 &
\begin{minipage}{4cm} 
\subfigure[derivative $\psi$]{
\hspace*{-1.8em} \includegraphics[height=3.5cm]{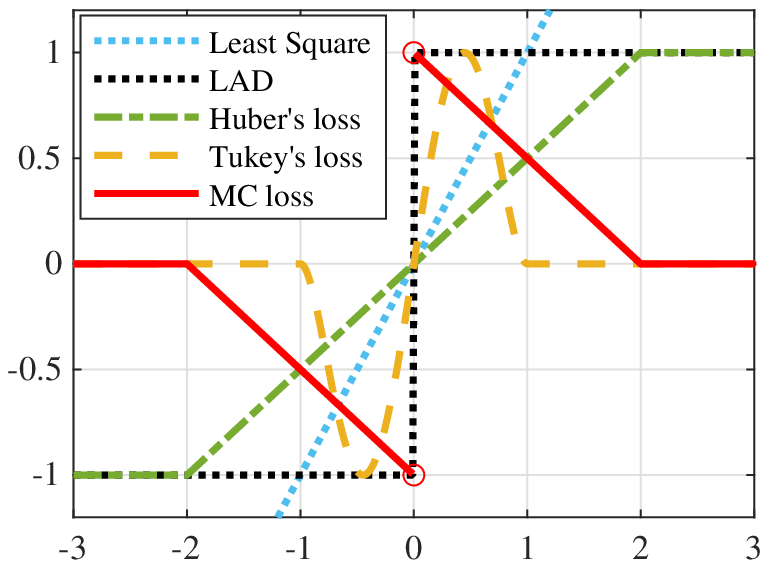}
}\vspace*{-1em}
\end{minipage}
 \\
\end{tabular}
	\caption{Loss functions and the derivatives (when exist).}
	\label{fig:MC_function}
\end{figure}

\subsubsection{A tradeoff between robustness and mathematical
 tractability, and stability aspect}
\label{subsubsec:sorr_motivation}

Popular Huber's loss \cite{huber_book} is more insensitive to
outliers than the least square (LS) loss, and it is mathematically tractable
owing to its convexity at the same time.
Regarding stability with respect to fluctuations caused by Gaussian noise,
the Huber's and LS losses are equally stable.
Nevertheless, the robustness of Huber's loss 
against huge outliers is limited.
This can be seen by inspecting its derivative to which
the so-called {\em influence function} is proportional \cite{hampel2011robust}.
See Fig.~\ref{fig:MC_function}.
It can be seen that the derivative of Huber's loss stays constant above
(or below) the threshold.
This means that 
large outliers give a constant amount of influence to the estimate, 
thus causing extra estimation bias.

In contrast, Tukey's biweight loss
\cite{beaton_tukey74} has a {\em ``redescending property (vanishing derivative)''}; i.e., 
the derivative vanishes at some point on each side of the real line.
This implies that such outliers that have magnitudes exceeding the threshold
would give no influence to the estimate, thus leading to remarkable
robustness to huge outliers.
Unfortunately, however, Tukey's biweight is mathematically intractable
owing to its nonconvexity.
So, {\em how can we break the tradeoff between robustness 
against huge outliers and mathematical tractability.}

To find an answer to this question,
let us consider the following question first:
{\em can we find such a convex loss that has
a vanishing derivative (for sufficiently large values)?}
This is hopeless actually in a certain sense.
To be precise,
we restrict ourselves to such a class of loss functions
$\phi:\real\rightarrow [0,+\infty)$ such that
(i) $\phi(0)=0$, 
(ii) $\phi(e)> 0$ if $e\neq 0$, and 
(iii) $\phi$ is differentiable everywhere but the origin.
(This assumption is reasonably mild. See, e.g., \cite{selesnick17_nonseparable_reg}.)
Within this class of functions,
$\phi$ is continuous if it is convex, because ${\rm range} \ (\phi) = [0,+\infty)\subset\real$ owing
to conditions (i) and (iii).
In fact, no convex loss has a vanishing derivative in this case.
The derivative $\psi:=\phi'$ of $\phi$ has the following properties:
(i) $\psi(0)=0$ if $\phi$ is also differentiable at $e=0$ 
which minimizes $\phi$
(or $0\in \partial \phi(0)$ in general), and
(ii) $\psi(e)>0$ when $e$ increases from zero
infinitesimally.
Hence, there is no way for $\psi$ to vanish again
because the derivative of a convex function is monotonically non-decreasing.

The above arguments encourage us to explore nonconvex loss functions.
In fact, the MC loss $\Phi_{\gamma}^{\rm MC}$ 
has a vanishing derivative (as can be seen from
Fig.~\ref{fig:MC_function}), 
and it is mathematically tractable at the same time.
Let us now highlight the behaviour of the derivative $\psi$ for each
loss in the vicinity of the origin.
It can be seen that the derivative vanishes at the origin for
the Huber, Tukey's biweight, and  LS losses, while
it does not vanish for the MC and LAD losses.
This implies a direct use of the MC loss may cause instability 
with respect to small fluctuations generated by Gaussian noise.
We therefore present another formulation using an additional variable
vector to model the Gaussian noise vector $\varepsilon_{\star}$
in the following.

\subsubsection{Stable outlier-robust regression}
\label{subsubsec:sorr}

We introduce the variable vector\footnote{One may try to
introduce, instead of $\varepsilon$,
an additional variable vector to model the outlier $o_{\diamond}$.
This, however, leads to a nonconvex formulation.}
 $\varepsilon\in\real^m$
to model the noise $\varepsilon_{\star}$.
The SORR formulation is given as follows:
\begin{align}
\hspace*{-1em} \min_{x\in\real^n,\varepsilon\in\real^m} 
~&\mu 
\underbrace{[\norm{\outputb - (\inputA x + \varepsilon)}_1
- 
  ~^\gamma\norm{\cdot}_1(\outputb - (\inputA x + \varepsilon))]}_{\Phi_{\gamma}^{\rm
  MC}(\outputb - (\inputA x + \varepsilon))}\nonumber\\
\hspace*{-1em}~&+ 0.5 \sigma_x^{-2}\norm{x}_2^2
+ 0.5 \sigma_{\varepsilon}^{-2}\norm{\varepsilon}_2^2,
\label{eq:stable_regression}
\end{align}
where
$\sigma_x^2\in\real_{++}$ and $\sigma_\varepsilon^2\in\real_{++}$ are
estimates
of $\sigma_{x_{\star}}^2$ and 
$\sigma_{\varepsilon_{\star}}^2$, respectively.
If such estimates are unavailable, 
$\sigma_x^2$ and $\sigma_\varepsilon^2$ are considered as tuning
parameters.
An extreme case of SORR with $\sigma_\varepsilon\downarrow 0$
(or with a sufficiently small $\sigma_\varepsilon>0$) 
make the optimal $\varepsilon$ of \eqref{eq:stable_regression} be 
the zero vector, reducing SORR to the following simple formulation:
\begin{equation}
 \min_{x\in\real^n} ~ 
\mu \underbrace{[\norm{\inputA x - \outputb}_1
- 
  ~^\gamma\norm{\cdot}_1(\inputA x - \outputb)]}_{\Phi_{\gamma}^{\rm MC}(\inputA x - \outputb)}
+ 0.5\norm{x}_2^2.
\label{eq:robust_estimation_mc_loss}
\end{equation}
We shall refer to the formulation in \eqref{eq:robust_estimation_mc_loss}
as outlier-robust regression (ORR).\footnote{
Unlike SORR, the ORR formulation in \eqref{eq:robust_estimation_mc_loss}
does not distinguish the Gaussian noise $\varepsilon_{\star}$ and the sparse outlier
$o_{\diamond}$ explicitly.
ORR in \eqref{eq:robust_estimation_mc_loss} can also be viewed
as a particular case of the model proposed in \cite{suzuki20}
for robust recovery of jointly sparse signals.}

The first term $\Phi_{\gamma}^{\rm MC}(\outputb - (\inputA x +
\varepsilon))$
of \eqref{eq:stable_regression}
is the MC loss encouraging sparsity of the estimation residual
$\outputb - (\inputA x + \varepsilon)$ which can be regarded as an
estimate of the sparse outlier.
The last two terms
$0.5\sigma_{x}^{-2}\norm{x}_2^2$ and
$0.5\sigma_{\varepsilon}^{-2}\norm{\varepsilon}_2^2$
reflect the Gaussianity of $x_{\star}$ and $\varepsilon_{\star}$,
playing double roles of convexification and regularization (in the Tikhonov sense).
In particular, when the noise power
$\sigma_{\varepsilon_{\star}}^2$ is large,
$\norm{\varepsilon_{\star}}^2$
tends to be large as well.
In this case,
the inverse $\sigma_{\varepsilon}^{-2}$ of 
the noise-power estimate would be small, and it
allows
$\norm{\varepsilon}_2^2$ to be large 
so that $\varepsilon$ mimics $\varepsilon_{\star}$ well,
yielding  efficient mitigation of the MC loss
$\Phi_{\gamma}^{\rm MC}(\outputb - (\inputA x +
\varepsilon))$.
This leads to the ``stability'' of the SORR estimator in the spirit of \cite{zhou10}.
A primal-dual splitting algorithm 
which can solve some class of linearly-involved nonsmooth convex
optimization problems including \eqref{eq:stable_regression} will be
presented in Section \ref{subsec:twin_prox}.
The algorithm relies on convexity of the smooth part of the objective function
in \eqref{eq:stable_regression}, for which the condition is given below.

\begin{proposition}[Convexity condition for SORR \eqref{eq:stable_regression}]
\label{proposition:orsr_convexity}
 The smooth part 
$0.5\sigma_{x}^{-2}\norm{x}_2^2 +
0.5 \sigma_{\varepsilon}^{-2}\norm{\varepsilon}_2^2
-  \mu\hspace*{.3em} ^\gamma\norm{\cdot}_1(\outputb - (\inputA x + \varepsilon))$ 
is convex in $(x,\varepsilon)\in\real^n\times\real^m$
if and only if
\begin{equation}
 \mu (\sigma_{\varepsilon}^2 + \sigma_x^2
  \lambda_{\max}(\inputA^\top\inputA))\leq \gamma.
\label{eq:mu_condition_stable_regression}
\end{equation}
\end{proposition}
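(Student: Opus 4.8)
The plan is to reduce the convexity of the smooth part of \eqref{eq:stable_regression} to a Hessian (positive-semidefiniteness) computation, exploiting the Moreau identity \eqref{eq:moreau_identity} to turn the concave term into a quadratic plus a convex term. First I would rewrite $^\gamma\norm{\cdot}_1(\outputb - (\inputA x + \varepsilon))$ using \eqref{eq:moreau_identity}: with $z := \outputb - (\inputA x + \varepsilon)$ one has $^\gamma\norm{\cdot}_1(z) = 0.5\gamma^{-1}\norm{z}_2^2 - \hspace*{.1em}^{1/\gamma}(\norm{\cdot}_1^*)(\gamma^{-1}z)$, and since $\norm{\cdot}_1^*$ is the indicator of the unit $\ell_\infty$ ball, $^{1/\gamma}(\norm{\cdot}_1^*)(\gamma^{-1}z)$ is a convex function of $z$, hence a convex function of $(x,\varepsilon)$ (composition with an affine map). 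Therefore the smooth part equals
\begin{equation}
0.5 \sigma_x^{-2}\norm{x}_2^2 + 0.5\sigma_\varepsilon^{-2}\norm{\varepsilon}_2^2 - 0.5\mu\gamma^{-1}\norm{\outputb - (\inputA x + \varepsilon)}_2^2 + (\text{convex in }(x,\varepsilon)),
\label{eq:sorr_decomp}
\end{equation}
so it is convex \emph{if} the displayed quadratic part is convex; the ``only if'' direction needs the observation that the convex remainder $^{1/\gamma}(\norm{\cdot}_1^*)(\gamma^{-1}z)$ is, as a function of $z$, only piecewise quadratic and in particular is \emph{affine} (identically $0.5\gamma^{-1}\norm{z}_2^2$ minus the generalized Huber, which is linear near the origin — actually $^\gamma\norm{\cdot}_1$ is linear in each coordinate for small arguments, so near $z=0$ the Moreau envelope contributes exactly a quadratic of curvature $\gamma^{-1}$ is wrong; rather the envelope is $0.5\gamma^{-1}\norm{z}^2$ for $\norm{z}_\infty\le\gamma$ componentwise). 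This is the cleanest route: on the region where every residual component is small, $^\gamma\norm{\cdot}_1(z) = 0.5\gamma^{-1}\norm{z}_2^2$ exactly, so there the smooth part coincides with the pure quadratic $0.5\sigma_x^{-2}\norm{x}_2^2 + 0.5\sigma_\varepsilon^{-2}\norm{\varepsilon}_2^2 - 0.5\mu\gamma^{-1}\norm{\outputb-(\inputA x+\varepsilon)}_2^2$, and convexity of the smooth part forces this quadratic to be convex, giving necessity.

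Granting that, the problem becomes: the quadratic form $Q(x,\varepsilon) := \sigma_x^{-2}\norm{x}_2^2 + \sigma_\varepsilon^{-2}\norm{\varepsilon}_2^2 - \mu\gamma^{-1}\norm{\inputA x + \varepsilon}_2^2$ (dropping the affine $\outputb$ shift, irrelevant to the Hessian) is positive semidefinite iff \eqref{eq:mu_condition_stable_regression} holds. I would write its matrix as
\begin{equation}
H = \begin{bmatrix} \sigma_x^{-2} I_n - \mu\gamma^{-1}\inputA^\top\inputA & -\mu\gamma^{-1}\inputA^\top \\ -\mu\gamma^{-1}\inputA & \sigma_\varepsilon^{-2} I_m - \mu\gamma^{-1} I_m \end{bmatrix}.
\label{eq:sorr_hessian}
\end{equation}
A necessary condition for $H\succeq 0$ is that the $(2,2)$ block be positive definite, i.e. $\sigma_\varepsilon^{-2} > \mu\gamma^{-1}$ (the boundary case can be handled by a limiting argument or noted separately), after which the Schur complement criterion applies: $H\succeq 0$ iff the Schur complement of the $(2,2)$ block is positive semidefinite, namely
\begin{equation}
\sigma_x^{-2} I_n - \mu\gamma^{-1}\inputA^\top\inputA - (\mu\gamma^{-1})^2 \inputA^\top (\sigma_\varepsilon^{-2} I_m - \mu\gamma^{-1} I_m)^{-1}\inputA \succeq 0.
\label{eq:sorr_schur}
\end{equation}
Since the middle and last terms are simultaneously diagonalized by the spectral decomposition of $\inputA^\top\inputA$, \eqref{eq:sorr_schur} holds iff for every eigenvalue $\lambda$ of $\inputA^\top\inputA$ (including $\lambda=0$) we have $\sigma_x^{-2} \ge \mu\gamma^{-1}\lambda + (\mu\gamma^{-1})^2\lambda/(\sigma_\varepsilon^{-2}-\mu\gamma^{-1})$; the worst case is $\lambda = \lambda_{\max}(\inputA^\top\inputA)$, and routine algebraic simplification of that scalar inequality should collapse to $\mu(\sigma_\varepsilon^2 + \sigma_x^2\lambda_{\max}(\inputA^\top\inputA)) \le \gamma$.

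The main obstacle I anticipate is the boundary/degenerate case where $\sigma_\varepsilon^{-2} = \mu\gamma^{-1}$ (the $(2,2)$ block singular, Schur complement undefined): there one must argue directly, e.g. testing $H$ on vectors $(0,\varepsilon)$ with $\varepsilon\in\range\inputA^\perp$ to see whether equality in \eqref{eq:mu_condition_stable_regression} is consistent with $H\succeq 0$, or passing to a limit $\sigma_\varepsilon^{-2}\downarrow\mu\gamma^{-1}$ in \eqref{eq:sorr_schur} and checking the condition degenerates correctly. A secondary care point is the rigorous justification that the convex remainder in \eqref{eq:sorr_decomp} genuinely contributes no negative curvature anywhere and that the local identification with the pure quadratic on the small-residual region is valid — this is where I would lean on the general results promised in Section \ref{subsec:convexity_condition} (the excerpt says the proof of the analogous Proposition \ref{proposition:rmc_convexity} is deferred there and to an appendix), applying the same machinery with the linear operator taken to be $(x,\varepsilon)\mapsto \inputA x + \varepsilon$ and the two quadratic regularizers packaged as a single block-diagonal positive-definite quadratic form on $\real^n\times\real^m$.
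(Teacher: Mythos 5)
Your proposal is correct in substance and reaches the same characterization, but it takes a genuinely different route through the key matrix step. The paper first invokes its general machinery (Proposition \ref{proposition:positive_definite_case_necessary_condition}, Lemma \ref{lemma:intKC_condition}, Corollary \ref{corollary:necessary_sufficient}, using surjectivity of $[\inputA~I_m]$) to get that convexity is equivalent to $(\spadesuit)$, and then proves $(\spadesuit)\Leftrightarrow$\eqref{eq:mu_condition_stable_regression} via the Horn--Johnson characterization of positive semidefinite block matrices ($B=P^{1/2}\Upsilon Q^{1/2}$ with a contraction $\Upsilon$), which leads to a fairly long SVD argument but remains valid when the diagonal blocks are singular. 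Your route replaces both steps by concrete computations: the Moreau-identity decomposition into ``quadratic $+$ convex'' gives sufficiency, the observation that $^{\gamma}\norm{\cdot}_1$ equals $0.5\gamma^{-1}\norm{\cdot}_2^2$ on the region $\norm{\outputb-(\inputA x+\varepsilon)}_\infty\leq\gamma$ (nonempty open interior, e.g.\ around $(0,\outputb)$) gives necessity --- this is exactly the content of Proposition \ref{proposition:positive_definite_case_necessary_condition}(b) specialized to $C$ being the $\ell_\infty$ ball --- and the Schur complement plus a one-line eigenvalue computation collapses, as you predicted, to \eqref{eq:mu_condition_stable_regression}: with $a:=\mu\gamma^{-1}$ the scalar condition $\sigma_x^{-2}\geq a\lambda+a^2\lambda/(\sigma_\varepsilon^{-2}-a)$ simplifies to $1\geq a(\sigma_\varepsilon^2+\sigma_x^2\lambda)$. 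This is more elementary and shorter than the paper's contraction/SVD argument; what the paper's route buys is that the degenerate cases are absorbed into one uniform criterion, whereas you must treat the boundary $\mu\sigma_\varepsilon^2=\gamma$ separately (the paper also splits off $\inputA=O$, which is excluded by the standing assumption $\inputA\neq O_{m\times n}$).

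One concrete correction to your flagged boundary case: testing vectors $(0,\varepsilon)$ with $\varepsilon\in(\range\inputA)^{\perp}$ does not work, since there $v^\top Hv=0$ and nothing is contradicted. When $\sigma_\varepsilon^{-2}=\mu\gamma^{-1}$ the $(2,2)$ block vanishes, so positive semidefiniteness forces the off-diagonal block $\mu\gamma^{-1}\inputA$ to vanish as well; equivalently, pick $x_0$ with $\inputA x_0\neq 0$ and test $v_t:=(t x_0,\inputA x_0)$, for which $v_t^\top H v_t = t^2 x_0^\top(\sigma_x^{-2}I_n-\mu\gamma^{-1}\inputA^\top\inputA)x_0 - 2t\mu\gamma^{-1}\norm{\inputA x_0}_2^2<0$ for small $t>0$. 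Hence $H\not\succeq O$ on the boundary whenever $\inputA\neq O$, and \eqref{eq:mu_condition_stable_regression} likewise fails there (it would force $\lambda_{\max}(\inputA^\top\inputA)=0$), so the equivalence persists and your argument is complete once this case is handled this way rather than by the suggested test vectors or a limiting argument.
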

\begin{proof}
The proof is based on the results to be presented in Section
 \ref{subsec:convexity_condition},
and it is given in Appendix \ref{subsec:proof_orsr}.
\end{proof}

\begin{remark}[SORR resolves the tradeoff efficiently]\label{remark:sorr_tradeoff}
The SORR estimator breaks the tradeoff 
between robustness and mathematical tractability.
In particular, SORR enjoys (i) remarkable robustness against huge
outliers and (ii) insensitivity to small fluctuations,
while the posed problem in \eqref{eq:stable_regression} is
still tractable because the whole cost is convex under \eqref{eq:mu_condition_stable_regression}.
Those advantages come mainly from the use of the MC loss and
the additional vector $\varepsilon$.
\end{remark}

We mention that the introduction of the additional vector $\varepsilon$
does not increase computational complexity essentially,
although a larger amount of memory is required than the case of ORR
(the $\varepsilon$-parameter free formulation)
to store the length-$(n+m)$ vector $\xi$
as well as some other intermediate vectors 
that are required in the algorithm 
(see 
\cite{yukawa_eusipco22} or Section \ref{subsubsec:pdda_type_R}).
Specifically, the algorithm iteration to compute the SORR estimator
requires $O(mnQ)$ complexity, in addition to the computation
of the largest eigenvalue of $A^\top A$ (or $A A^\top$) as preprocessing,
where $Q$ is the number of iterations.

The SORR formulation has been extended to sparse modeling under
Gaussian and impulsive noises \cite{suzuki23}.




\section{LiMES Model: Convexity Condition, Algorithm, and Applications}\label{sec:alime}

To give the convexity conditions for the debiased sparse modeling
and SORR in a unified way, 
we present a generalized model called ``LiMES''
and show the necessary and sufficient condition for its convexity.
Applications of the LiMES model can be classified into two categories:
type-sparse and type-robust.
We derive {\em the proximal debiasing-gradient algorithm} (which
requires no auxiliary variable) for the former type
and {\em the primal-dual debiasing algorithm} for the latter type.
We finally give a couple of other applications than debiased sparse modeling
and SORR.

Let $\genopL:\hilbertarbh\rightarrow \hilbertarbk$ be
a bounded linear operator 
from a Hilbert space $\hilbertarbh$ to another Hilbert space
$\hilbertarbk$.
The adjoint operator of $\genopL$ is denoted by $\genopL^*$.
The operator norm is then defined by
$\norm{\genopL} := \sup \{\norm{\genopL x} \mid
x\in\hilbertarbh, \norm{x}\leq 1\}$.
Given a bounded linear operator 
$\genopL:\hilbertarbh\rightarrow \hilbertarbh$,
$\genopL\succeq O$ means that
$\genopL$ is positive semidefinite, i.e.,
$\innerprod{\genopL x}{x}\geq 0$ for all $x\in\euclidspace$.
Given any bijective bounded linear operator
$\genopL:\euclidspace\rightarrow\euclidspace$
and any function $f\in\gzh$, 
it holds that
\begin{equation}
(f\circ \genopL)^*=f^*\circ (\genopL^*)^{-1}.
\label{eq:fL_conjugate} 
\end{equation}


\subsection{LiMES: A class of weakly convex functions}\label{subsec:alime}

\begin{definition}[The LiMES Model]
\label{def:limes}

Let $\hilbertx$, $\hilbertnewy$, and $\hilbertnewz$ be finite-dimensional
Hilbert spaces.
Let $\opA_1: \hilbertx\rightarrow  \hilbertnewy:x\mapsto
M_1 x + c_1$ and $\mu\in\real_{++}$,
where
$(O\neq) M_1:\hilbertx\rightarrow  \hilbertnewy$ is a bounded linear operator and 
$c_1 \in\hilbertnewy$ is a vector.
Let $(O\neq)$
$\opL:\hilbertnewz\rightarrow \hilbertnewz$ be a 
bounded linear operator\footnote{
The letter $\opL$ will be used to denote the linear operator of
LiMES, distinguished from the general linear operator $\genopL$
(which was used to denote the linear operator of LiGME in
\cite{abe_ip20}).},
$\opD:\hilbertnewz\rightarrow \hilbertnewz$
be a diagonal positive-definite operator,
and
$\opA_2:\hilbertx \rightarrow \hilbertnewz :x\mapsto \opM x + \affinec$,
where
$(O\neq)\opM:$
$\hilbertx \rightarrow \hilbertnewz $ is a bounded linear operator
and $\affinec\in\hilbertnewz$ is a vector.
Let $\Psi\in\Gamma_0(\hilbertnewz)$, which is referred to as a seed function.
The {\em linearly-involved Moreau-enhanced-over-subspace}
(LiMES) model is defined as the minimization of
the following function:
\begin{equation}
J_{\Omega}:
\hilbertx\rightarrow (-\infty,\infty]:
x\mapsto 
0.5\norm{\opA_1 x}^2 +
\mu\Psi_\opD^\opL (\opA_2 x),
\label{eq:limes_model}
\end{equation}
where $\Omega:=(\opA_1; \Psi_D^{\opL}\circ \opA_2)$, and
\begin{align}
\hspace*{-1.7em} \Psi_D^{\opL} :&~ \hilbertnewz\rightarrow 
 (-\infty,+\infty] 
\nonumber\\
\hspace*{-3em}:&~ \newz\mapsto
\Psi(\newz) - \min_{\newv\in\hilbertnewz} \big[\Psi(\newv) +
0.5\norm{\opD(\opL\newz - \newv)}^2 \big].
\label{eq:mes}
\end{align}
We refer to 
$\Psi_\opD^\opL \circ \opA_2:\hilbertx \rightarrow (-\infty,+\infty]$
as {\it the LiMES function}.
\end{definition}

Define the subspace $\mathcal{M}_1:=\range M_1^*$.
The debiased sparse modeling in
\eqref{eq:sparse_regression_mc_underdetermined}
is reproduced 
by letting
$\hilbertx:=\hilbertnewz:=\real^n$, $\hilbertnewy:=\real^m$,
 $\opA_1:=A\cdot - \outputb$ ($M_1:=A$), 
$\Psi:=\norm{\cdot}_1$,
$\opA_2:=I_n$ ($M_2:= I_n$), $\opL:=P_{\mathcal{M}_1}=P_{\mathcal{M}}=A^{\dagger} A\in\real^{n\times
n}$, and  $D:=\gamma^{-1/2}I_n$.
On the other hand, ORR in  \eqref{eq:robust_estimation_mc_loss}
is reproduced by letting
$\hilbertx:=\hilbertnewy:=\real^n$, $\hilbertnewz:=\real^m$,
$\opA_1:=I_n$,
$\Psi:=\norm{\cdot}_1$,
$\opA_2:=A\cdot - \outputb$, 
$\opL:=I_m$, and  $D:=\gamma^{-1/2}I_m$.
In the former example, here, 
the quadratic term 
$0.5\norm{\opA_1 x}^2$ of \eqref{eq:limes_model}
represents the data fidelity, and 
the second term $\mu\Psi_\opD^\opL (\opA_2 x)$ represents the penalty.
Hereafter, we shall refer to this type as {\em type-sparse}, or {\em type-S}
for short.
In the latter example, on the other hand, 
the roles of the two terms are reversed, and
we refer to this type as {\em type-robust}, or {\em type-R}.
As will be seen in Section \ref{subsec:robust_stable_LiMES}, SORR
in \eqref{eq:stable_regression} is also a particular example of the
LiMES model.
Typical roles of the linear/affine operators are summarized in 
Table \ref{table:roles_of_operators}.


\begin{table}[t!]
\caption{Typical roles of linear/affine operators}
\label{table:roles_of_operators} 
\centering
\begin{tabular}{|l|l|}
\hline
\!$\opL$ \!\!\!& preserving the convexity of the smooth term (default:
     $P_{\range M_1^*}$)\\ \hline
\!$D$\! \!\!& assigning individual weights to the variables (Section \ref{subsec:spcp})\\ \hline
\!$\opA_1$\!\!\! & used to define data fidelity for type-S applications \\ \hline
\!$\opA_2$\!\!\! & used to define data fidelity for type-R applications \\ \hline
\end{tabular}
\end{table}

\begin{table}[t!]
\caption{Mathematical notation}
\label{table:roles_of_operators} 
\centering
\begin{tabular}{|l|l|}
\hline
 \!$\iota_C$ \!\!\!& the indicator function of $C$ \\ \hline
\!$\sigma_C$ \!\!\!&  the support function of $C$\\ \hline
\!$f^*$ \!\!\!& the Fenchel conjugate of $f$ \\ \hline
\!$\norm{\cdot}_*$ \!\!\!& the dual norm of $\norm{\cdot}$\\ \hline
\!$^\gamma f$ \!\!\!& the Moreau envelope of $f$, e.g.,
$^\gamma \norm{\cdot}_1$ \\ \hline
\!${\rm Prox}_f$ \!\!\!& the proximity operator w.r.t.~$f$\\ \hline
\!$P_{\mathcal{M}}$ \!\!\!& the projection operator onto subspace $\mathcal{M}$\\ \hline
\!$\genopL^*$ \!\!\!& the adjoint of linear operator
     $\genopL$\\ \hline
\!$\Psi_D^{\opL}$ \!\!\!& the LiMES function ($\Psi_D^{I}=\Psi_D$): see \eqref{eq:mes} \\ \hline
\!$J_{\Omega}$ \!\!\!& the LiMES model for $\Omega:=(\opA_1;
     \Psi_D^{\opL}\circ \opA_2)$: see \eqref{eq:limes_model}
 \\ \hline
\end{tabular}
\end{table}


We now discuss an issue related to
the overall convexity of the function
$J_{\Omega}$
in \eqref{eq:limes_model}.
Due to the nonsingularity of $D$,
it can be verified that
\begin{align}
\hspace*{-1.7em} \Psi_D^{\opL} (\newz) 
&=\Psi(\newz) - 
\min_{\tilde{\newv}\in\hilbertnewz} \big[\Psi(D^{-1}\tilde{\newv}) +
0.5\norm{\opD\opL\newz -\tilde{\newv}}^2 \big]
\nonumber\\
&=\Psi(\newz) - \hspace*{.1em} ^1(\Psi\circ
 \opD^{-1})(\opD\opL\newz)  \label{eq:phiGL_moreau1}\\
\hspace*{-3em}&=\Psi(\newz) - 0.5\norm{\opD\opL\newz}^2
+ \hspace*{.1em} ^1(\Psi^*\circ
 \opD)(\opD\opL\newz),  \label{eq:phiGL_moreau1b}
\end{align}
where the last equality is verified by 
\eqref{eq:moreau_identity} and \eqref{eq:fL_conjugate}
together with
the self-adjointness $D^*=D$.\footnote{
The usefulness of the identity given in \eqref{eq:moreau_identity} 
in considering overall convexity
has
been witnessed already in the contexts of graph learning \cite{koyakumaru21,koyakumaru22}
and distributed optimization \cite{komuro22}.}
Here, the first and third terms of \eqref{eq:phiGL_moreau1b} are convex
functions of $z$.
Since convexity is preserved under composition
with an affine operator \cite[Proposition 8.20]{combettes},
the LiMES function $\Psi_D^{\opL} \circ \opA_2$ is
$\eta$-weakly convex if
$ 0.5\eta\norm{\cdot}^2 - 0.5\norm{\cdot}^2 \circ\opD\opL\opA_2$ is convex for some $\eta\in\real_{++}$,
or equivalently if 
$\eta I - M_2^*\opL^* \opD^2\opL M_2\succeq O~
(\Leftrightarrow \eta\geq \norm{\opD\opL M_2}^2)$.
Substituting \eqref{eq:phiGL_moreau1} into \eqref{eq:limes_model}
yields the following smooth-nonsmooth separation:
\begin{equation}
\hspace*{-.7em}
J_{\Omega}
\!=\! \underbrace{0.5\norm{\cdot}^2\circ \opA_1 \!-\! \mu ~^1(\Psi\circ \opD^{-1}) \!\circ\! \opD\opL\opA_2}_{=: F~{\rm (smooth)}} \!+  \!\!
\underbrace{\mu \Psi \!\circ\! \opA_2.}_{{\rm nonsmooth}}
\label{eq:fg_smooth_nonsmooth}
\end{equation}
Because our algorithms to be presented in Section \ref{subsec:twin_prox}
treat the smooth and nonsmooth terms separately,
both of those terms need to be convex 
for ensuring convergence to a global minimizer.
The convexity condition for the smooth part $F$ will be discussed
in Section \ref{subsec:convexity_condition},
as the nonsmooth term is automatically convex due to the convexity of $\Psi$.

For consistent notation with \cite{abe_ip20},
$\Psi_\opD := \Psi_\opD^I$ will be used when $\opL:=I$.
The question now is: {\it what is the role of 
the term
$\min_{\newv\in\hilbertnewz} \big[\Psi(\newv) +
0.5\norm{\opD(\opL\newz - \newv)}^2 \big]$ in
\eqref{eq:mes}?}
The following proposition, which generalizes Proposition
\ref{proposition:rmc_properties}, answers this question
for the case of $\opL := P_{\mathcal{M}_1}$.

\begin{proposition}
\label{proposition:limes_properties}

\begin{enumerate}
  \item [(a)]
The particular LiMES function $\Phi_{\gamma}^{P_{\mathcal{M}_1}}$
	coincides with the generalized Moreau enhanced penalty $\Psi_D$
{\rm \cite{abe_ip20}}
on the subspace $\mathcal{M}_1$; i.e.,
$\Psi_D^{P_{\mathcal{M}_1}}(z)
 =  \Psi_D^{I}(z)
=  \Psi_D(z)
= \Psi(z) 
- \min_{\newv\in\hilbertnewz} \big[\Psi(\newv) +
0.5\norm{\opD(\newz - \newv)}^2
 \big]$ for 
$z\in \mathcal{M}_1$.

 \item [(b)] Let $\opL$ satisfy $\opL = \opL \circ P_{\mathcal{M}_1}$.
Then, 
$\Phi_{\gamma}^{\opL}$
	reduces to $\Psi$ (up to constant)
on $\mathcal{M}_1^\perp$; i.e.,
$\Psi_D^{\opL}(z) = \Psi(z)
- \underbrace{\min_{\newv\in\hilbertnewz} \big[\Psi(\newv) +
0.5\norm{\opD\newv}^2 \big]}_{{\rm constant ~in ~} z}$ for 
$z\in \mathcal{M}_1^{\perp}$.
%

\end{enumerate}
\end{proposition}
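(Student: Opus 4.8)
The plan is to verify both parts directly from the definition \eqref{eq:mes} of $\Psi_D^\opL$, exploiting that $P_{\mathcal{M}_1}$ is the orthogonal projection onto $\mathcal{M}_1 = \range M_1^*$.

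For part (a), I would take $z\in\mathcal{M}_1$. Then $P_{\mathcal{M}_1} z = z$, so $\opL z = z$ when $\opL := P_{\mathcal{M}_1}$. Substituting $\opL z = z$ into \eqref{eq:mes} gives $\Psi_D^{P_{\mathcal{M}_1}}(z) = \Psi(z) - \min_{v\in\hilbertnewz}\bigl[\Psi(v) + 0.5\norm{D(z-v)}^2\bigr]$, which is exactly $\Psi_D^I(z) = \Psi_D(z)$ by definition. This is essentially a one-line observation: the projection in the Moreau-enhancement term acts as the identity on its range.

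For part (b), I would take $z\in\mathcal{M}_1^\perp$, so $P_{\mathcal{M}_1} z = 0$. Using the hypothesis $\opL = \opL\circ P_{\mathcal{M}_1}$, we get $\opL z = \opL(P_{\mathcal{M}_1} z) = \opL(0) = 0$ (recall $\opL$ is linear). Substituting $\opL z = 0$ into \eqref{eq:mes} yields $\Psi_D^\opL(z) = \Psi(z) - \min_{v\in\hilbertnewz}\bigl[\Psi(v) + 0.5\norm{D v}^2\bigr]$, and the minimum is a quantity not depending on $z$, i.e.\ a constant. This proves the claim.

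I do not anticipate a genuine obstacle here; the result is a direct consequence of unwinding the definition, and the only thing to be careful about is the bookkeeping of which operator ($\opL$ versus $P_{\mathcal{M}_1}$) is being composed with what, together with the linearity of $\opL$ ensuring $\opL(0)=0$. If anything, the mild subtlety is to state part (b) correctly as equality ``up to an additive constant'' rather than literal equality with $\Psi$, which the displayed formula already does by exhibiting the constant term explicitly. One could also note that part (a) is the special case $\opL := P_{\mathcal{M}_1}$ of a slightly more general statement, but since the proposition is phrased for that choice, the direct substitution suffices.
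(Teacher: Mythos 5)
Your argument is correct and is essentially identical to the paper's own proof: part (a) follows by substituting $\opL z = P_{\mathcal{M}_1}z = z$ into the definition (the paper uses the equivalent reformulation \eqref{eq:phiGL_moreau1}, which changes nothing), and part (b) uses $\opL z = \opL P_{\mathcal{M}_1} z = \opL 0 = 0$ exactly as in the paper. No gaps.
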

\begin{proof}
(a) The assertion can be verified by applying $P_{\mathcal{M}_1} z=z$
 for all $z\in \mathcal{M}_1$
to \eqref{eq:phiGL_moreau1} with $\opL:=P_{\mathcal{M}_1}$.\\
(b) Use $\opL z =\opL \circ P_{\mathcal{M}_1}z =\opL 0=0$,
 $\forall z\in \mathcal{M}_1^{\perp}$ in \eqref{eq:mes}.
\end{proof}

Proposition \ref{proposition:limes_properties} states,
under the use of $\opL:=P_{\mathcal{M}_1}$,
that $\Psi_D^{P_{\mathcal{M}_1}}$ is an ``exact'' Moreau-enhanced model
over the subspace $\mathcal{M}_1$
in the sense of {\em the generalized Moreau enhanced (GME) penalty}
\cite{abe_ip20}.
Note here that $\min_{\newv\in\hilbertnewz} \big[\Psi(\newv) +
0.5\norm{\opD(\newz - \newv)}^2
 \big]$ can be regarded as
{\em a generalized Moreau envelope} of $\Psi$.
In all applications presented in this article,
$\opL:=P_{\mathcal{M}_1}$ will be used.
Nevertheless, we would not exclude the possibility of using 
other choices of $\opL$ such as those presented in \cite{lanza19,abe_ip20},
although the Moreau enhancement over $\mathcal{M}_1$ 
could be ``inexact'' in this case
(see Example \ref{example:alime_penalty} in Section \ref{subsec:examples}).
Proposition \ref{proposition:limes_properties}(b) states that 
$\Psi_D^{\opL}$ coincides with $\Psi$ over $\mathcal{M}_1^\perp$ up to constant under the
condition (which $\opL:=P_{\mathcal{M}_1}$ satisfies).

As shown in Section \ref{subsec:sparse_regression},
the PMC penalty preserves the convexity of the smooth part 
even when $A^\top A$ is singular,
and at the same time it enjoys the mixed nature of separability and nonseparability.
Such a penalty can be generated systematically
by the LiMES function with $\opL:=P_{\mathcal{M}_1}$
given a separable function $\Psi$.
We emphasize here that the diagonality of $D$ induces
the separability of the function
$\Psi(\newv) +
0.5\norm{\opD(\opL\newz - \newv)}^2$ in \eqref{eq:mes}
in terms of $v$, which makes the gradient computation
of the smooth part $F$ in \eqref{eq:fg_smooth_nonsmooth} simple
(see Remark \ref{remark:separability_of_rmc}).
In particular,
for typical type-S applications
(such as debiased sparse modeling and SPCP to be presented in Section \ref{subsec:spcp}),
$\opA_2$ is also a diagonal operator,
and thus 
the computationally efficient proximal gradient algorithm
can be applied which requires no auxiliary vector
to compute the LiMES model
(see Section \ref{subsec:twin_prox}).

To show an active role of the diagonal operator $D$ briefly,
suppose that the variable vector $x\in\hilbertx$ consists of several
subvectors.
In this case, $\opD$ can be used to give an individual weight to the regularizer
of each subvector (see Section \ref{subsec:spcp}).

\subsection{Examples of LiMES function: penalty and loss}
\label{subsec:examples}

In this subsection, we simply let $\opD:=\gamma^{-1/2}I$ for $\gamma\in\real_{++}$,
which reduces \eqref{eq:phiGL_moreau1} to
\begin{equation}
\hspace*{0em} 
\Psi_D^{\opL}(z) =
\Psi_{\gamma^{-1/2}I}^{\opL} (z)
= \Psi(z) - \hspace*{.1em} ^{\gamma}\Psi(\opL z).  \label{eq:phiGL_moreau2}
\end{equation}
Some examples of the LiMES function are listed below.

\begin{example}[LiMES penalty]
\label{example:alime_penalty}
We let $\hilbertx:=\real^n$ and $\hilbertnewz:=\real^m$ in (a) -- (d) below.
 \begin{enumerate}
  \item [(a)] (MC penalty {\rm \cite{zhang,selesnick}})
Let $\Psi:=\norm{\cdot}_1$, $\opL:=\opA_2:=I_n$ ($n=m$).
Then, $(\norm{\cdot}_1)_{\gamma^{-1/2}I}
:= \norm{\cdot}_1 - \hspace*{0em}^{\gamma}\norm{\cdot}_1 = \Phi_{\gamma}^{\rm MC}$.
In particular, the MC penalty, or 
$\theta_{\gamma} (\norm{\cdot}_1)_{\gamma^{-1/2}I}$ more specifically,
gives a parametric bridge between $\norm{\cdot}_0$ and $\norm{\cdot}_1$
{\rm \cite{abe_ip20}} (see Remark \ref{remark:bridge} for definition of $\theta_{\gamma}$).



  \item [(b)] { (PME and PMC)}
Let $\opL:=P_{\mathcal{M}_1}$ and $\opA_2:=I_n$ ($n=m$), where
$\mathcal{M}_1\subset\real^n$ is a linear subspace of $\real^n$.
Then, $\Psi_{\gamma^{-1/2}I}^{P_{\mathcal{M}_1}}
= \Psi - \hspace*{.1em}^{\gamma}\Psi ~\circ~ P_{\mathcal{M}_1}$,
which we call {\em the projective Moreau enhanced (PME) function}.
In particular, letting $\Psi:=\norm{\cdot}_1$ yields
$(\norm{\cdot}_1)_{\gamma^{-1/2}I}^{P_{\mathcal{M}_1}}
:= \norm{\cdot}_1 - \hspace*{0em}^{\gamma}\norm{\cdot}_1 \circ
	P_{\mathcal{M}_1}
= \Phi_{\gamma}^{\rm PMC}$,
which is the PMC penalty presented in 
Section	\ref{subsec:sparse_regression}.
An alternative choice of $\opL$ to the $P_{\mathcal{M}_1}$ used in 
$\Phi_{\gamma}^{\rm PMC}$
is given by
$\opL:= \sqrt{\gamma/\mu} V 
{\rm diag}(\alpha_1^{1/2},\alpha_2^{1/2},\cdots,\alpha_n^{1/2})
\Sigma_{A^\top A}^{1/2} V^\top$ (cf.~{\rm \cite{lanza19}}),
where $\alpha_i\in[0,1]$, $i=1,2,\cdots,n$, 
are tuning parameters,
and $A^\top A = V \Sigma_{A^\top A} V^\top$
is an eigenvalue decomposition
with some orthogonal matrix $V\in\real^{n\times n}$
and some diagonal matrix $\Sigma_{A^\top A}\succeq O$.
(This choice actually satisfies the convexity condition to be presented
	in Section \ref{subsec:convexity_condition}.)

  \item [(c)] { (MC-W)}
Let $\Psi:=\norm{\cdot}_1$, $\opL:=I_n$ ($n=m$), and $\opA_2:=\mathscr{W}$,
where $\mathscr{W}$ is the popular wavelet transform {\rm \cite{mallat_book}}.
Then, $(\norm{\cdot}_1)_{\gamma^{-1/2}I} \circ \mathscr{W}$
is {\em the MC wavelet (MC-W)}.

  \item [(d)] (MC-TV)
Let $\Psi:=\norm{\cdot}_1$, $\opL:=I_{n-1}$
($m=n-1$), and
 $\opA_2:=\mathscr{D}_n:= [0_{n-1}~I_{n-1}] -[I_{n-1} ~ 0_{n-1}] \in\real^{(n-1)\times n}$
 be the first-order differential operator,
where $0_n:=[0,0,\cdots,0]^\top\in\real^n$ for any $n\in\Natural^*$.
Then, $(\norm{\cdot}_1)_{\gamma^{-1/2}I} \circ \mathscr{D}_n$
has been used in 
{\em MC total-variation (MC-TV) denoising} {\rm \cite{du18}}.

  \item [(e)] { (MEN)}
Let $\hilbertx:=\hilbertnewz:=\real^{n\times m}$, and
$\Psi:=\norm{\cdot}_{\rm nuc}$, which is the nuclear norm (the sum of the singular
	values) of a matrix,
and $\opL:=\opA_2:=I$.
Then, $ (\norm{\cdot}_{\rm nuc})_{\gamma^{-1/2}I}$ gives
{\em the Moreau enhanced nuclear-norm (MEN)}.
In particular, the normalized version $2\gamma^{-1}(\norm{\cdot}_{\rm nuc})_{\gamma^{-1/2}I}$
gives a parametric bridge between the rank of matrix and
	$\norm{\cdot}_{\rm nuc}$
{\rm \cite{abe_ip20}}.
The MEN penalty will be used in Section \ref{subsec:spcp}
for SPCP.

 \end{enumerate}
\end{example}


\begin{example}[LiMES loss]
\label{example:alime_loss}
We let $\hilbertx:=\real^n$ and $\hilbertnewz:=\real^m$
in (a) -- (c) below, and
$\inputA\in\real^{m\times n}$ and $\outputb\in\real^m$
in (a), (c), and (d).

 \begin{enumerate}

  \item [(a)] { (MC loss)}
Let $\Psi:=\norm{\cdot}_1$, $\opL:=I_m$, and
$\opA_2:\real^n\rightarrow\real^m:x\mapsto \inputA x -\outputb$.
Then, $(\norm{\cdot}_1)_{\gamma^{-1/2}I} (\inputA \cdot-\outputb) = 
\Phi_{\gamma}^{\rm MC}(\inputA \cdot-\outputb)$ gives an MC loss, which has been
studied in Section \ref{subsec:robust_regression} for robust
	regression.


  \item [(b)] { (ME-hinge loss)}
Let 
$\Psi:=\sigma_{[-1,0]}:\real\rightarrow\real:z\mapsto
	\sup_{v\in[-1,0]} vz= \max\{0,-z\}$,
$\opL:=I_m=1$ ($m:=1$), and
$\opA_2:\real^n\rightarrow\real:x\mapsto a^\top x - 1$
for some given $a\in\real^n$ such that $\norm{a}_2=1$.
Then, 
$\Psi_{\rm hinge}:=\Psi\circ\opA_2:\hilbertx \rightarrow \real:
x\rightarrow \max\{0,1-a^\top x\}$
is the hinge loss function, and 
we call $ (\Psi_{\rm hinge})_{\gamma^{-1/2}I}
= \Psi_{\gamma^{-1/2}I}\circ\opA_2$
{\em the Moreau-enhanced hinge (ME-hinge) loss function}.
See Proposition \ref{proposition:me_hinge_equivalence} for the second equality here.
The proximity operator of $\Psi_{\rm hinge}$ is given for instance in
	{\rm \cite[Example 24.37]{combettes}}.
The ME-hinge loss will be used in Section \ref{subsec:robust_classification} for
	robust classification.

  \item [(c)] { (MC-W loss)}
Let $\Psi:=\norm{\cdot}_1$, $\opL:=I_m$, and 
$\opA_2:\real^{n}\rightarrow\real^m:x\mapsto \mathscr{W}(\inputA x - \outputb)$.
Then, $ (\norm{\cdot}_1)_{\gamma^{-1/2}I} (\mathscr{W}(\inputA\cdot - \outputb))$
gives an MC-W loss.

  \item [(d)] { (MC-TV loss)}
Let $\hilbertx:=\real^n$, $\hilbertnewz:=\real^{m-1}$,
$\Psi:=\norm{\cdot}_1$, $\opL:=I_{m-1}$, and 
$\opA_2:\real^{n}\rightarrow\real^{m-1}:x\mapsto \mathscr{D}_m(\inputA x -\outputb)$.
Then, 
$ (\norm{\cdot}_1)_{\gamma^{-1/2}I} (\mathscr{D}_m(\inputA\cdot -\outputb))$
gives an MC-TV loss.

\item [(e)] { (MEN loss)}
Let $\hilbertx:=\hilbertnewz:=\real^{n\times m}$,
$\Psi:=\norm{\cdot}_{\rm nuc}$, 
$\opL:=I$, and
$\opA_2:\real^{n\times m}\rightarrow\real^{n\times m}:X\mapsto X - Y$ 
given $Y\in\real^{n\times m}$.
Then, $(\norm{\cdot}_{\rm nuc})_{\gamma^{-1/2}I} \circ (\cdot - Y)$ 
gives a MEN loss.

 \end{enumerate}
\end{example}

\subsection{Convexity condition for the smooth part of  \eqref{eq:fg_smooth_nonsmooth}}
\label{subsec:convexity_condition}

We discuss the condition for convexity of the smooth part $F$,
which immediately implies the overall convexity of
$J_{\Omega}$
since the nonsmooth term $\mu\Psi\circ \opA_2$ is clearly convex.
By \eqref{eq:phiGL_moreau1} and \eqref{eq:phiGL_moreau1b},
the smooth part of \eqref{eq:fg_smooth_nonsmooth} can be rewritten
as
\begin{align}
\hspace*{-1.5em} F = 0.5 (\norm{\opA_1\cdot}^2
- \mu\norm{\opD\opL\opA_2\cdot}^2)
+
\mu \hspace*{.3em} ^1(\Psi^*\circ \opD) \circ \opD\opL\opA_2.
\label{eq:F}
\end{align}
Since the third term here is automatically convex,
$F$ is convex if the sum of the first two terms is convex; i.e., 
$F$ is convex if
\begin{equation*}
(\spadesuit)~~~ M_1^*M_1 - \mu \opM^*\opL^* D^2 \opL \opM  \succeq O.
\end{equation*}
In general, 
($\spadesuit$) is not a necessary condition.
When the third term of \eqref{eq:F} is strongly convex,
for instance,
$F$ could be convex even if 
$0.5\norm{\opA_1\cdot}^2
- 0.5\mu\norm{\opD\opL\opA_2\cdot}^2$
is nonconvex.
It can actually be observed that
the function $^1(\Psi^*\circ \opD)$ is strongly convex if and only if 
$\Psi$ is smooth (i.e., Fr\'echet differentiable with Lipschitz-continuous gradient)
due to {\rm \cite[Theorem 18.15]{combettes}}
together with 
$(^1(\Psi^*\circ \opD))^*= \Psi^{**} \circ \opD^{-1} +
0.5\norm{\cdot}^2
 = \Psi  \circ \opD^{-1} +  0.5\norm{\cdot}^2$
{\rm \cite[Proposition 13.24]{combettes}}.
Typical seed functions $\Psi$ including those presented in Section
\ref{subsec:examples} are nonsmooth, and
the above observation indicates that 
the third term of \eqref{eq:F} is not strongly convex
for such nonsmooth $\Psi$s.
In fact, ($\spadesuit$) is a necessary and sufficient condition
in those cases as well as many other cases.

To present the formal result regarding the convexity condition for $F$, 
we define the support function
$\Gamma_0(\hilbertnewz)\ni \sigma_C:z\mapsto \sup_{v\in C}\innerprod{z}{v}$ of
a nonempty closed convex set $C\subset \hilbertnewz$,
which is the conjugate function of the indicator function
$\Gamma_0(\hilbertnewz)\ni \iota_C:z\mapsto \left\{
\begin{array}{cc}
0 & \mbox{ if } z\in C \\
+\infty & \mbox{ if } z\not\in C,
\end{array}
\right.$
and hence $\sigma_C^*=\iota_C^{**}=\iota_C$.
Given an arbitrary norm $|||\cdot |||$
defined on the vector space $\hilbertnewz$,
the support function
$||| \cdot |||_*:=\sigma_C$  of its level set
$C:={\rm lev}_{\leq 1} |||\cdot |||$
is the {\em dual norm} of $|||\cdot |||$
{\rm \cite{horn_johnson13,boyd04_convexbook}}.
It is known that the dual of the dual norm is the original norm, i.e.,
$|||\cdot|||_{**}= |||\cdot|||$.\footnote{This is not true in general
in infinite dimensional vector spaces.}
An arbitrary norm defined on $\hilbertnewz$ can therefore be
represented as the support function of the level set of its dual norm. 

Given any bounded linear operator 
$\genopL:\hilbertarbh\rightarrow \hilbertarbk$
from a Hilbert space $\hilbertarbh$ to another Hilbert space
$\hilbertarbk$
and any subsets $C_{\hilbertarbh}\subset \hilbertarbh$
and $C_{\hilbertarbk}\subset \hilbertarbk$,
we define
$\genopL(C_{\hilbertarbh}):=\{\genopL x \mid x\in C_{\hilbertarbh}\} \subset \hilbertarbk$
and 
$\genopL^{-1}(C_{\hilbertarbk}):=\{x\in \hilbertarbh \mid \genopL x\in C_{\hilbertarbk}\}$.

\begin{proposition}[Convexity \hspace*{-.2em}condition
 \hspace*{-.1em}for \hspace*{-.1em}smooth \hspace*{-.1em}part
 \hspace*{-.1em}of \hspace*{-.1em}\eqref{eq:fg_smooth_nonsmooth}]
\label{proposition:positive_definite_case_necessary_condition}
~\vspace*{-1.2em}
\begin{enumerate}
 \item [(a)] $F \in \Gamma_0(\hilbertx)$ if 
condition ($\spadesuit$) is satisfied.

 \item [(b)] 
Let $\Psi:=\sigma_C$
with a nonempty closed convex set $C\subset \hilbertnewz$.
Then, the following statements hold.
\begin{enumerate}
 \item [(i)]
 Given any $x\in\hilbertx$, the following equivalence holds:
\begin{align}
\hspace*{-2em}& ~F(x) = 
0.5\norm{\opA_1 x}^2
- 0.5\mu\norm{\opD\opL\opA_2 x}^2
\nonumber\\
\hspace*{-2em}\Leftrightarrow &~ \hspace*{.1em}
^1(\sigma_C^*\circ \opD) (\opD\opL\opA_2 x) = 0 \nonumber\\
\hspace*{-1em}\Leftrightarrow  &~  x\in K_C := \{\hat{x}\in\hilbertx\mid D^2 \opL\opA_2
       \hat{x} \in C\}.
\end{align}

 \item [(ii)] Assume that 
\begin{equation}
 \interior K_C \neq \emptyset,
\end{equation}
where $\interior K_C $ is the interior of $K_C$.
Then, $F \in \Gamma_0(\hilbertx)$ if and only if
($\spadesuit$) is satisfied.
\end{enumerate}

\end{enumerate}

\end{proposition}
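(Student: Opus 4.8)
The plan is to treat part~(a) as essentially immediate from the decomposition \eqref{eq:F} and to devote the real work to part~(b). For part~(a), observe that in \eqref{eq:F} the third term $\mu\,{}^1(\Psi^*\circ\opD)\circ\opD\opL\opA_2$ is a Moreau envelope (hence convex in $\hilbertnewz$) composed with the affine operator $\opD\opL\opA_2$, so it lies in $\Gamma_0(\hilbertx)$ by \cite[Proposition~8.20]{combettes}; and the quadratic form $0.5(\norm{\opA_1\cdot}^2-\mu\norm{\opD\opL\opA_2\cdot}^2)$ has Hessian $M_1^*M_1-\mu M_2^*\opL^*D^2\opL M_2$, which is convex precisely when ($\spadesuit$) holds. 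A sum of convex functions is convex, and properness/lower-semicontinuity are inherited, giving $F\in\Gamma_0(\hilbertx)$.

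For part~(b)(i) I would specialize $\Psi:=\sigma_C$ and unwind the Moreau-envelope term. Since $\sigma_C$ and $\norm{\cdot}^2$ are nonnegative, ${}^1(\sigma_C^*\circ\opD)(w)\ge 0$ for all $w$, with equality iff the minimization defining the envelope is attained at the argument itself with zero penalty. Using $\sigma_C^*=\iota_C$ (noted in the text) together with the identity $(^1(\sigma_C^*\circ\opD))^*=\iota_C\circ\opD^{-1}+0.5\norm{\cdot}^2$ from \cite[Proposition~13.24]{combettes}, one computes ${}^1(\sigma_C^*\circ\opD)(\opD\opL\opA_2 x)=0$ iff $\prox_{\sigma_C^*\circ\opD}(\opD\opL\opA_2 x)=\opD\opL\opA_2 x$ and $\opD\opL\opA_2 x\in\opD(C)$, i.e.\ iff $D^2\opL\opA_2 x\in C$; and the first chain of equivalences in the statement follows because, by \eqref{eq:F}, $F(x)$ equals the pure quadratic part exactly when this third term vanishes. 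This gives $F(x)=0.5\norm{\opA_1 x}^2-0.5\mu\norm{\opD\opL\opA_2 x}^2 \Leftrightarrow x\in K_C$.

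For part~(b)(ii), the ``if'' direction is part~(a). For ``only if'', suppose ($\spadesuit$) fails, so there is a direction $d\in\hilbertx$ with $\innerprod{(M_1^*M_1-\mu M_2^*\opL^*D^2\opL M_2)d}{d}<0$. The idea is to pick a base point $x_0\in\interior K_C$ (nonempty by assumption) and restrict $F$ to the line $t\mapsto x_0+td$. By part~(b)(i), $F$ coincides with the quadratic $q(t):=0.5\norm{\opA_1(x_0+td)}^2-0.5\mu\norm{\opD\opL\opA_2(x_0+td)}^2$ for all $t$ with $x_0+td\in K_C$, and since $x_0$ is interior this holds on an open interval of $t$ around $0$; on that interval $q''(t)=\innerprod{(M_1^*M_1-\mu M_2^*\opL^*D^2\opL M_2)d}{d}<0$, so $F$ restricted to this segment is strictly concave on a nondegenerate interval, hence not convex, contradicting $F\in\Gamma_0(\hilbertx)$. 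Therefore $F\in\Gamma_0(\hilbertx)$ forces ($\spadesuit$).

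The main obstacle I anticipate is making part~(b)(i) fully rigorous: one must be careful that the Moreau envelope term vanishes \emph{exactly} on $K_C$ and not merely on its closure or a larger/smaller set, which requires correctly handling $\prox_{\sigma_C^*\circ\opD}$ via \cite[Proposition~13.24]{combettes} and the (possibly non-self-adjoint-looking but here self-adjoint) operator $\opD$, including the bookkeeping of whether the condition reads $\opD\opL\opA_2 x\in\opD(C)$ or $D^2\opL\opA_2 x\in C$. The interior hypothesis in (ii) is precisely what lets the local-quadratic argument run on a genuine open segment, so no delicate boundary analysis is needed once (i) is in hand.
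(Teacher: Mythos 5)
Your proof takes essentially the same route as the paper: part (a) as a sum of convex terms in the decomposition \eqref{eq:F}, part (b)(i) by showing the third term of \eqref{eq:F} vanishes exactly on $K_C$, and part (b)(ii) by noting that on a nonempty open subset of $K_C$ the function $F$ equals the pure quadratic, which cannot be convex when ($\spadesuit$) fails—your line-segment argument is just the spelled-out version of the paper's one-line necessity claim. The only bookkeeping slip is in (b)(i): since $\sigma_C^*\circ \opD=\iota_C\circ \opD=\iota_{\opD^{-1}(C)}$, the envelope equals $0.5\, d^2\bigl(\cdot,\opD^{-1}(C)\bigr)$ and the vanishing condition is $\opD\opL\opA_2 x\in \opD^{-1}(C)$ (not $\opD(C)$), which is precisely your final, correct condition $D^2\opL\opA_2 x\in C$.
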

\begin{proof}
\noindent (a) 
It is clear under ($\spadesuit$) that
$
0.5\norm{\opA_1 \cdot}^2
- 0.5\mu\norm{\opD\opL\opA_2 \cdot}^2\in\Gamma_0(\hilbertx)$.
It can also be verified that
$\Psi\in\Gamma_0(\hilbertx) \Rightarrow 
\Psi^*\in\Gamma_0(\hilbertx) \Rightarrow 
\hspace*{0em}^1(\Psi^*\circ \opD) \circ
 \opD\opL\opA_2\in\Gamma_0(\hilbertx)$.

\noindent (b.i) For $v\in\hilbertnewz$, it can be verified that
\begin{align*}
\hspace*{-1.5em} ^1(\sigma_C^*\circ \opD)(v)
 = &~ \min_{\newz\in\hilbertnewz}~ 
\left[\iota_C(\opD \newz) + 
0.5\norm{v- \newz}^2
\right] \nonumber\\
 = &~ \!\!\min_{\newz\in \opD^{-1} (C)} \! 
0.5\norm{v- \newz}^2
 =: 0.5 d^2(v, \opD^{-1} (C)).
\end{align*}
It follows thus that
\hspace*{.1em}$^1(\sigma_C^*\circ \opD) \circ \opD\opL\opA_2
= 0.5 d^2(\opD\opL\opA_2 \cdot, \opD^{-1} (C))$, and
using this equality in \eqref{eq:F}
verifies that
$F(x) = 
0.5\norm{\opA_1 x}^2
- 0.5\mu\norm{\opD\opL\opA_2 x}^2\Leftrightarrow\hspace*{.1em}
^1(\sigma_C^*\circ \opD) (\opD\opL\opA_2 x) = 0 
\Leftrightarrow D \opL\opA_2 x \in D^{-1}(C)
\Leftrightarrow D^2 \opL\opA_2 x \in C
\Leftrightarrow x\in K_C$.

\noindent (b.ii)
Since the third term of \eqref{eq:F} vanishes over $\interior K_C \neq
\emptyset$ by 
Proposition
\ref{proposition:positive_definite_case_necessary_condition}(b.i), 
$F$ is nonconvex if condition ($\spadesuit$) is unsatisfied.
This implies the necessity of ($\spadesuit$).
The sufficiency is verified already in 
Proposition
\ref{proposition:positive_definite_case_necessary_condition}(a).
\end{proof}

Proposition \ref{proposition:positive_definite_case_necessary_condition}
shows the situation under which ($\spadesuit$) is 
a necessary and sufficient condition.
The necessity implies that the condition cannot be weaker, or, 
in other words, the parameter $\mu$ cannot exceed the upper
bound obtained from ($\spadesuit$).
We remark that
the diagonality and positive definiteness
imposed implicitly on $D$ in Proposition
\ref{proposition:positive_definite_case_necessary_condition}
can be relaxed straightforwardly by solely imposing bijectivity.

\begin{lemma}
\label{lemma:implication_open_mapping_theorem}
Let $\genopL:\hilbertx\rightarrow \hilbertnewz$ be a
bounded linear operator.
Given a nonempty set $(\emptyset\neq )C\subset \hilbertnewz$ and a point $\hat{x}\in\hilbertx$,
it holds that $\genopL\hat{x}\in \interior C$ implies $\hat{x}\in
 \interior \genopL^{-1}(C)$.
If $\genopL$ is surjective, 
$\genopL\hat{x} \in \interior C \Leftrightarrow \hat{x}\in \interior
 \genopL^{-1}(C)$.
\end{lemma}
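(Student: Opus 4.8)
The plan is to establish the two implications separately: the forward one, $\genopL\hat{x}\in\interior C\Rightarrow\hat{x}\in\interior\genopL^{-1}(C)$, from mere continuity of $\genopL$, and the reverse one, valid under surjectivity, from the open mapping theorem. First I would treat the forward implication. Since $\genopL$ is linear and bounded it is continuous, so the preimage $\genopL^{-1}(\interior C)$ of the open set $\interior C$ is open in $\hilbertx$. This open set contains $\hat{x}$ (because $\genopL\hat{x}\in\interior C$) and is contained in $\genopL^{-1}(C)$ (because $\interior C\subseteq C$), so $\hat{x}$ is an interior point of $\genopL^{-1}(C)$. Surjectivity is not needed here, nor is convexity of $C$; the argument works for an arbitrary nonempty $C$.

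For the reverse implication I would assume $\genopL$ surjective and $\hat{x}\in\interior\genopL^{-1}(C)$, so that there is an open set $U$ with $\hat{x}\in U\subseteq\genopL^{-1}(C)$. By the open mapping theorem---applicable because $\hilbertx$ and $\hilbertnewz$ are finite-dimensional, hence Banach, and $\genopL$ is a surjective bounded linear operator---the image $\genopL(U)$ is open. It contains $\genopL\hat{x}$, and $\genopL(U)\subseteq C$ directly from the definition $\genopL^{-1}(C)=\{x\in\hilbertx\mid\genopL x\in C\}$; hence $\genopL\hat{x}\in\interior C$. Together with the forward implication this yields the stated equivalence.

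There is essentially no hard step here; the only thing worth making explicit is why the open mapping theorem applies, and one can in fact sidestep the general Banach-space statement. The elementary finite-dimensional substitute is that a surjective linear map carries the open unit ball onto a bounded, convex, symmetric, full-dimensional set, which therefore contains an open ball about the origin; consequently $\genopL(B(x_0,r))\supseteq\genopL x_0+B(0,\delta r)$ for some fixed $\delta>0$ and every ball $B(x_0,r)$, giving openness of $\genopL(U)$. A still more hands-on alternative is to fix a linear complement $N$ of $\ker\genopL$, note that $\genopL|_N\colon N\to\hilbertnewz$ is a linear isomorphism and hence a homeomorphism, and use it to produce, from the slice of $U$ through $\hat{x}$ parallel to $N$, an open neighbourhood of $\genopL\hat{x}$ contained in $C$; that route avoids invoking the open mapping theorem altogether.
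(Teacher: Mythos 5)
Your proposal is correct and follows essentially the same route as the paper: the forward implication from continuity of the bounded operator (the paper phrases this with an explicit ball of radius $\epsilon/\norm{\genopL}$, you phrase it via the preimage of the open set $\interior C$ — same substance), and the reverse implication under surjectivity from the open mapping theorem applied to an open neighbourhood inside $\genopL^{-1}(C)$. The elementary finite-dimensional substitutes you sketch for the open mapping theorem are a nice optional extra but do not change the argument's structure.
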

\begin{proof}
We denote by
$\ball(x,\epsilon):= \{u\in\hilbertx \mid
\norm{u-x}<\epsilon \}$
an open ball centered at $x\in\hilbertx$ with radius
$\epsilon\in\real_{++}$.
Assume that $\genopL\hat{x}\in\interior C$.
Then, there exists some $\epsilon\in\real_{++}$ such that
$\ball(\genopL\hat{x},\epsilon)\subset C$.
It can then be shown straightforwardly that
$\ball(\hat{x},\epsilon/\norm{\genopL})\subset \genopL^{-1} (C)$,
and hence $\hat{x}\in\interior \genopL^{-1} (C)$.
The converse implication in the equivalence part
is an implication of the well-known open mapping theorem
 \cite{kreyszig78}.\footnote{
The open mapping theorem states that, if a bounded linear operator
$\genopL:\hilbertx\rightarrow \hilbertnewz$ is surjective, it maps an open
 set in $\hilbertx$ to an open set in $\hilbertnewz$.}
To see this, assume that $\genopL$ is surjective and 
that $\hat{x}\in\interior \genopL^{-1} (C)$.
Then, there exists an $\epsilon\in\real_{++}$ such that
$\ball(\hat{x},\epsilon)\subset \genopL^{-1} (C)$, and
the image $\genopL(\ball(\hat{x},\epsilon))$ is an open set
due to the open mapping theorem.
The inclusion $\genopL\hat{x} \in
 \genopL(\ball(\hat{x},\epsilon))\subset C$ due to definition of
 inverse mapping
thus implies $\genopL\hat{x}\in\interior
 \genopL (\ball(\hat{x},\epsilon)) \subset \interior C$.
\end{proof}

The following lemma gives a way of checking 
the nonemptiness condition of $\interior K_C$ for necessity in
Proposition
\ref{proposition:positive_definite_case_necessary_condition}.

\begin{lemma}
\label{lemma:intKC_condition} 
Consider the following statements:
(i) $\interior K_C \neq \emptyset$,
(ii) $\interior C \neq \emptyset$, and
(iii) $ D^2 \opL \opA_2 \hat{x}\in\interior C \neq \emptyset$
for some $\hat{x}\in\hilbertx$.
Then, 
(iii) $\Rightarrow$ (i).
If $\range (\opL \opM) =\hilbertnewz$,
(i) $\Leftrightarrow$ (ii).
\end{lemma}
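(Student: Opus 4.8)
The plan is to prove the two implications separately, using Lemma~\ref{lemma:implication_open_mapping_theorem} as the main tool, applied with $\genopL := D^2\opL\opM$ (equivalently, one may compose with the affine shift coming from $c_2$; since translation by a fixed vector preserves interiors, the affine term is harmless). Recall that $K_C = \{\hat x \in \hilbertx \mid D^2\opL\opA_2 \hat x \in C\} = (D^2\opL\opM)^{-1}(C - D^2\opL c_2)$, so $K_C$ is exactly a preimage of a translated copy of $C$ under the bounded linear operator $D^2\opL\opM$. Since $D$ is diagonal and positive-definite, hence bijective, and $\opL\opM$ is bounded linear, $D^2\opL\opM$ is bounded linear, and its range equals $D^2(\range(\opL\opM)) = \range(\opL\opM)$ because $D^2$ is bijective; thus the surjectivity hypothesis $\range(\opL\opM) = \hilbertnewz$ is equivalent to surjectivity of $D^2\opL\opM$. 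This reduction is the one genuine idea; everything after it is bookkeeping.

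First I would show (iii) $\Rightarrow$ (i). Assume $D^2\opL\opA_2\hat x \in \interior C$ for some $\hat x$, so in particular $\interior C \neq \emptyset$. Translating, $D^2\opL\opM \hat x \in \interior(C - D^2\opL c_2)$, and applying the first (unconditional) half of Lemma~\ref{lemma:implication_open_mapping_theorem} with $\genopL := D^2\opL\opM$ gives $\hat x \in \interior (D^2\opL\opM)^{-1}(C - D^2\opL c_2) = \interior K_C$. Hence $\interior K_C \neq \emptyset$, which is (i).

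Next I would prove the equivalence (i) $\Leftrightarrow$ (ii) under the surjectivity assumption $\range(\opL\opM) = \hilbertnewz$. For (ii) $\Rightarrow$ (i): if $\interior C \neq \emptyset$, pick $z_0 \in \interior C$; by surjectivity of $D^2\opL\opM$ there is $\hat x$ with $D^2\opL\opM\hat x = z_0 - D^2\opL c_2$, i.e.\ $D^2\opL\opA_2\hat x = z_0 \in \interior C$, which is statement (iii), and we just showed (iii) $\Rightarrow$ (i). For (i) $\Rightarrow$ (ii): suppose $\hat x \in \interior K_C$; since $D^2\opL\opM$ is surjective, the equivalence clause of Lemma~\ref{lemma:implication_open_mapping_theorem} applies and yields $D^2\opL\opM\hat x \in \interior(C - D^2\opL c_2)$, i.e.\ $D^2\opL\opA_2\hat x \in \interior C$, so $\interior C \neq \emptyset$, which is (ii) (and in fact (iii)). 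The only subtlety to get right — and the step I would be most careful about — is the handling of the affine shift $c_2$ and the confirmation that $D^2\opL\opM$ inherits surjectivity from $\opL\opM$; once the problem is framed as ``preimage of a translate of $C$ under a bounded linear map,'' the statement follows immediately from the lemma, so I do not anticipate a serious obstacle beyond stating these reductions cleanly.
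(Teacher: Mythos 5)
Your proposal is correct and follows essentially the same route as the paper: both reduce $K_C$ to the preimage $(D^2\opL\opM)^{-1}(C-D^2\opL\affinec)$ and apply Lemma~\ref{lemma:implication_open_mapping_theorem} (its unconditional half for (iii)~$\Rightarrow$~(i), its equivalence half under surjectivity for (i)~$\Leftrightarrow$~(ii), with (ii)~$\Rightarrow$~(iii) obtained by solving $D^2\opL\opM\hat{x}=z_0-D^2\opL\affinec$). Your explicit remark that surjectivity of $D^2\opL\opM$ follows from that of $\opL\opM$ via bijectivity of $D^2$ is a small detail the paper leaves implicit, but it is the same argument.
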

\begin{proof}
By Lemma \ref{lemma:implication_open_mapping_theorem},
(iii) $\Rightarrow$
$\exists\hat{x}\in\hilbertx$,
$D^2 \opL \opM\hat{x}\in \interior (C-D^2\opL \affinec)$
$\Rightarrow$
$\exists\hat{x}\in\hilbertx$,
$\hat{x}\in \interior (D^2 \opL \opM)^{-1}(C-D^2\opL \affinec)=\interior K_C$
$\Rightarrow$
(i).
Here, $C - D^2\opL \affinec:=\{ z-D^2\opL \affinec \mid z \in C
 \}\subset \hilbertnewz$.
Suppose now that $\range (\opL \opM) =\hilbertnewz$.
Then, $D^2 \opL \opM$ is surjective, and it follows 
with Lemma \ref{lemma:implication_open_mapping_theorem}
that
(ii)
$\Rightarrow$
(iii)
$\Rightarrow$
(i)
$\Rightarrow$
$\exists\hat{x}\in\hilbertx$,
$D^2 \opL \opM\hat{x}\in \interior (C-D^2\opL \affinec)$
$\Rightarrow$
$\exists\hat{x}\in\hilbertx$,
$D^2 \opL (\opM\hat{x} + \affinec) \in \interior C$ 
$\Rightarrow$
(ii).
\end{proof}

Combining Proposition
\ref{proposition:positive_definite_case_necessary_condition}
and Lemma \ref{lemma:intKC_condition} gives the following corollary.

\begin{corollary}
\label{corollary:necessary_sufficient}
Let $\Psi :=|||\cdot|||$.
Assume that one of the following conditions are satisfied:
(i) $\affinec=0$,
(ii) $\range \opM = \hilbertnewz$, or
(iii) $\opA_2 \hat{x}=0$ for some $\hat{x}\in\hilbertx$.
Then, $F \in \Gamma_0(\hilbertx)$ if and only if
condition ($\spadesuit$) is satisfied. 
\end{corollary}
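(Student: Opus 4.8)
The plan is to collapse all three hypotheses to one common situation and then quote Proposition~\ref{proposition:positive_definite_case_necessary_condition}(b.ii) together with Lemma~\ref{lemma:intKC_condition}. First I would use the representation of an arbitrary norm discussed just before Proposition~\ref{proposition:positive_definite_case_necessary_condition}: write $\Psi = |||\cdot||| = \sigma_C$ with $C := {\rm lev}_{\leq 1}|||\cdot|||_*$ the closed unit ball of the dual norm. In the finite-dimensional setting this $C$ is a neighbourhood of the origin; precisely, $\{z\in\hilbertnewz : |||z|||_* < 1\} \subset C$, so $0\in\interior C$ and in particular $\interior C\neq\emptyset$. This places us in the hypotheses of Proposition~\ref{proposition:positive_definite_case_necessary_condition}(b), and its part (a) already delivers the ``if'' direction of the corollary unconditionally, so the hypotheses (i)--(iii) are only needed for the ``only if'' direction; by part (b.ii) it then suffices to verify $\interior K_C\neq\emptyset$.

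Second, I would show that each of (i), (ii), (iii) yields a point $\hat x\in\hilbertx$ with $\opA_2\hat x = 0$: under (i), take $\hat x = 0$, so $\opA_2 0 = \opM 0 + \affinec = 0$; under (ii), since $\range\opM = \hilbertnewz$ there is $\hat x$ with $\opM\hat x = -\affinec$, whence $\opA_2\hat x = \opM\hat x + \affinec = 0$; under (iii) this is the hypothesis itself. For any such $\hat x$ we then have $D^2\opL\opA_2\hat x = D^2\opL 0 = 0 \in \interior C$, which is exactly statement (iii) of Lemma~\ref{lemma:intKC_condition} (using $\interior C\neq\emptyset$ from the previous paragraph). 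Hence Lemma~\ref{lemma:intKC_condition} gives (iii)$\Rightarrow$(i), i.e.\ $\interior K_C\neq\emptyset$.

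Third, with $\interior K_C\neq\emptyset$ established, Proposition~\ref{proposition:positive_definite_case_necessary_condition}(b.ii) applies verbatim and yields the equivalence $F\in\Gamma_0(\hilbertx)\Leftrightarrow(\spadesuit)$, which is the claim.

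I do not expect a genuine obstacle here, since all the substance lives in the earlier Proposition and Lemmas; the one point deserving explicit care is the observation that the unit ball $C$ of the dual norm contains the origin in its interior, as this is precisely what lets hypotheses (i)--(iii) reduce to ``$\opA_2$ has a zero'' and feed into Lemma~\ref{lemma:intKC_condition}. It is also worth stating clearly that the ``if'' half of the corollary holds with no use of (i)--(iii) at all, being covered by Proposition~\ref{proposition:positive_definite_case_necessary_condition}(a).
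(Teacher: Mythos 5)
Your proposal is correct and takes essentially the same route as the paper's own proof: collapse (i) and (ii) to the existence of $\hat{x}$ with $\opA_2\hat{x}=0$, note that $|||0|||_*=0<1$ gives $0\in\interior C\neq\emptyset$ for $C:={\rm lev}_{\leq 1}|||\cdot|||_*$, feed $D^2\opL\opA_2\hat{x}=0\in\interior C$ into Lemma \ref{lemma:intKC_condition} to get $\interior K_C\neq\emptyset$, and conclude by Proposition \ref{proposition:positive_definite_case_necessary_condition}. The only cosmetic difference is that you spell out that the ``if'' half comes unconditionally from part (a), which the paper leaves implicit by citing the proposition as a whole.
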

\begin{proof}
As $\opA_2 0 = \affinec$, (i) $\Rightarrow$ (iii).
Moreover, as $\opA_2 \hat{x}=0 \Leftrightarrow 
\opM \hat{x} = - \affinec \in\hilbertnewz$,
(ii) $\Rightarrow$ (iii).
Since $|||\cdot |||=\sigma_C$ for
$C:={\rm lev}_{\leq 1} |||\cdot |||_*$, it holds that
$||| 0  |||_*=0<1 \Leftrightarrow 0\in \interior C\neq \emptyset$.
Hence, (iii) of Corollary \ref{corollary:necessary_sufficient}
$\Rightarrow$
(iii) of Lemma \ref{lemma:intKC_condition}
$\Rightarrow \interior K_C\neq \emptyset$.
The assertion is thus verified by 
Proposition
\ref{proposition:positive_definite_case_necessary_condition}.
\end{proof}
Corollary \ref{corollary:necessary_sufficient} is useful
when $\Psi$ is a norm, because it gives simple ways
of seeing whether ($\spadesuit$) is necessary and sufficient.




\subsection{Proximal debiasing algorithms}\label{subsec:twin_prox}

We present iterative algorithms using the proximity operator
to compute the LiMES model
for the case of $\opD:=\gamma^{-1/2}I$ for simplicity,
which covers many applications including
the debiased sparse modeling
\eqref{eq:sparse_regression_mc_underdetermined},
SORR \eqref{eq:stable_regression},
ORR \eqref{eq:robust_estimation_mc_loss},
and robust classification \eqref{eq:mehinge_formulation} (see Section
\ref{subsec:robust_classification}).
(An extension to a general diagonal positive-definite operator $D$ is
straightforward.)
In this case,
\eqref{eq:fg_smooth_nonsmooth} reduces to
\begin{equation}
J_{\Omega_{\gamma}} =
 \underbrace{0.5\norm{\opA_1\cdot}^2 - \mu \hspace*{.2em} ^{\gamma}\Psi
\circ\opL\opA_2}_{{\rm smooth}} ~ + ~  \underbrace{\mu\Psi \circ \opA_2,}_{{\rm nonsmooth}}
\label{eq:fg_minimization_moreau}
\end{equation}
where
$\Omega_{\gamma}:=(\opA_1; \Psi_{\gamma^{-1/2}I}^{\opL}\circ \opA_2)$.
Here, 
the gradients of $0.5\norm{\opA_1\cdot}^2$
and $^{\gamma}\Psi \circ\opL\opA_2$ 
at $x\in\hilbertx$ are given, respectively, by
$\nabla (0.5\norm{\opA_1\cdot}^2)(x) = M_1^* \opA_1 x$
and (see Section \ref{subsec:notation})
\begin{align}
\hspace*{-2em} \nabla (^{\gamma}\Psi \!\circ\!\opL\opA_2)(x)
=&~ M_2^*\opL^*  \nabla\hspace*{.3em} ^{\gamma}\Psi
 (\opL\opA_2x)\nonumber\\
=&~ \gamma^{-1}M_2^*\opL^*  (I - \prox_{\gamma \Psi})
 (\opL\opA_2x).
\label{eq:moreau_composition_gradient}
\end{align}
Both gradient operators $\nabla (0.5\norm{\opA_1\cdot}^2)(x)$
and $\nabla (^{\gamma}\Psi \!\circ\!\opL\opA_2)$ are
Lipschitz continuous with constants
$\norm{M_1}^2$ and 
$\gamma^{-1}\norm{\opL}^2\norm{M_2}^2$, respectively.

\subsubsection{Proximal debiasing-gradient algorithm for typical type-S applications}
Let $\opA_2:=I$ which is used in typical type-S applications.
This allows to use an efficient algorithm requiring no auxiliary variable.
Specifically, under condition ($\spadesuit$),
\eqref{eq:fg_minimization_moreau} can be minimized
by the proximal gradient method:
\begin{align}
\hspace*{-1.7em} x_{k+1} \!:=&~ \prox_{\beta_k \mu\Psi} [x_k \!- \!
\beta_k (M_1^* \opA_1 x_k \nonumber\\
&\hspace*{1em}- \mu\gamma^{-1} \opL^*
(I - \prox_{\gamma\Psi})
 (\opL x))], ~ k\!\in\!\Natural,
\label{eq:pgm_moreau_reformulation}
\end{align}
where $\beta_k \in (0,2/(\norm{M_1}^2 + \mu \gamma^{-1} \norm{\opL}^2))$.
ISDA presented in Section \ref{subsubsec:isd_method} 
is reproduced by letting 
$\Psi:=\norm{\cdot}_1$ 
and $\opL:=P_{\mathcal{M}}$
in \eqref{eq:pgm_moreau_reformulation}, which makes
${\rm Prox}_{\delta \norm{\cdot}_1}={\rm soft}_\delta$
for any $\delta\in\real_{++}$.
The gradient term $\mu \opL^*\nabla \hspace*{.3em} ^{\gamma}\Psi(\opL x_k)$
actually plays the same role as the ``debiasing'' term of ISDA.
We therefore refer to the algorithm as
{\em the proximal debiasing-gradient algorithm}.

\subsubsection{Primal-dual debiasing algorithm for type-R applications}
\label{subsubsec:pdda_type_R}

Let $\tilde{\Psi}(z):= \mu \Psi(z + \affinec)$, $z\in\hilbertnewz$, so that 
$\tilde{\Psi}(\opM x)=\mu\Psi(\opA_2 x)$.
The problem in \eqref{eq:fg_minimization_moreau} can then be rewritten as 
\begin{equation}
\min_{x\in\hilbertx} ~0.5\norm{\opA_1 x}^2 - \mu \hspace*{.2em}^{\gamma}\Psi
(\opL\opA_2 x) ~ + ~ \tilde{\Psi}(\opM x).
\label{eq:fg_minimization_moreau_mod}
\end{equation}
By
$\prox_{\tilde{\Psi}/\sigma } (z) = - \affinec + 
 \prox_{\mu\Psi/\sigma}(z + \affinec)$
for $\sigma\in\real_{++}$,
 it follows that
\begin{align}
\hspace*{-.2em}\prox_{\sigma \tilde{\Psi}^*} (z) = &~ z - \sigma
 \prox_{\tilde{\Psi}/\sigma}(\sigma^{-1} z) \nonumber\\
= &~ z + \sigma \affinec -\sigma
 \prox_{\mu\Psi/\sigma}(\sigma^{-1} z + \affinec),
\end{align}
where the first equality is due to
the well-known identity
\cite[Theorem 14.3]{combettes}:
${\rm Prox}_{\gamma f} + \gamma{\rm Prox}_{f^*/\gamma}\circ
\gamma^{-1}I=I$ for any $f\in \Gamma_0(\hilbertnewz)$ and $\gamma\in\real_{++}$.
Problem \eqref{eq:fg_minimization_moreau_mod}
can be solved by
the existing operator splitting methods
such as the forward-backward-based primal-dual
method \cite{loris11,chen13,komodakis15}; see Algorithm \ref{alg:fbf}
below.\footnote{Due to the
presence of the Moreau envelope in the smooth part 
$0.5\norm{\opA_1\cdot}^2 -\hspace*{.2em} ^{\gamma}\Psi\circ \opA_2$, the popular ADMM and
Chambolle-Pock algorithms \cite{chambolle11} are not suitable to the present case,
because the former requires a minimizer of some function involving 
$0.5\norm{\opA_1\cdot}^2 - \hspace*{.2em}^{\gamma}\Psi\circ \opA_2$ (and thus requires an inner loop),
and the latter requires the proximity operator of 
$0.5\norm{\opA_1\cdot}^2 - \hspace*{.1em}^{\gamma}\Psi\circ \opA_2$ which cannot be written in a closed form in general.
Some other algorithms such as Condat's primal dual splitting method
\cite{condat13} may also be used.}
\begin{algorithm}[Primal-dual debiasing algorithm]
\label{alg:fbf}
~ \\
Set: $x_0\in\hilbertx$, $v_0\in\hilbertnewz$, $(\tau,\sigma)\in\real_{++}^2$,
$\beta_k\in \real_{++}$\\
For $k=0,1,2,\cdots$, do:\\
~~~$s_k = x_k - \tau [M_1^* \opA_1 x_k$\\
\hspace*{8em}$ -  \mu\gamma^{-1}\opM^*\opL^* (I - \prox_{\gamma\Psi})(\opL\opA_2  x_k)]$\\
~~~$u_k = s_k - \tau \opM^* v_k$\\
~~~$q_k = \prox_{\sigma \tilde{\Psi}^*}(v_k + \sigma \opM u_k)$\\
~~~$p_k = s_k - \tau \opM^* q_k$\\
~~~$(x_{k+1}, v_{k+1}) = (x_k,v_k) + \beta_k \left(
(p_k,q_k) - (x_k,v_k)
\right)$
\end{algorithm}
{\bf Convergence condition of Algorithm \ref{alg:fbf}:}
(i) $\tau\sigma \norm{\opM}^2\in(0,1)$ and $\tau\in(0,2/(\norm{M_1}^2 + \mu\gamma^{-1} \norm{\opL \opM}^2))$,
(ii) $(\beta_k)_{k\in\Natural}\subset(0,1]$ and $\inf_{k\in\Natural}
      \beta_k \in\real_{++}$,
(iii)
the function $J_{\Omega_{\gamma}}$  in
\eqref{eq:limes_model} has a minimizer, and 
(iv)
${\rm int}(\dom \tilde{\Psi})\cap \range \opM\neq \emptyset$.

\subsection{Stable outlier-robust regression as a special case of LiMES model}
\label{subsec:robust_stable_LiMES}

We consider a general situation when the augmented vector
$\xi_{\star}:=[x_{\star}^\top~\varepsilon_{\star}^\top]^\top\in\real^{n+m}$
obeys a zero-mean normal distribution with 
its (nonsingular) covariance matrix
$\Sigma_{\xi_{\star}}\in\real^{(n+m)\times (n+m)}$.
In this case, the standard statistical argument may suggest
the use of 
$0.5 \big\|\Sigma_\xi^{-1/2} \xi\big\|_2^2$,
where
$\xi:=[x^\top ~\varepsilon^\top]^\top\in\real^{n+m}$
and $\Sigma_\xi$ is an estimate of $\Sigma_{\xi_{\star}}$.
The estimate $\outputb - (Ax + \varepsilon)= \outputb -[\inputA~I_m]\xi$
of the sparse outlier is encouraged to be sparse by employing
$(\norm{\cdot}_1)_{\gamma^{-1/2}I}([\inputA~I_m]\xi - \outputb)$ 
as a fidelity function.
The above arguments amount to the following minimization problem:
\begin{equation}
 \min_{\xi\in\real^{n+m}} 
0.5\big\|
\underbrace{\Sigma_\xi^{-1/2} \xi}_{=:\opA_1 \xi}\big\|_2^2
+
~\mu
(\underbrace{\norm{\cdot}_1}_{=:\Psi})_{\gamma^{-1/2}I}(\underbrace{[\inputA~I_m]\xi
- \outputb}_{=:\opA_2 \xi}),
\label{eq:stable_regression_xi}
\end{equation}
which is a special case of the LiMES model 
with $\hilbertx:=\hilbertnewy:=\real^{n+m}$, $\hilbertnewz:=\real^{m}$,
$\Psi:=\norm{\cdot}_1$, $\opL:=I_m$, $\opD:=\gamma^{-1/2}I_m$,
and $\opA_2: \xi\mapsto [\inputA~I_m]\xi- \outputb$.
The formulation in \eqref{eq:stable_regression_xi} is
{\em a general form of SORR}.
Under the statistical assumption stated
in Section \ref{subsec:robust_regression},
it follows that
$\Sigma_{\xi_{\star}}=
{\rm diag}
( \sigma_{x_{\star}}^{2} I_n, \sigma_{\varepsilon_{\star}}^{2} I_m)$.
We therefore let
$\Sigma_\xi :=
{\rm diag}( \sigma_x^2 I_n , \sigma_{\varepsilon}^2 I_m)$,
with which \eqref{eq:stable_regression_xi} reduces to
\eqref{eq:stable_regression}.
Problem \eqref{eq:stable_regression_xi} can be solved 
by using Algorithm \ref{alg:fbf}
under the convexity condition in \eqref{eq:mu_condition_stable_regression}.
Note that, among the convergence conditions (i)--(iv) listed
below Algorithm \ref{alg:fbf},  only (i) and (ii) needs to be cared
in this specific case.
Indeed, conditions (iii) and (iv) are
satisfied automatically,
because \eqref{eq:stable_regression_xi} always has a solution
due to the coercivity of the objective function\footnote{A function $f\in\Gamma_0(\euclidspace)$ is {\em coercive}
if $f(x)\rightarrow +\infty$ as $\norm{x}\rightarrow +\infty$.}, and 
${\rm int}(\dom (\mu \norm{\cdot - y}_1)) \cap\range [A~I_m] = \real^m\neq \emptyset$.

 
\begin{table*}[t!]
\caption{LiMES Applications
($\mathcal{M}_1 := \range A^\top$ for sparse modeling,
$\mathcal{M}_1 := \range [I_n~I_n]^\top$ for SPCP, and
$M_{\rm RC}:=[y_1a_1\cdots y_m a_m]^\top$)}
\label{table:limes}

\centering
\begin{tabular}{|c|c|c|c|c|c|c|c|c|}
\hline
\!\!Application (type) \!\!\!\!& 
$\hilbertx$ & $\hilbertnewy$ &$\hilbertnewz$
& $\opA_1$ & $\Psi$ 
& $\opL$ & $D$ & $\opA_2$
\\ \hline
\!\!\!debiased\! sparse \!modeling (S)\!\!\!
& $\real^n$ & $\real^m$ & $\real^n$ 
& $A\cdot - y$ & $\norm{\cdot}_1$ 
& \!\!\!\!$P_{\mathcal{M}_1}$ \!\!\!\!\!\!
&  $\gamma^{-1/2}I_n$ & $I_n$  
\\ \hline
\!SORR (R)
&\!\! \!\hspace*{-.7em} $\real^{n+m}$\!\! &\!\!\!\hspace*{-.2em} $\real^{n+m}$\!\!\! & $\real^m$ 
& $\Sigma_{\xi}^{-1/2}$ & $\norm{\cdot}_1$ 
& $I_m$ &  $\gamma^{-1/2}I_m$ &\!\!\!\! $[A~I_m]\cdot - y$\!\!\!
\\ \hline
\!\!\!
\begin{tabular}{c}
\hspace*{-.5em}
\!\!\!\!SPCP (S)\!\!\!\! \\
\!\!\!\!\! \!\!\!\!
\hspace*{-1em}
\end{tabular}
\hspace*{-.7em}
\!\!\!&\hspace*{-.7em} $\real^{2n\times m}$\!\! &\!\!\hspace*{-.2em} $\real^{n\times m}$\!\! & \!\!$\real^{2n\times m}$\!\!
&\!\!\!\! $[I_n~I_n]\cdot - Y$\!\!\!
& $\begin{array}{c}
[L^\top~S^\top]^\top\mapsto \\
\!\!\mu_L\norm{L}_{\rm nuc}+\mu_S\norm{S}_1 \!\!\!\!\!\!\\
\end{array}$ \! \!
&\!\!\!\! $P_{\mathcal{M}_1}$ \!\!\!\!\!\!
&\!\!$
{\rm diag} \Big(\sqrt{\mu_L/\gamma} I_n, \sqrt{\mu_S/\gamma} I_n\Big)
 $\!\!
& $I_{2n}$
\\ \hline
robust classification (R)& 
$\real^{n}$ & $\real^{n}$ & $\real^m$ 
& $I_n$ & $\sigma_{[-1,0]^m}$ 
& $I_m$ &  $\gamma^{-1/2}I_m$ &\!\!\!\! $M_{\rm RC}\cdot - 1_m$\!\!\!\! 
\\ \hline
\end{tabular}

\end{table*}



\subsection{Stable principal component pursuit: A type-S application}
\label{subsec:spcp}

We consider the following model:
\begin{equation}
 Y =  L + S + W,
\label{eq:spcp_model}
\end{equation}
where $Y\in\real^{n\times m}$ is a noisy measurement of the superposition of the low-rank
matrix
$L\in\real^{n\times m}$ and the sparse matrix $S\in\real^{n\times m}$
with the additive white Gaussian noise
$W\in\real^{n\times m}$.
The problem of recovering $L$ and $S$ from the measurement $Y$ is called
{\it stable principal component pursuit (SPCP)} \cite{zhou10},
which can be formulated as follows:
\begin{equation}
  \min_{L,S\in\real^{n\times m}} 
0.5 \Big\|\underbrace{[I_n ~I_n]}_{=:M_1}
\left[\begin{array}{c}
L      \\
S
     \end{array}
\right]-\underbrace{Y}_{c_1}
\Big \|_{\rm F}^2
+ \Psi_\opD^{P_{\mathcal{M}_1}}
\left(
\left[\begin{array}{c}
L      \\
S
     \end{array}
\right]
\right).
\label{eq:spcp_formulation}
\end{equation}
Here,
$\norm{\cdot}_{\rm F}$ denotes the Frobenius norm,
$\opD  := 
{\rm diag}(\sqrt{\mu_L/\gamma} I_n, \sqrt{\mu_S/\gamma} I_n)
\in \! \real^{2n\times 2n},
~(\opL:=)$\\
$P_{\mathcal{M}_1}=
0.5[I_n~I_n]^\top[I_n~I_n]\in\real^{2n\times 2n}$
with $\mathcal{M}_1:=\range [I_n~I_n]^\top$,
and 
\begin{equation}
 \Psi: \real^{2n\times m} \rightarrow [0,+\infty):
\left[\begin{array}{c}
L      \\
S
     \end{array}
\right]
\mapsto
\mu_L\norm{L}_{\rm nuc} + \mu_S \norm{S}_1
\end{equation}
is a norm on $\real^{2n\times m}$
for any $\mu_L,\mu_S \in\real_{++}$
with $\norm{\cdot}_1$ 
summing up the absolute values of the entries.
It can be verified that
\begin{align}
\hspace*{-1.5em}\Psi_\opD^{P_{\mathcal{M}_1}}
\left(
\left[\begin{array}{c}
L      \\
S
     \end{array}
\right]
\right) = &~
\mu_L \left[\norm{L}_{\rm nuc} -
~^{\gamma}(\norm{\cdot}_{\rm nuc})\left(
\frac{L+S}{2}
\right)
\right]
  \nonumber \\
&\hspace*{-.1em} +
\mu_S \left[\norm{S}_1 -
~^{\gamma}(\norm{\cdot}_1)\left(
\frac{L+S}{2}
\right)
\right].
\label{eq:spcp_penalty}
\end{align}
The SPCP formulation given in \eqref{eq:spcp_formulation} is a special case of LiMES
for $\hilbertx:=\hilbertnewz:=\real^{2n\times m}$,
$\hilbertnewy:=\real^{n\times m}$, $\opA_1:=[I_n~I_n]\cdot - Y$,
and $\opA_2:=I_{2n}$ ($M_2:=I_{2n}$).
We emphasize here that
$(\opL:=)P_{\mathcal{M}_1}$ plays a key role for convexity preservation
as in Section
\ref{subsec:sparse_regression},
although the condition is given in terms of the
parameters contained in $\opD$ as shown in the following proposition.

\begin{proposition}[Convexity condition for \eqref{eq:spcp_formulation}]
\label{proposition:spcp_convexity}

Given $\mu_L,\mu_S,\gamma\in\real_{++}$, 
and $Y\in\real^{n\times m}$,
the smooth part
$0.5\norm{[I_n~I_n] \cdot - Y}_{\rm F}^2 -\hspace*{.1em} ^1(\Psi\circ \opD^{-1}) \circ \opD P_{\mathcal{M}_1}$ is convex
if and only if
 $\mu_L+\mu_S \leq 4\gamma$.
\end{proposition}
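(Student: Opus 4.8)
The plan is to read \eqref{eq:spcp_formulation} as an instance of the LiMES model and then invoke Corollary~\ref{corollary:necessary_sufficient}. In this instance $\mu=1$; the seed function $\Psi\colon[L^\top~S^\top]^\top\mapsto\mu_L\norm{L}_{\rm nuc}+\mu_S\norm{S}_1$ is a norm on $\real^{2n\times m}$ (a positively weighted sum of the nuclear norm on the $L$-block and the $\ell_1$ norm on the $S$-block); $M_1=[I_n~I_n]$, $M_2=I_{2n}$ and hence $c_2=0$, $\opL=P_{\mathcal{M}_1}$, and $D={\rm diag}(\sqrt{\mu_L/\gamma}\,I_n,\sqrt{\mu_S/\gamma}\,I_n)$. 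Since $c_2=0$ (indeed $M_2=I_{2n}$, so also $\range M_2=\hilbertnewz$), Corollary~\ref{corollary:necessary_sufficient}(i) applies and asserts that the smooth part lies in $\Gamma_0(\real^{2n\times m})$ if and only if condition ($\spadesuit$) holds; using $M_2=I_{2n}$ and $\opL^*=P_{\mathcal{M}_1}$, this condition reads $M_1^*M_1-P_{\mathcal{M}_1}D^2P_{\mathcal{M}_1}\succeq O$. What remains is therefore purely to translate this operator inequality into $\mu_L+\mu_S\leq4\gamma$.

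For the translation I would first record the identity $M_1^*M_1=[I_n~I_n]^\top[I_n~I_n]=2P_{\mathcal{M}_1}$ noted below \eqref{eq:spcp_formulation}. Then, for an arbitrary $\xi\in\real^{2n\times m}$ with orthogonal decomposition $\xi=\xi_\parallel+\xi_\perp$ relative to $\mathcal{M}_1=\range[I_n~I_n]^\top$, one computes $\innerprod{M_1^*M_1\xi}{\xi}=2\innerprod{P_{\mathcal{M}_1}\xi}{\xi}=2\norm{\xi_\parallel}_{\rm F}^2$ and, by self-adjointness of $P_{\mathcal{M}_1}$, $\innerprod{P_{\mathcal{M}_1}D^2P_{\mathcal{M}_1}\xi}{\xi}=\innerprod{D^2\xi_\parallel}{\xi_\parallel}$. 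Hence ($\spadesuit$) is equivalent to $2\norm{\zeta}_{\rm F}^2\geq\innerprod{D^2\zeta}{\zeta}$ for every $\zeta\in\mathcal{M}_1$. Parametrizing $\zeta=[Z^\top~Z^\top]^\top$ with $Z\in\real^{n\times m}$ gives $\norm{\zeta}_{\rm F}^2=2\norm{Z}_{\rm F}^2$ and $\innerprod{D^2\zeta}{\zeta}=(\mu_L+\mu_S)\gamma^{-1}\norm{Z}_{\rm F}^2$, so the inequality becomes $4\norm{Z}_{\rm F}^2\geq(\mu_L+\mu_S)\gamma^{-1}\norm{Z}_{\rm F}^2$ for all $Z$, that is, $\mu_L+\mu_S\leq4\gamma$. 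Chaining the two equivalences shows that ($\spadesuit$) holds if and only if $\mu_L+\mu_S\leq4\gamma$, and the claim follows from Corollary~\ref{corollary:necessary_sufficient}.

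I do not anticipate a genuine obstacle. The two points that need care are (a) checking that $\Psi$ really is a norm, since this is what makes Corollary~\ref{corollary:necessary_sufficient} applicable and, crucially, makes ($\spadesuit$) \emph{necessary} as well as sufficient; and (b) the Frobenius-inner-product bookkeeping in the matrix-variable setting. Regarding (b), note that $D^2$ does not commute with $P_{\mathcal{M}_1}$, so one cannot simplify $P_{\mathcal{M}_1}D^2P_{\mathcal{M}_1}$ cavalierly; the reason the computation still closes cleanly is that both $M_1^*M_1$ and $P_{\mathcal{M}_1}D^2P_{\mathcal{M}_1}$ annihilate $\mathcal{M}_1^\perp$, so the quadratic-form comparison only ever has to be made on $\mathcal{M}_1$, where $P_{\mathcal{M}_1}$ acts as the identity and the noncommutativity becomes irrelevant.
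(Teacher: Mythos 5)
Your proposal is correct and follows essentially the same route as the paper: both invoke Corollary~\ref{corollary:necessary_sufficient} (applicable since $\Psi$ is a norm and $\affinec=0$) to reduce convexity of the smooth part to condition ($\spadesuit$), and then verify that ($\spadesuit$) is equivalent to $\mu_L+\mu_S\leq 4\gamma$. The only cosmetic difference is that you check ($\spadesuit$) via quadratic forms on the decomposition $\xi=\xi_\parallel+\xi_\perp$ and the parametrization $\zeta=[Z^\top~Z^\top]^\top$ of $\mathcal{M}_1$, whereas the paper states the one-line operator identity $[I_n~I_n]^\top[I_n~I_n]-P_{\mathcal{M}_1}D^2P_{\mathcal{M}_1}=\bigl(1-\tfrac{\mu_L+\mu_S}{4\gamma}\bigr)[I_n~I_n]^\top[I_n~I_n]$; the substance is the same.
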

\begin{proof}
Since $\affinec:=0$ for SPCP, the smooth part of
\eqref{eq:spcp_formulation} is
convex if and only
if ($\spadesuit$) is satisfied by Corollary \ref{corollary:necessary_sufficient}.
It can be verified that
($\spadesuit$)~$\Leftrightarrow M_1^\top M_1 - \mu \opM^\top\opL^\top D^2 \opL \opM = 
[I_n\hspace*{.3em}I_n]^\top[I_n\hspace*{.3em}I_n] - P_{\mathcal{M}_1} \opD^2 P_{\mathcal{M}_1} =
\left(
1-\dfrac{\mu_L + \mu_S}{4\gamma}
\right)  
[I_n\hspace*{.3em}I_n]^\top[I_n\hspace*{.3em}I_n]\succeq O
\Leftrightarrow 4\gamma\geq \mu_L+\mu_S$.
\end{proof}

As the proximity operator of $\Psi$ can be computed directly by
those of the individual functions 
$\mu_L\norm{\cdot}_{\rm nuc}$ and $\mu_S\norm{\cdot}_1$,
the problem 
in \eqref{eq:spcp_formulation}
can be solved efficiently by the proximal gradient method
\eqref{eq:pgm_moreau_reformulation}.
We remark that the formulation in \eqref{eq:spcp_formulation}
for $\opL:=I$ has been studied in the framework of GMC
in \cite{yin19}, where the problem is solved
by a convex optimization algorithm involving dual variables.
In sharp contrast, no auxiliary variable is required
in our case, because $\opD$ is diagonal (cf.~Remark \ref{remark:separability_of_rmc}).
An $\ell_0$-based approach can also be found in the literature \cite{ulfarsson15}.


\subsection{Robust classification: A type-R application}
\label{subsec:robust_classification}


We consider a standard (supervised) classification task where
the pairs $(a_i,\outputb_i)\in\real^n\times \{+1,-1\}$,
$i\in\{1,2,\cdots,m\}$, of input vector and its label
are available.
We assume here that
the input vectors $a_i$ are normalized such that $\norm{a_i}_2=1$;
 it is implicitly assumed that $a_i\neq 0$.
We then consider the following problem formulation:
\begin{equation}
  \min_{x\in\real^{n}}~ 0.5\norm{x}_2^2 + \mu \sum_{i=1}^{m}
[\sigma_{[-1,0]}\circ (\outputb_i a_i^\top \cdot -
1)]_{\gamma^{-1/2}I}(x).
\label{eq:mehinge_formulation}
\end{equation}
Here, $\sigma_{[-1,0]}\circ (\outputb_i a_i^\top \cdot - 1)$ is the
popular hinge loss, and thus each summand is
the ME-hinge loss (see Example \ref{example:alime_loss}(b)).
To show that \eqref{eq:mehinge_formulation} is a special case of LiMES,
the following lemma will be used.

\begin{lemma}
\label{lemma:compositionA}
Let $\hilbertx$ and $\hilbertarbk$ be finite dimensional Hilbert spaces.
Let $\mathfrak{A}: \hilbertx\rightarrow \hilbertarbk: x \mapsto \genopL x + b$,
where $b\in \hilbertarbk$ and $\genopL:\hilbertx\rightarrow \hilbertarbk$
 is a bounded linear operator such that 
$\range \genopL = \hilbertarbk$ and
$\genopL^* \genopL = P_{\mathcal{V}}$
with $\mathcal{V}:=\range \genopL^* \subset\hilbertx$.
Then, for any $\psi\in\Gamma_0(\hilbertarbk)$ and $\gamma\in\real_{++}$,
it holds that
\begin{align}
 ^\gamma(\psi\circ\mathfrak{A}) =& ~ ^\gamma\psi\circ \mathfrak{A},\\
(\psi\circ \mathfrak{A})_{\gamma^{-1/2} I} =& ~ \psi_{\gamma^{-1/2} I}\circ \mathfrak{A}.
\label{eq:affine_composition_equivalence}
\end{align}
\end{lemma}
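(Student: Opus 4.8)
The plan is to establish the first identity $^\gamma(\psi\circ\mathfrak{A}) = {}^\gamma\psi\circ\mathfrak{A}$ directly from the definition \eqref{eq:moreau_envelope} of the Moreau envelope, and then to obtain the second identity \eqref{eq:affine_composition_equivalence} as an immediate corollary via the representation $\psi_{\gamma^{-1/2}I} = \psi - {}^\gamma\psi$ that follows from \eqref{eq:phiGL_moreau2} with $\opL := I$.

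First I would record the structural consequences of the hypotheses. From $\range\genopL = \hilbertarbk$ and $\genopL^*\genopL = P_{\mathcal{V}}$ with $\mathcal{V} := \range\genopL^*$ one gets $\ker\genopL = \mathcal{V}^\perp$; moreover $\genopL\genopL^* = I_{\hilbertarbk}$, since $\genopL^*(\genopL\genopL^* - I) = (\genopL^*\genopL)\genopL^* - \genopL^* = P_{\mathcal{V}}\genopL^* - \genopL^* = 0$ (as $\range\genopL^* = \mathcal{V}$) forces $\range(\genopL\genopL^* - I) \subseteq \ker\genopL^* = (\range\genopL)^\perp = \{0\}$. Consequently $\genopL^*:\hilbertarbk\to\hilbertx$ is an isometry onto $\mathcal{V}$, $\genopL|_{\mathcal{V}}:\mathcal{V}\to\hilbertarbk$ is a bijective isometry with inverse $\genopL^*$, and $\genopL^*\genopL x = P_{\mathcal{V}}x$ for every $x\in\hilbertx$. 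Because $\genopL$ (hence $\mathfrak{A}$) is surjective, $\dom(\psi\circ\mathfrak{A}) = \mathfrak{A}^{-1}(\dom\psi)\neq\emptyset$, so $\psi\circ\mathfrak{A}\in\Gamma_0(\hilbertx)$ and both Moreau envelopes in question are real-valued with attained minima.

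Next, fixing $x\in\hilbertx$, I would expand $^\gamma(\psi\circ\mathfrak{A})(x) = \min_{u\in\hilbertx}[\psi(\genopL u + b) + \tfrac{1}{2\gamma}\norm{u - x}^2]$ and decompose $u = v + z$ with $v = P_{\mathcal{V}}u\in\mathcal{V}$ and $z = P_{\mathcal{V}^\perp}u$. Since $\genopL z = 0$ and $\norm{u - x}^2 = \norm{v - P_{\mathcal{V}}x}^2 + \norm{z - P_{\mathcal{V}^\perp}x}^2$, the inner minimization over $z$ is solved at $z = P_{\mathcal{V}^\perp}x$, leaving $\min_{v\in\mathcal{V}}[\psi(\genopL v + b) + \tfrac{1}{2\gamma}\norm{v - P_{\mathcal{V}}x}^2]$. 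Now substitute $w := \genopL v$: as $v$ ranges over $\mathcal{V}$, $w$ ranges over all of $\hilbertarbk$, with $v = \genopL^* w$, and using that $\genopL^*$ is an isometry together with $\genopL^*\genopL x = P_{\mathcal{V}}x$ gives $\norm{v - P_{\mathcal{V}}x} = \norm{\genopL^* w - \genopL^*\genopL x} = \norm{w - \genopL x}$. Hence $^\gamma(\psi\circ\mathfrak{A})(x) = \min_{w\in\hilbertarbk}[\psi(w + b) + \tfrac{1}{2\gamma}\norm{w - \genopL x}^2]$, and the shift $w\mapsto w - b$ turns the right-hand side into $\min_{w}[\psi(w) + \tfrac{1}{2\gamma}\norm{w - (\genopL x + b)}^2] = {}^\gamma\psi(\mathfrak{A}x)$, which is the first claim. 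The second identity is then immediate: $(\psi\circ\mathfrak{A})_{\gamma^{-1/2}I}(x) = \psi(\mathfrak{A}x) - {}^\gamma\psi(\mathfrak{A}x) = \psi_{\gamma^{-1/2}I}(\mathfrak{A}x)$.

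I expect the only delicate point to be justifying the change of variables $w = \genopL v$ rigorously — precisely, that $\genopL$ restricts to a bijective isometry from $\mathcal{V}$ onto $\hilbertarbk$ with inverse $\genopL^*$, so the quadratic term transforms exactly (nothing is lost) and the resulting infimum genuinely runs over all of $\hilbertarbk$. The derived identity $\genopL\genopL^* = I_{\hilbertarbk}$ is what makes this clean, and surjectivity of $\genopL$ is used twice: once to ensure $\psi\circ\mathfrak{A}\in\Gamma_0(\hilbertx)$ and once to ensure $w$ sweeps all of $\hilbertarbk$. Everything else is bookkeeping with the orthogonal decomposition $\hilbertx = \mathcal{V}\oplus\mathcal{V}^\perp$.
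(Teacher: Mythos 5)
Your proposal is correct and follows essentially the same route as the paper's proof: decompose $\hilbertx=\mathcal{V}\oplus\mathcal{V}^{\perp}$, use the Pythagorean theorem to drop the $\mathcal{V}^{\perp}$-component, rewrite the quadratic term via $\genopL^*\genopL=P_{\mathcal{V}}$, invoke surjectivity of $\genopL$ to pass to a minimization over $\hilbertarbk$, translate by $b$, and obtain \eqref{eq:affine_composition_equivalence} as an immediate corollary of $^\gamma(\psi\circ\mathfrak{A})={}^\gamma\psi\circ\mathfrak{A}$. Your explicit derivation of $\genopL\genopL^*=I_{\hilbertarbk}$ and the isometric change of variables $w=\genopL v$ merely make rigorous the step the paper handles by directly substituting $\norm{P_{\mathcal{V}}u-P_{\mathcal{V}}x}=\norm{\genopL u-\genopL x}$ and then freeing the variable $\genopL u$ by surjectivity.
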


\begin{proof}
See Appendix \ref{subsec:proof_lemma_compositionA}.
\end{proof}

\begin{proposition}[\eqref{eq:mehinge_formulation}
as a special case of LiMES model]
\label{proposition:me_hinge_equivalence}

Let
$\Psi:\real^m\rightarrow \real: z:=[z_1,z_2,\cdots,z_m]^\top \mapsto 
\sigma_{[-1,0]^m}(z)
= \sum_{i=1}^{m} \sigma_{[-1,0]}(z_i)$
and 
$\opA_2: \real^n\rightarrow \real^m:
x \mapsto \opM x - 1_m$ with
$\opM:=[\outputb_1 a_1 ~\outputb_2 a_2 ~\cdots~\outputb_m a_m]^\top
\in\real^{m\times n}$
and  $1_m:=[1,1,\cdots,1]^\top\in\real^m$.
Then, the second term in \eqref{eq:mehinge_formulation} can be expressed as
\begin{equation}
 \Psi_{\gamma^{-1/2}I}\circ \opA_2
= \sum_{i=1}^{m}
[\sigma_{[-1,0]}\circ (\outputb_i a_i^\top \cdot - 1) ]_{\gamma^{-1/2}I}.
\end{equation}

\end{proposition}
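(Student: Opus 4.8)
The plan is to exploit the additive separability of $\Psi=\sigma_{[-1,0]^m}$ together with Lemma \ref{lemma:compositionA}. First I would record that this instance uses $\opL:=I_m$, so by \eqref{eq:phiGL_moreau2} the LiMES operation is simply $\Psi_{\gamma^{-1/2}I}=\Psi-{}^{\gamma}\Psi$, and likewise, on $\real$, $(\sigma_{[-1,0]})_{\gamma^{-1/2}I}=\sigma_{[-1,0]}-{}^{\gamma}\sigma_{[-1,0]}$. Next I would observe that $\Psi(z)=\sum_{i=1}^{m}\sigma_{[-1,0]}(z_i)$ is separable, hence the minimization defining its Moreau envelope in \eqref{eq:moreau_envelope} decouples coordinatewise: ${}^{\gamma}\Psi(z)=\sum_{i=1}^{m}{}^{\gamma}\sigma_{[-1,0]}(z_i)$, and therefore $\Psi_{\gamma^{-1/2}I}(z)=\sum_{i=1}^{m}(\sigma_{[-1,0]})_{\gamma^{-1/2}I}(z_i)$ for every $z\in\real^m$.

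Then I would evaluate this identity at $z:=\opA_2 x=\opM x-1_m$, whose $i$th entry is $\outputb_i a_i^\top x-1$, obtaining $\Psi_{\gamma^{-1/2}I}(\opA_2 x)=\sum_{i=1}^{m}(\sigma_{[-1,0]})_{\gamma^{-1/2}I}(\outputb_i a_i^\top x-1)$. It then remains to match each scalar summand with $[\sigma_{[-1,0]}\circ(\outputb_i a_i^\top\cdot-1)]_{\gamma^{-1/2}I}(x)$, and for this I would invoke Lemma \ref{lemma:compositionA} with $\hilbertarbk:=\real$, $\psi:=\sigma_{[-1,0]}\in\Gamma_0(\real)$, and the affine map $\mathfrak{A}_i:\real^n\to\real:x\mapsto\genopL_i x-1$ with $\genopL_i:=\outputb_i a_i^\top$. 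The hypotheses need to be checked: $\genopL_i$ is surjective onto $\real$ since $a_i\neq0$; its adjoint is $\genopL_i^*:t\mapsto t\,\outputb_i a_i$, so $\genopL_i^*\genopL_i=\outputb_i^2\,a_i a_i^\top=a_i a_i^\top$ because $\outputb_i^2=1$, and since $\norm{a_i}_2=1$ the rank-one operator $a_i a_i^\top$ is exactly the orthogonal projection onto $\mathcal{V}_i:=\mathrm{span}\{a_i\}=\range\genopL_i^*$. Thus \eqref{eq:affine_composition_equivalence} yields $(\sigma_{[-1,0]})_{\gamma^{-1/2}I}\circ\mathfrak{A}_i=(\sigma_{[-1,0]}\circ\mathfrak{A}_i)_{\gamma^{-1/2}I}$, i.e. $(\sigma_{[-1,0]})_{\gamma^{-1/2}I}(\outputb_i a_i^\top x-1)=[\sigma_{[-1,0]}\circ(\outputb_i a_i^\top\cdot-1)]_{\gamma^{-1/2}I}(x)$; summing over $i$ and comparing with the displayed expression for $\Psi_{\gamma^{-1/2}I}(\opA_2 x)$ closes the argument.

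The bulk of this is bookkeeping; the only step requiring genuine care is verifying the structural hypotheses of Lemma \ref{lemma:compositionA} for each $\mathfrak{A}_i$ — in particular that the normalization $\norm{a_i}_2=1$ is precisely what turns $\genopL_i^*\genopL_i$ into the projection $P_{\mathcal{V}_i}$, and that the label $\outputb_i\in\{+1,-1\}$ supplies the factor $\outputb_i^2=1$ that makes this cancellation work. A secondary point worth spelling out explicitly is that separability of $\Psi$ transfers to its Moreau envelope; this is immediate from the definition, but it is the mechanism that reduces the $m$-dimensional LiMES operation to $m$ independent scalar ones, so that Lemma \ref{lemma:compositionA} (stated for a single affine map) can be applied one coordinate at a time.
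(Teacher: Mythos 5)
Your proposal is correct and follows essentially the same route as the paper's own proof: both exploit the coordinatewise separability of $\Psi=\sigma_{[-1,0]^m}$ and apply Lemma \ref{lemma:compositionA} to each scalar map $x\mapsto \outputb_i a_i^\top x - 1$ with $\psi:=\sigma_{[-1,0]}$, $b:=-1$, using $\norm{a_i}_2=1$ and $\outputb_i^2=1$ to verify $\genopL_i^*\genopL_i=P_{\range \genopL_i^*}$ and surjectivity. Your explicit verification of the lemma's hypotheses and of the separability of the Moreau envelope is just a slightly more detailed rendering of the same argument.
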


\begin{proof}
Let $(O\neq)M_{2,i} : \real^n\rightarrow \real:x\mapsto \outputb_i 
a_i^\top x$, $i=1,2,\cdots,m$.
It then holds that $M_{2,i}^* M_{2,i} = P_{\range M_{2,i}^*}$ as
$\norm{M_{2,i}}=1$,
and $\range M_{2,i} =\real$ as $M_{2,i} \neq O$.
For each $i\in\{1,2,\cdots,m\}$,
letting $\hilbertarbk:=\real$,
$\psi:=\sigma_{[-1,0]}$, $\genopL:=M_{2,i}$, and $b:=-1$
in  Lemma \ref{lemma:compositionA} yields
\begin{equation*}
\label{eq:equivalence_proposition7}
 (\sigma_{[-1,0]})_{\gamma^{-1/2}I} (\outputb_i a_i^\top x - 1)=
[\sigma_{[-1,0]}\circ (\outputb_i a_i^\top \cdot - 1) ]_{\gamma^{-1/2}I} (x),
\end{equation*}
from which together with the separability of $\Psi$
it follows that
$\Psi_{\gamma^{-1/2}I}\circ \opA_2 (x)
= \sum_{i=1}^{m} 
(\sigma_{[-1,0]})_{\gamma^{-1/2}I} (\outputb_i a_i^\top x - 1)=
\sum_{i=1}^{m}
[\sigma_{[-1,0]}\circ (\outputb_i a_i^\top \cdot - 1) ]_{\gamma^{-1/2}I}(x)$.
\end{proof}

In light of Proposition \ref{proposition:me_hinge_equivalence},
the formulation in \eqref{eq:mehinge_formulation}
is a special case of LiMES
for $\hilbertx:=\hilbertnewy:=\real^n$,
$\hilbertnewz:=\real^m$, 
$\opA_1:=I_n$ ($M_1:=I_n$),
$\opL:=I_m$,
and $D:=\gamma^{-1/2}I_m$.
Table \ref{table:limes} summarizes the applications of LiMES.
The convexity condition is given as below.

\begin{proposition}[Convexity condition for \eqref{eq:mehinge_formulation}]
\label{proposition:me_hinge_convexity}
 The smooth part of \eqref{eq:mehinge_formulation}
is convex if $ \mu \lambda_{\max}(\opM^\top \opM) \leq \gamma$.
Suppose, in particular, that 
(i) $\range \opM = \real^m$, or 
(ii) $\gamma\in (1, +\infty)$.
Then, the smooth part of \eqref{eq:mehinge_formulation}
is convex if and only if 
$ \mu \lambda_{\max}(\opM^\top \opM) \leq \gamma$.

\end{proposition}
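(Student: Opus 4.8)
The plan is to cast \eqref{eq:mehinge_formulation} as an instance of the LiMES model and then invoke the general convexity analysis of Proposition~\ref{proposition:positive_definite_case_necessary_condition} together with Lemma~\ref{lemma:intKC_condition}. By Proposition~\ref{proposition:me_hinge_equivalence}, the penalty term of \eqref{eq:mehinge_formulation} equals $\mu\Psi_{\gamma^{-1/2}I}\circ\opA_2$ with $\Psi:=\sigma_{[-1,0]^m}$ and $\opA_2:x\mapsto\opM x-1_m$, $\opM:=[\outputb_1a_1~\cdots~\outputb_ma_m]^\top$; note $\opM\neq O$ since each $a_i\neq O$, and $\Psi=\sigma_C$ for the nonempty closed convex set $C:=[-1,0]^m\subset\real^m$. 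Hence \eqref{eq:mehinge_formulation} is the LiMES model $J_\Omega$ of Definition~\ref{def:limes} with $\hilbertx:=\hilbertnewy:=\real^n$, $\hilbertnewz:=\real^m$, $\opA_1:=I_n$ ($M_1:=I_n$), $\opL:=I_m$, $\opD:=\gamma^{-1/2}I_m$, $M_2:=\opM$, and $c_2:=-1_m$. For these operators condition $(\spadesuit)$ reads $I_n-\mu\gamma^{-1}\opM^\top\opM\succeq O$, i.e.\ $\mu\lambda_{\max}(\opM^\top\opM)\leq\gamma$; the sufficiency assertion is therefore exactly Proposition~\ref{proposition:positive_definite_case_necessary_condition}(a).

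For the necessity part I would invoke Proposition~\ref{proposition:positive_definite_case_necessary_condition}(b.ii), whose only extra hypothesis is $\interior K_C\neq\emptyset$, where $K_C=\{\hat x\in\real^n\mid\opD^2\opL\opA_2\hat x\in C\}=\{\hat x\mid\gamma^{-1}(\opM\hat x-1_m)\in[-1,0]^m\}$. Under hypothesis (i), $\range\opM=\real^m$ gives $\range(\opL\opM)=\hilbertnewz$, so Lemma~\ref{lemma:intKC_condition} reduces the requirement to $\interior C\neq\emptyset$, and indeed $\interior C=(-1,0)^m\neq\emptyset$. Under hypothesis (ii), $\gamma\in(1,+\infty)$, I would test the point $\hat x=0$: since $\opA_2 0=c_2=-1_m$ we obtain $\opD^2\opL\opA_2 0=-\gamma^{-1}1_m$, and the scalar $-\gamma^{-1}$ lies in $(-1,0)$ precisely because $\gamma>1$, so $\opD^2\opL\opA_2 0\in\interior C\neq\emptyset$; this is statement (iii) of Lemma~\ref{lemma:intKC_condition}, which yields $\interior K_C\neq\emptyset$. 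In either case Proposition~\ref{proposition:positive_definite_case_necessary_condition}(b.ii) then gives $F\in\Gamma_0(\real^n)$ if and only if $(\spadesuit)$ holds, i.e.\ if and only if $\mu\lambda_{\max}(\opM^\top\opM)\leq\gamma$.

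The only point needing care is the necessity when $\range\opM\subsetneq\real^m$: the seed function $\Psi=\sigma_{[-1,0]^m}$ is a support function but not a norm (it vanishes on the nonnegative orthant), so Corollary~\ref{corollary:necessary_sufficient} does not apply and one must exhibit an interior point of $K_C$ by hand. The decisive observation is that $\opD^2\opL\opA_2 0=-\gamma^{-1}1_m$ is interior to $C=[-1,0]^m$ exactly when $\gamma>1$, which explains the threshold $\gamma\in(1,+\infty)$ in hypothesis (ii); everything else is the routine $(\spadesuit)$-computation above and a direct appeal to the already-established Proposition~\ref{proposition:positive_definite_case_necessary_condition} and Lemma~\ref{lemma:intKC_condition}.
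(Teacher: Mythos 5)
Your proposal is correct and follows essentially the same route as the paper: identify \eqref{eq:mehinge_formulation} as a LiMES instance via Proposition \ref{proposition:me_hinge_equivalence}, verify $\interior K_C\neq\emptyset$ through Lemma \ref{lemma:intKC_condition} (surjectivity of $\opM$ under (i), and the test point $\hat x=0$ giving $-\gamma^{-1}1_m\in(-1,0)^m$ under (ii)), then apply Proposition \ref{proposition:positive_definite_case_necessary_condition} and reduce ($\spadesuit$) to $\mu\lambda_{\max}(\opM^\top\opM)\leq\gamma$. The only cosmetic difference is that under (i) you pass through the ``(i)$\Leftrightarrow$(ii)'' equivalence of the lemma while the paper checks its statement (iii) directly, and your remark that Corollary \ref{corollary:necessary_sufficient} is unavailable because $\sigma_{[-1,0]^m}$ is not a norm correctly explains why this hands-on verification is needed.
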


\begin{proof}
Assume that $\range \opM = \hilbertnewz (=\real^m)$.
In this case, 
since $D$ is a positive definite operator and $\opL=I_m$,
we have $\range (D^2 \opL \opM) = \hilbertnewz$, and hence
(iii) of Lemma \ref{lemma:intKC_condition} is clearly satisfied.
Assume on the other hand that $\gamma\in(1,+\infty)$.
It then holds that $D^2 \opL\opA_2 0_n=\gamma^{-1}(\opM 0_n -1_m) 
= - \gamma^{-1} 1_m\in(-1,0)^m
=\interior C$, 
and thus
(iii) of Lemma \ref{lemma:intKC_condition} is satisfied again.
Thus, it follows that $\interior K_C\neq \emptyset$
under any of conditions (i) and (ii) of the proposition.
Hence, in light of Propositions \ref{proposition:positive_definite_case_necessary_condition}
and \ref{proposition:me_hinge_equivalence},
 the smooth part of \eqref{eq:mehinge_formulation} is convex 
if and only if
($\spadesuit$) is satisfied.
Finally,
($\spadesuit$)
$\Leftrightarrow M_1^\top M_1 - \mu \opM^\top\opL^\top D^2 \opL \opM = I_n -  \mu \gamma^{-1} \opM^\top \opM
 \succeq O \Leftrightarrow 1-\mu\gamma^{-1} \lambda_{\max}(\opM^\top
 \opM)\geq 0$.
This verifies the assertion.
\end{proof}

\section{Numerical Examples}\label{sec:numerical}

We show the efficacy of the LiMES model in two applications:
sparse modeling in the underdetermined case and
robust regression.

\subsection{Experiment A: Sparse modeling in underdetermined case}
\label{subsec:exp_sparse_regression}
We compare the performance of the PMC penalty (see Section
\ref{subsec:sparse_regression}) for sparse modeling with those of the
following penalties: 
$\ell_{1}$ (lasso) implemented by
the iterative shrinkage-thresholding algorithm (ISTA) \cite{beck09}, and GMC with 
the linear operator $B := (\alpha_{\textrm{GMC}}/\mu)^{1/2}A$ for $\alpha_{\textrm{GMC}}\in[0,1]$.
The standard linear model 
$y = Ax_{\diamond} + \varepsilon_{\star}$
is considered
with the i.i.d.~standard Gaussian input matrix 
$A \in \mathbb{R}^{m \times n}$ for $m := 64$ and $n := 128$.
Here, 
$x_{\diamond} \in \mathbb{R}^{n}$
is the sparse unknown vector with $s$ nonzero components,
and $\varepsilon_{\star} \in \mathbb{R}^{m}$ is 
the i.i.d.~zero-mean Gaussian noise vector with signal-to-noise
ratio (SNR) 20 dB and 30 dB, where $\mbox{SNR} :=
\|Ax_{\diamond}\|^{2}_{2}/\|\varepsilon_{\star}\|^{2}_{2}$.
The regularization parameter is tuned
so that all the methods share the same sparseness
as the true $x_\star$ 
with respect to the sparseness measure \cite{hoyer04}
$[n/(n-\sqrt{n})]
\left[1-\left\|x\right\|_{1}/(\sqrt{n}\left\|x\right\|_{2})\right]
\in[0,1]$.
For PMC, $\gamma := \mu/[\alpha_{\textrm{PMC}}
\lambda_{\min}^{++}(A^\top A)]$ for $\alpha_{\textrm{PMC}} \in (0,1]$
is used (see Section \ref{subsec:sparse_regression}).
The parameters $\alpha_{\textrm{GMC}}$ and $\alpha_{\textrm{PMC}}$
are tuned manually to attain the lowest system mismatch for each method.
The results are averaged over 300 trials.


Figures \ref{fig:expA_sparse_regression}(a) 
and \ref{fig:expA_sparse_regression}(b) 
show the system  mismatch $\|x_{\diamond} -
x\|_{2}^{2}/\|x_{\diamond}\|_{2}^{2}$
for different sparsity levels.
It can be seen that PMC outperforms the other methods
particularly when the sparsity level is middle, 
$s\in[18,24]$ more specifically.
Note here that the proposed approach requires no auxiliary vector
unlike GMC (see Remark \ref{remark:separability_of_rmc}).
Figure \ref{fig:expA_sparse_regression}(c) plots the average estimate of
each method over the 300 trials for SNR 20 dB with sparsity level
$s:=21$.
It can be seen that PMC estimates $x_{\diamond}$ with high accuracy,
indicating that the estimation bias is reduced successfully.

Finally, Fig.~\ref{fig:expA_instability_instance} shows
a particular instance (SNR 20 dB, $s:=21$)
to show that a direct application of 
the original MC penalty to an underdetermined system may fail.
The MC penalty is 
implemented by
ISDA in \eqref{eq:isda}.
For reference, the performances of the ordinary least square (OLS) estimate
$A^{\dagger}y\in\argmin_{x\in\real^n}\norm{Ax-y}_2^2$ 
and the ridge regression are plotted.
Due to the nonconvexity of the objective function 
involving the original MC penalty in the present underdetermined case,
the system mismatch of MC could be unacceptably large sometimes,
although it may perform better than PMC on average.
This clearly suggests the efficacy of the PMC penalty.

\begin{figure}[t!]
\centering
\begin{tabular}{cc}
\begin{minipage}{4cm}
 \subfigure[SNR 20 dB]{
\hspace*{-1.3em}  \includegraphics[height=4cm]{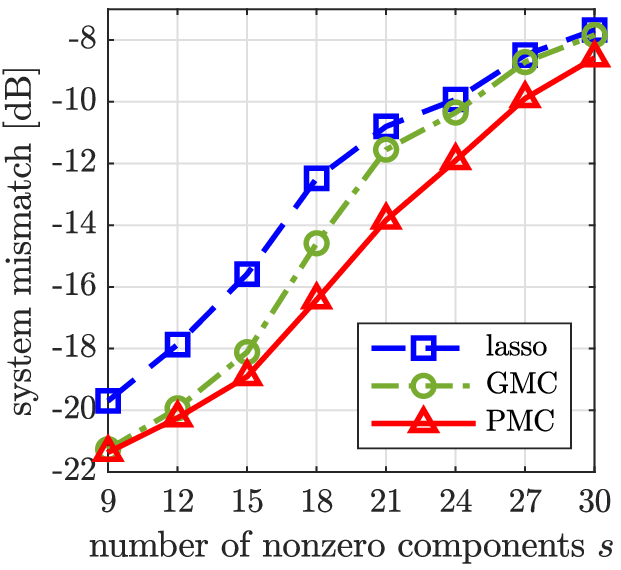}
 }\vspace*{-1em} 
\end{minipage}
 &
\begin{minipage}{4cm} 
\subfigure[SNR 30 dB]{
\hspace*{-1.3em} \includegraphics[height=4cm]{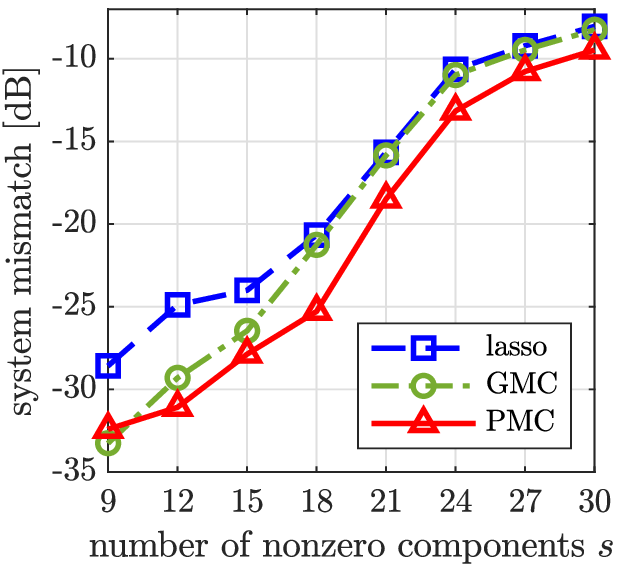}
}\vspace*{-1em}
\end{minipage}
 \\
\end{tabular}
\subfigure[average estimates for SNR $20$ dB ($s:=21$)]{
\includegraphics[height=3.5cm]{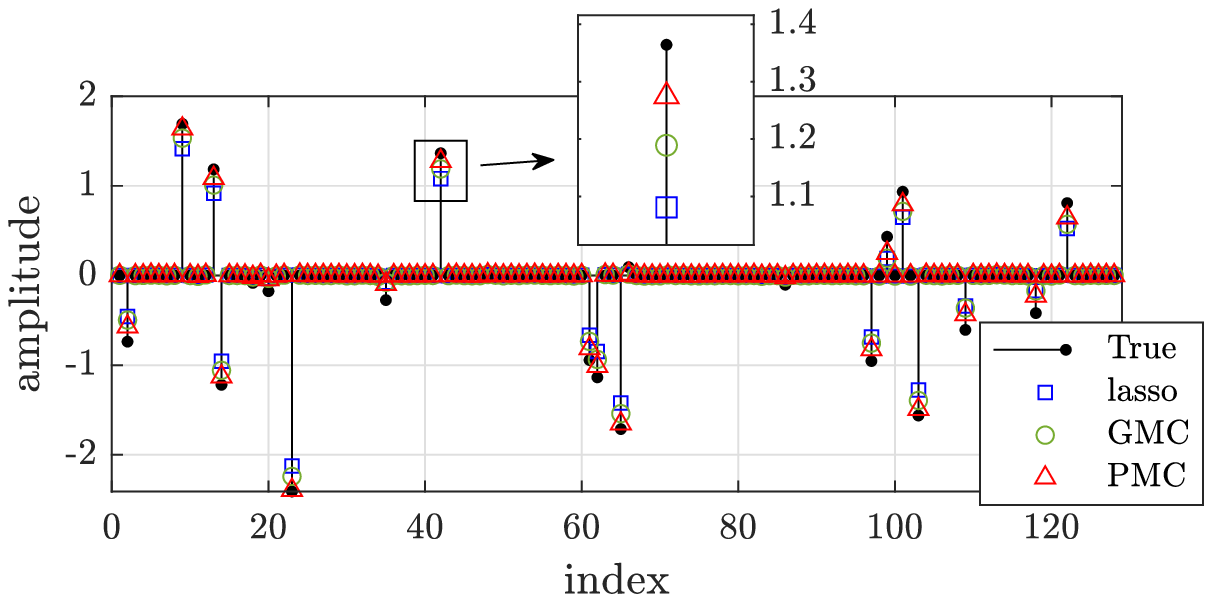}
}\vspace*{-1em}
\caption{Experiment A: Learning curves and the average estimates.}
\label{fig:expA_sparse_regression}
\end{figure}

\begin{figure}[t!]
\centering
\includegraphics[height=3.2cm]{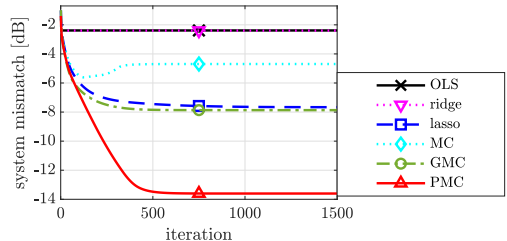}
\caption{Experiment A: A particular instance in which
a direct application of the MC penalty to an underdetermined system fails.}
\label{fig:expA_instability_instance}
\end{figure}


\subsection{Experiment B: Robust regression in the presence of outlier}

\begin{figure}[t!]
  \centering

\begin{tabular}{cc}
\begin{minipage}{4cm}
 \subfigure[SNR 10 dB,  SOR $-30$ dB]{
\hspace*{-1.3em}  \includegraphics[height=4cm]{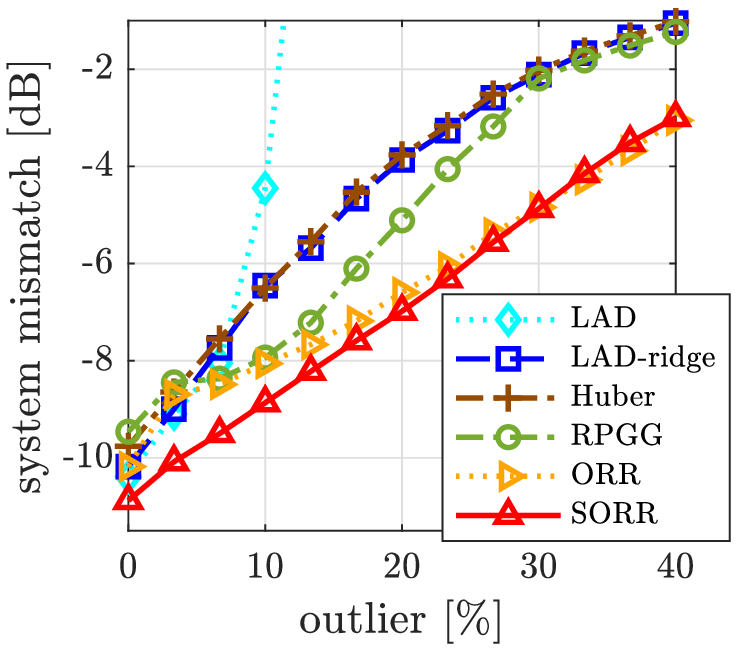} 
 }\vspace*{-1em} 
\end{minipage}
 &
\begin{minipage}{4cm} 
\subfigure[SNR 20 dB,  SOR $-40$ dB]{
\hspace*{-1.3em}  \includegraphics[height=4cm]{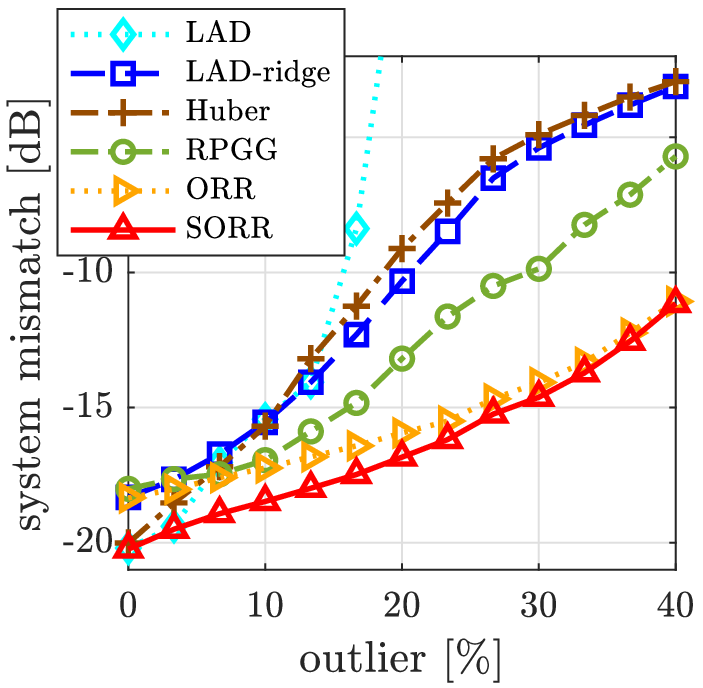} 
}\vspace*{-1em}
\end{minipage}
 \\
\end{tabular}

  \caption{Experiment B: System mismatch across outlier density.}
  \label{fig:expB_diff_outlier}
\end{figure}

\begin{figure}[t!]

\begin{tabular}{cc}
\begin{minipage}{4cm}
  \centering
 \subfigure[SNR 10 dB, outlier 15 \%]{
\hspace*{-1.5em}   \includegraphics[height=3.8cm]{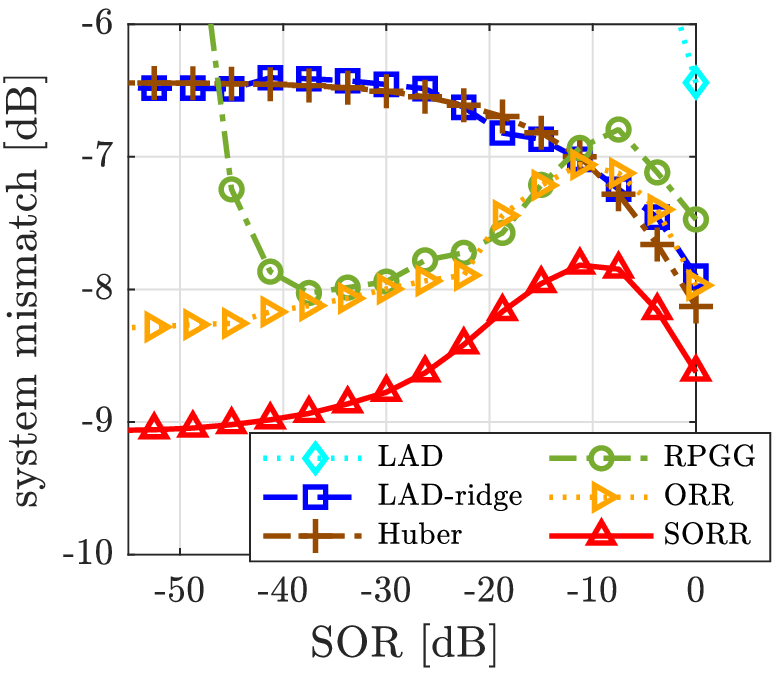} 
 }\vspace*{-1em} 
\end{minipage}
 &
\begin{minipage}{4cm} 
  \centering
\subfigure[SNR 20 dB, outlier 10 \%]{
\hspace*{-1.8em}   \includegraphics[height=3.8cm]{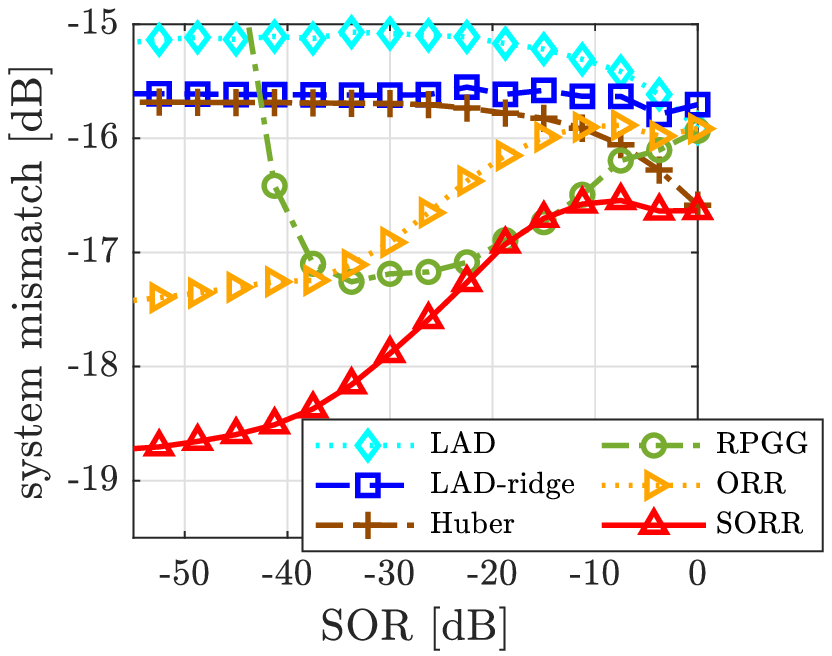} 
}\vspace*{-1em}
\end{minipage}
 \\
\end{tabular}
  \caption{Experiment B: System mismatch across SOR.}
  \label{fig:expB_diff_sor}
\end{figure}

We compare the performances of 
SORR and ORR 
(see Section \ref{subsec:robust_regression})
for robust regression with those of
LAD \cite{huber_book},
LAD-ridge ($\ell_1$-loss $+$ Tikhonov regularization),
Huber's loss 
\hspace*{-.2em}$^{\gamma}\norm{\cdot}_1$ \cite{zoubir_book18,huber_book},
and the state-of-the-art method called
the robust projected generalized gradient (RPGG) algorithm
\cite{yang19} which is based on the following formulation\footnote{
Although RPGG is a method for robust sparse recovery, 
it could be used in the present nonsparse case by letting $\mu:=0$.
We instead tune the $\mu$ to seek for its potentially better
performances.
The MC function is employed in our simulations for both data fidelity and penalty,
as in the simulations of \cite{yang19}.
}:
$\min_{x \in \mathbb{R}^n, e\in\real^m } \mu (\|\cdot \|_1)_{\gamma_1^{-1}I}
	 (x) +  (\|\cdot \|_1)_{\gamma_2^{-1}I} (e)$ 
subject to $ y = A x+e$
for $\gamma_1,\gamma_2\in(0,+\infty]$.
The sparse outlier model
$\outputb:= A x_{\star} + \varepsilon_{\star} + o_{\diamond}$
is used, where
the input matrix $A\in\real^{m\times n}$ and noise
$\varepsilon_{\star}\in\real^m$ are generated randomly with 
$m:=128$ and $n:=64$ in the same way as in Experiment A.
To show that SORR is stable under large Gaussian noise,
we consider the cases of SNR 10 dB and 20 dB.
The nonsparse vector $x_{\star}\in\real^n$ is generated
randomly from the i.i.d.~standard Gaussian distribution (i.e.,
$\sigma_{x_{\star}}^2:=1$).
The outlier vector $o_{\diamond}$ is sparse
with nonzero positions chosen randomly and with nonzero components
generated from an i.i.d.~zero-mean Gaussian distribution
with variance determined by 
the signal-to-outlier ratio (SOR)
$(\|Ax_{\star}\|_2^2/m) [\|o_{\diamond}\|_2^2/ {\rm
supp}(o_{\diamond})]^{-1}$.
Here, ${\rm supp}(x):=\abs{\{i\in\{1,2,\cdots,m\}\mid x_i\neq 0\}}$
is the support of a vector $x\in\real^m$.
For SORR,
$\sigma_x^2:=\sigma_{x_{\star}}^2$ and $\sigma_{\varepsilon}^2:=\sigma_{\varepsilon_{\star}}^2$ 
are used to show the potential performance.
For the primal-dual debiasing algorithm, the parameters are chosen as follows.
The parameters $\tau$ and $\sigma$ are set to
slightly smaller values than the upper bounds, respectively,
shown under Algorithm \ref{alg:fbf}.
We simply let $\beta_k:=1$ for all $k\in\Natural$, and
tune $\gamma$ and $\mu$ based on Proposition
\ref{proposition:orsr_convexity} by grid search
to attain the best performance.
For RPGG,
we let $\gamma_1:=+\infty$ (i.e.,
$(\|\cdot\|_1)_{\gamma_1^{-1}I}=\norm{\cdot}_1$)
as $x_{\star}$ is nonsparse,
and tune $\mu$ and $\gamma_2$ as well as the step size
by grid search.
For the other methods involving regularizers, the regularization parameters are
tuned by grid search  to attain the best performance.
For Huber's loss, $\gamma$ is chosen
to attain the best performance.
The results are averaged over 300 trials.

Figure \ref{fig:expB_diff_outlier} plots the results across
outlier density ${\rm supp}(o_{\diamond})/m$.
The proposed SORR method exhibits highly accurate and stable
performances, and it outperforms all the other methods significantly.
To be specific, the difference from ORR is notable
when the outlier density is low to middle.
It should be mentioned that LAD performed poorly due to the presence of
heavy noise as well as strong outliers.
Figure \ref{fig:expB_diff_sor} plots the results across
SOR to show the impacts of the change of the outlier power
on the performance.
Remarkably, the performances of SORR and ORR even improve
as SOR decreases below $-12$ dB.
This is because the influence of huge outliers on the MC loss 
vanishes above a certain range due to the same reason as for Tukey's
loss \cite{huber_book} and because such huge outliers will be easier to
detect at the same time.
The results clearly indicate the remarkable robustness of SORR (and ORR)
against huge outliers.
We mention that RPGG also exhibits a similar tendency over a reasonable range,
although its performance degrades for SOR below $-40$ dB.\footnote{
When SOR is small, the initial error of the outlier vector is large, and
this increases the number of iterations for the RPGG algorithm
to reach a sufficiently small
error. The step size is therefore chosen to be large so that
the algorithm converges in a comparable number of iterations to the
other methods, and this is the reason for the sharp rise of the errors
observed in Fig.~\ref{fig:expB_diff_sor}.
One may suppress it by decreasing the step size, but this then
results in slow convergence, causing an undesirable increase of complexity.
}



\section{Concluding Remarks}\label{sec:conclusion}

We presented the efficient framework based on the LiMES model.
The PMC penalty composes the Moreau envelope contained in the standard MC penalty
with the projection operator onto the input subspace,
thereby restricting the Moreau-enhancement effect
to the subspace for preserving the overall convexity
even in the underdetermined case.
SORR distinguishes Gaussian noise and sparse outlier explicitly
to attain stable performances in highly noisy situations.
The convexity conditions for those specific instances
were discussed in a unified fashion with the LiMES model.
While the LiMES function is ``nonseparable'',
the objective function involved in  the Moreau envelope is
``separable''.
This {\em mixed nature of separability and nonseparability}
allows an application of the LiMES model to the case
when the fidelity term is not strongly convex
(as in the underdetermined case of linear regression)
with an efficient implementation using the proximal gradient method.
The operators $\opL$ and $\opA_2$ play key roles in the model:
$\opL$ corresponds to the projection mentioned above and
$\opA_2$ takes care of robust regression.
The proximal debiasing algorithms to compute the LiMES model
require convexity of the smooth part of the objective function,
for which a sufficient condition was presented.
The condition was shown to be a necessary condition as well
under the nonempty-interior assumption
when the seed function is a support function.
This is the case for instance when
the seed function is a norm and
the range of $\opA_2$ contains the zero vector.
Applications of the LiMES model
to SPCP and robust classification
were also presented.
The hinge loss function widely used for robust classification 
was shown to be expressed as a composition of the support function
of a closed interval $[-1,0]$ and an affine operator.
Numerical examples showed that (i) the PMC penalty achieved debiased
sparse modeling for underdetermined systems
as well as outperforming GMC, and 
that (ii) SORR achieved stable and remarkably robust performances 
in the presence of both heavy Gaussian noise and 
sparse outlier as well as outperforming the existing robust methods
including LAD, Huber's loss, and RPGG.

The LiMES model will serve as a powerful tool to
enhance performances with respect to a variety of penalty/loss functions 
based on the solid foundation of convex analysis,
and there are plenty of opportunities to explore its further applications.
In particular, it is our future works to investigate the efficacy of 
the LiMES model in SPCP and robust classification.





\bibliographystyle{IEEEtran}
\bibliography{weaklyconvex,nmf}

\appendices


\newcounter{appnum}
\setcounter{appnum}{1}

\setcounter{theorem}{0}
\renewcommand{\thetheorem}{\Alph{appnum}.\arabic{theorem}}

\setcounter{lemma}{0}
\renewcommand{\thelemma}{\Alph{appnum}.\arabic{lemma}}
\setcounter{example}{0}
\renewcommand{\theexample}{\Alph{appnum}.\arabic{example}}
\setcounter{equation}{0}
\renewcommand{\theequation}{\Alph{appnum}.\arabic{equation}}
\setcounter{claim}{0}
\renewcommand{\theclaim}{\Alph{appnum}.\arabic{claim}}
\setcounter{remark}{0}
\renewcommand{\theremark}{\Alph{appnum}.\arabic{remark}}


\section{Proof of Proposition  \ref{proposition:rmc_convexity}}
\label{subsec:proof_pmc}

Since $c:=0$ for the debiased sparse modeling, the smooth part of
\eqref{eq:sparse_regression_mc_underdetermined} is
convex if and only
if ($\spadesuit$) is satisfied
by Corollary \ref{corollary:necessary_sufficient}.
By definition of
$\mathcal{M}:=\range \inputA^\top$, moreover,
it holds that
$P_{\mathcal{M}}A^\top =A^\top$,
from which together with 
$P_{\mathcal{M}}=P_{\mathcal{M}}\circ P_{\mathcal{M}}$
it follows that
($\spadesuit$) 
$\Leftrightarrow M_1^\top M_1 - \mu \opM^\top\opL^\top D^2 \opL \opM =
\inputA^\top \inputA-\mu\gamma^{-1}P_{\mathcal{M}}
= P_{\mathcal{M}} (\inputA^\top \inputA -\mu\gamma^{-1} I) P_{\mathcal{M}}
\succeq O\Leftrightarrow \lambda_{\min}^{++}(\inputA^\top \inputA) \geq
\mu\gamma^{-1}$.
\migip

\setcounter{appnum}{2}
\setcounter{equation}{0}

\section{Proof of Proposition  \ref{proposition:orsr_convexity}}
\label{subsec:proof_orsr}

According to the discussions in Section
\ref{subsec:robust_stable_LiMES},
\eqref{eq:stable_regression} is equivalent to \eqref{eq:stable_regression_xi}.
Since $\range \opM =\range [\inputA~I_m]=\hilbertnewz$ for SORR, 
the smooth part of \eqref{eq:stable_regression_xi} is
convex if and only
if ($\spadesuit$) is satisfied
by Corollary \ref{corollary:necessary_sufficient}.
We prove the equivalence
($\spadesuit$) $\Leftrightarrow$ \eqref{eq:mu_condition_stable_regression}
below.

For $\Sigma_\xi^{-1}=
{\rm diag} (\sigma_x^{-2} I_n,\sigma_{\varepsilon}^{-2} I_m)
$, 
it holds that
($\spadesuit$) $\Leftrightarrow M_1^\top M_1 - \mu \opM^\top\opL^\top D^2 \opL \opM =\Sigma_\xi^{-1} -
\mu\gamma^{-1}[\inputA~I_m]^\top[\inputA~I_m]\succeq O$
which can be expressed equivalently as follows:
\begin{equation}
 \left[
\begin{array}{cc}
\mu^{-1}\gamma\sigma_x^{-2}I_n - \inputA^\top\inputA &- \inputA^\top \\
 - \inputA &(\mu^{-1} \gamma\sigma_{\varepsilon}^{-2} -1)I_m \\
\end{array}
\right]\succeq O.
\label{eq:convexitycondition4stableregression}
\end{equation}
By \cite[Theorem 7.7.9]{horn_johnson13},
\eqref{eq:convexitycondition4stableregression} holds if and only if
all of the following conditions are satisfied:
\begin{enumerate}
 \item[(i)]
 $\mu^{-1}\gamma\sigma_x^{-2}I_n - \inputA^\top\inputA\succeq
       O~(\Leftrightarrow \mu \lambda_{\max}(\inputA^\top\inputA)\leq \gamma\sigma_x^{-2})$;
 \item[(ii)] 
 $(\mu^{-1} \gamma\sigma_{\varepsilon}^{-2} -1)I_m \succeq O ~(\Leftrightarrow
       \mu\leq \gamma\sigma_{\varepsilon}^{-2})$;
 \item[(iii)]  $-\inputA^\top=
(\mu^{-1}\gamma\sigma_x^{-2}I_n - \inputA^\top\inputA)^{1/2}
\Upsilon
((\mu^{-1} \gamma\sigma_{\varepsilon}^{-2} -1)I_m)^{1/2}$
for some $\Upsilon\in\real^{n\times m}$ with its largest singular value at most one.
\end{enumerate}
If $\inputA=O$, then conditions (i) and (iii) hold trivially, and condition (ii)
 coincides with \eqref{eq:mu_condition_stable_regression}.
Assume that $\inputA\neq O$ in the following.
We shall show below that
(i)--(iii) $\Leftrightarrow$ \eqref{eq:mu_condition_stable_regression}.
Suppose that conditions (i)--(iii) are satisfied.
Condition (iii) under $\inputA\neq O$ implies that $\mu^{-1} \gamma\sigma_{\varepsilon}^{-2}
 -1\neq 0$, and hence $\mu^{-1} \gamma\sigma_{\varepsilon}^{-2} -1> 0$ by condition (ii).
The equality in condition (iii) above can be rewritten as
\begin{equation}
\nu_{\varepsilon}
\inputA^\top=
(\nu_xI_n - \inputA^\top\inputA)^{1/2}
\tilde{\Upsilon},
\label{eq:condition3a}
\end{equation}
where $\nu_{\varepsilon}:=(\mu^{-1} \gamma\sigma_{\varepsilon}^{-2} -1)^{-1/2}> 0$,
$\nu_x:=\mu^{-1}\gamma\sigma_x^{-2}> 0$, and $\tilde{\Upsilon}:=-\Upsilon$.
Let $\inputA=V\Sigma U^\top$ be a singular value decomposition of $\inputA$,
where $U\in\real^{n\times n}$ and $V\in\real^{m\times m}$ are 
orthogonal  matrices, and
$\Sigma=
{\rm diag}(\varsigma_{1},\varsigma_{2},\cdots,\varsigma_{\min\{n,m\}})
 \in\real^{m\times n}$
having $ \varsigma_{1}\geq\varsigma_{2} \geq\cdots \geq
\varsigma_{\min\{n,m\}} \geq 0$ for the diagonal entries
and zeros for the off-diagonal entries.
Then, \eqref{eq:condition3a} can be rewritten as
\begin{align}
U (\nu_{\varepsilon} \Sigma^\top) V^\top =&~
U ( \nu_x I_n - \Sigma^\top\Sigma)^{1/2} U^\top 
\tilde{\Upsilon}\nonumber\\
\Leftrightarrow 
\nu_{\varepsilon} \Sigma^\top =&~
 ( \nu_x I_n - \Sigma^\top\Sigma)^{1/2} U^\top 
\tilde{\Upsilon}V.
\label{eq:condition3b}
\end{align}
Let $\tilde{\Upsilon}= -\Upsilon= U \Xi V^\top$
for some matrix $\Xi\in\real^{n\times m}$.
Then, \eqref{eq:condition3b} reads
\begin{align}
\nu_{\varepsilon} \Sigma^\top =
 ( \nu_x I_n - \Sigma^\top\Sigma)^{1/2}
\Xi.
\label{eq:condition3c}
\end{align}
Noting that $\varsigma_1>0$ due to the assumption $\inputA\neq O$,
one can verify from \eqref{eq:condition3c}
that $\Xi$ must be written in the following form:
\begin{equation}
 \Xi=
{\rm diag}(\varsigma_{1,\Upsilon}, \Xi_{2,2})
\in\real^{n\times m},
\label{eq:condition3d}
\end{equation}
of which the $(1,1)$ entry is $\varsigma_{1,\Upsilon}>0$,
the lower-right submatrix is $\Xi_{2,2}\in\real^{(n-1)\times (m-1)}$,
and the entries of the off-diagonal blocks are zeros.
By \eqref{eq:condition3c} and \eqref{eq:condition3d}, we obtain
\begin{equation}
 \nu_{\varepsilon} \varsigma_1 = (\nu_x- \varsigma_1^2)^{1/2} \varsigma_{1,\Upsilon},
\label{eq:condition3e}
\end{equation}
where $\nu_x - \varsigma_1^2>0$ as $\nu_{\varepsilon}\varsigma_1>0$.
To see that $\varsigma_{1,\Upsilon}$ is a singular value of 
$\Upsilon$ (or that of $\tilde{\Upsilon}$ equivalently),
let $\Xi_{2,2}:=V_{\Xi_{2,2}} \Sigma_{\Xi_{2,2}} U_{\Xi_{2,2}}^\top$
be a singular value decomposition of 
$\Xi_{2,2}$, where 
$V_{\Xi_{2,2}}\in\real^{(n-1)\times (n-1)}$
and 
$U_{\Xi_{2,2}}^\top\in\real^{(m-1)\times(m-1)}$ are orthogonal
 matrices,
and 
$\Sigma_{\Xi_{2,2}}
:=
{\rm diag}(\varsigma_{2,\Upsilon}, \varsigma_{3,\Upsilon}, \cdots,\varsigma_{\min\{n,m\},\Upsilon})
 \in\real^{(n-1)\times (m-1)}$
for singular values $\varsigma_{i,\Upsilon}\geq 0$
for $i\in \{2,3,\cdots,\min\{n,m\}\}$.
It then follows that
$\Xi = V_{\Xi}\Sigma_{\Xi} U_{\Xi}^\top$,
where 
$V_{\Xi}:=
{\rm diag}(1,V_{\Xi_{2,2}})
$,
$U_{\Xi}:=
{\rm diag}(1,U_{\Xi_{2,2}})
$, and
$\Sigma_{\Xi}:=
{\rm diag}(\varsigma_{1,\Upsilon},\Sigma_{\Xi_{2,2}})
$.
Thus, $\Upsilon= U_{\Upsilon} \Sigma_{\Upsilon} V_{\Upsilon}^\top$
gives a singular value decomposition of $\Upsilon$
with $U_{\Upsilon} := -U U_{\Xi}$,
$\Sigma_{\Upsilon}:= \Sigma_{\Xi} $, and
$V_{\Upsilon}:=  V V_{\Xi}$,
where $U_{\Upsilon}$ and $V_{\Upsilon}$ are
clearly orthogonal matrices.
Therefore, $\varsigma_{1,\Upsilon}$ is a singular value of $\Upsilon$,
and thus \eqref{eq:condition3e} and condition (iii) imply that 
\begin{subequations}
\begin{align}
&~\varsigma_{1,\Upsilon}^2= 
\frac{ \nu_{\varepsilon}^2 \varsigma_1^2 }{\nu_x-  \varsigma_1^2} \leq 1
\label{eq:condition3fa}
\\
\Leftrightarrow &~ 
  \varsigma_1^2 \leq 
(\mu^{-1} \gamma\sigma_{\varepsilon}^{-2} -1)
( \mu^{-1}\gamma\sigma_x^{-2} -  \varsigma_1^2).
\label{eq:condition3fb} 
\end{align}
\end{subequations}
After a simple manipulation of \eqref{eq:condition3fb} 
under conditions (i) and (ii) with $\varsigma_1^2=\lambda_{\max}(\inputA^\top\inputA)$,
we obtain \eqref{eq:mu_condition_stable_regression}.

Conversely, suppose that \eqref{eq:mu_condition_stable_regression}
holds.
Then, conditions (i) and (ii) hold immediately,
and it is therefore sufficient to inspect condition (iii).
It is clear that \eqref{eq:mu_condition_stable_regression}
implies the inequality in \eqref{eq:condition3fa}.
Since $\nu_{\varepsilon}^2 \varsigma^2/(\nu_x- \varsigma^2)$ is an increasing
function of $\varsigma^2\in[0,\nu_x)$,
\eqref{eq:condition3fa} implies that
\begin{equation}
\varsigma_{i,\Upsilon} := 
\frac{ \nu_{\varepsilon} \varsigma_i}{(\nu_x- \varsigma_i^2)^{1/2}} \in(0,1],
~\forall \varsigma_i>0.
\label{eq:condition3g} 
\end{equation}
Let $\varsigma_{i,\Upsilon} := 0$ for all $\varsigma_i=0$ if any.
Define a diagonal matrix $\Sigma_\Upsilon\in\real^{n\times m}$,
in the same way as above, with diagonal entries $\varsigma_{i,\Upsilon}$.
Redefine the matrices
$\Upsilon:=V\Sigma_\Upsilon (-U)^\top$
and $\tilde{\Upsilon}:=V\Sigma_\Upsilon U^\top$.
Then, $\Upsilon$ and $\tilde{\Upsilon}$ have
the singular values $\varsigma_{i,\Upsilon}\in[0,1]$,
$i\in\{1,2,\cdots,\min\{n,m\}\}$.
Since
$\tilde{\Upsilon}$ satisfies $\eqref{eq:condition3b}$
and thus $\eqref{eq:condition3a}$,
$\Upsilon$ satisfies the equation of condition (iii).
\migip

\setcounter{appnum}{3}
\setcounter{equation}{0}

\section{Proof of Lemma \ref{lemma:compositionA}}
\label{subsec:proof_lemma_compositionA}

Let $\mathcal{V}^{\perp}\subset\hilbertx$ denote
the orthogonal complement of $\mathcal{V}$.
Then, it follows that
\begin{align}
\hspace*{-1.5em}^\gamma(\psi\circ\mathfrak{A})(x)
= &~ \min_{u\in\hilbertx}~ \big[
\psi(\mathfrak{A} u) + 0.5\gamma^{-1} \norm{u-x}^2
\big]
\nonumber\\
= &~\min_{u\in\hilbertx} ~\big[
\psi(\genopL u+b) + 
0.5\gamma^{-1} (
\norm{P_{\mathcal{V}}u-P_{\mathcal{V}}x}^2
\nonumber\\
 &\hspace*{2.8em}+
\norm{P_{\mathcal{V}^{\perp}}u- P_{\mathcal{V}^{\perp}}x}^2)
\big]
\nonumber\\
= &~\min_{u\in\hilbertx} ~\big[
\psi(\genopL u+b) + 
0.5\gamma^{-1}
\norm{P_{\mathcal{V}}u-P_{\mathcal{V}}x}^2\big]
\nonumber\\
= &~\min_{u\in\hilbertx} ~\big[
\psi(\genopL u+b) + 
0.5\gamma^{-1}
\norm{\genopL u - \genopL x}^2\big]
\nonumber\\
= &~\min_{\newz\in\hilbertarbk} ~\big[
\psi(\newz+b) + 
0.5\gamma^{-1}
\norm{\newz- \genopL x}^2\big]
\nonumber\\
= &~\min_{\newv\in\hilbertarbk} ~\big[
\psi(\newv) + 
0.5 \gamma^{-1}
\norm{\newv - \mathfrak{A} x}^2\big]
\nonumber\\
= &~ ^{\gamma}\psi(\mathfrak{A} x).
\label{eq:phiAmoreau}
\end{align}
Here, the second equality is due to the Pythagorean theorem,
the third equality holds because 
$\psi(\genopL u + b)$ is independent of $P_{\mathcal{V}^{\perp}} u$,
 the fourth equality is due to
$\genopL^* \genopL = P_{\mathcal{V}}=P_{\mathcal{V}}^* \circ P_{\mathcal{V}}$,
and finally the fifth equality is due to
$\range \genopL = \hilbertarbk$.
By \eqref{eq:phiAmoreau}, it follows that
$ (\psi\circ \mathfrak{A})_{\gamma^{-1/2} I}(x)  =
\psi(\mathfrak{A} x ) - ~^\gamma(\psi\circ\mathfrak{A})(x)=
(\psi -\hspace*{.1em}^\gamma\psi) (\mathfrak{A} x )$, which completes
 the proof.
\migip


 \begin{biography}{Masahiro Yukawa}
received
the B.E., M.E., and Ph.D. degrees from the Tokyo
Institute of Technology in 2002, 2004, and 2006,
respectively. 
He is a Professor with the Department
of Electronics and Electrical Engineering, Keio University, Yokohama, Japan.
He is currently a Senior Area Editor of the IEEE Transactions on Signal Processing.
He served as an Associate Editor for the IEEE Transactions on Signal
 Processing from 2015 to 2019, the Springer Journal of Multidimensional
Systems and Signal Processing from 2012 to 2016, and the IEICE Transactions
on Fundamentals of Electronics, Communications and Computer Sciences from
2009 to 2013. His research interests include mathematical adaptive signal
processing, convex/sparse optimization, and machine learning. 

Dr.~Yukawa received 
the JSPS Prize in 2021,
the Young Scientists' Prize, the Commendation for Science and Technology by
the Minister of Education, Culture, Sports, Science and Technology in 2014,
the Excellent Paper Award from the IEICE in 2006, among many others.
He is a Member of the IEICE.
 \end{biography}

 \begin{biography}{Hiroyuki Kaneko}
 received the B.E. and M.E. degrees in Electronics and Electrical
  Engineering from Keio University, Yokohama, Japan, in 2020 and 2022,
  respectively.
 He is currently a Researcher with NTT Communication Science Laboratories, NTT Corporation, Kyoto, Japan.
 His research interests include sparse signal processing, convex optimization, and audio signal processing.
 \end{biography}

 \begin{biography}{Kyohei Suzuki}
  (Student Member, IEEE) received the B.E.~and M.E.~degrees in
  Electronics and Electrical Engineering from Keio University, Yokohama,
  Japan, in 2020 and 2022, respectively.
   He is currently working toward the Ph.D.~degree in Electronics and Electrical Engineering from Keio University, Yokohama, Japan.
   His research interests include mathematical signal processing, sparse optimization, and robust statistics.
 \end{biography}

 \begin{biography}{Isao Yamada}
received the B.E. degree in computer science from the University of Tsukuba, Tsukuba,
Japan, in 1985, and the M.E. and Ph.D. degrees in electrical and electronic engineering from the Tokyo Institute of Technology, Tokyo, Japan, in 1987
and 1990, respectively. He is currently a Professor with the Department of Information and Communications Engineering,
Tokyo Institute of Technology. His current research interests are in mathematical signal processing, nonlinear inverse problems, and optimization theory.
He has been the IEICE Fellow since 2015. He was the recipient of the MEXT Minister Award (Research Category), the IEEE Signal Processing Magazine Best Paper Award in 2015,
the IEICE Excellent Paper Awards (in 1991, 1995, 2006, 2009, 2014 and 2022), the IEICE Achievement Award in 2009, the ICF Research Award in 2004,
the Docomo Mobile Science Award (Fundamental Science Division) in 2005 and the Fujino Prize in 2008. 
He served as a member of the IEEE Signal Processing Society Awards Board in 2022.
 \end{biography}

\end{document}